\DeclareDocumentCommand\R{}{\mathbb{R}}
\DeclareDocumentCommand\Z{}{\mathbb{Z}}
\DeclareDocumentCommand\conv{o}{\operatorname{conv}\IfValueTF{#1}{\left(#1\right)}{}}
\DeclareDocumentCommand\orderO{o}{\ensuremath{\mathcal{O}\IfValueTF{#1}{\left(#1\right)}{}}}
\DeclareDocumentCommand\orderTheta{o}{\ensuremath{\Theta\IfValueTF{#1}{\left(#1\right)}}}
\DeclareDocumentCommand\cplxNP{}{\mathsf{NP}}
\newtheorem{theorem}{Theorem} %%%
\newtheorem{mylemma}[theorem]{Lemma}
\newtheorem{myconjecture}[theorem]{Conjecture}
\newtheorem{mycorollary}[theorem]{Corollary}
\newtheorem{myproposition}[theorem]{Proposition}
\newtheorem{mydefinition}[theorem]{Definition}
\newtheorem{myexample}[theorem]{Example}
\crefname{mylemma}{Lemma}{Lemmas}
\title{Simple odd $\beta$-cycle inequalities for binary polynomial optimization}
\author{Alberto Del Pia\footnote{Department of Industrial and Systems Engineering \& Wisconsin Institute for Discovery, University of Wisconsin-Madison, Madison, USA, \texttt{delpia@wisc.edu}} \and Matthias Walter\footnote{Department of Applied Mathematics, University of Twente, The Netherlands, \texttt{m.walter@utwente.nl}}}
\date{\small\today}
\DeclareDocumentCommand\Pml{}{\mathrm{ML}}
\DeclareDocumentCommand\Psr{}{\mathrm{SR}}
\DeclareDocumentCommand\Pfr{}{\mathrm{FR}}
\DeclareDocumentCommand\PfrOne{}{\mathrm{FR}_1}
\DeclareDocumentCommand\Pcr{}{\mathrm{CR}}
\DeclareDocumentCommand\boundSCIP{}{\mathrm{SCIP}}
\begin{document}

\maketitle

\begin{abstract}
  We consider the multilinear polytope which arises naturally in binary polynomial optimization.
  Del Pia and Di Gregorio introduced the class of odd $\beta$-cycle inequalities valid for this polytope, showed that these generally have Chv{\'a}tal rank~2 with respect to the standard relaxation and that, together with flower inequalities, they yield a perfect formulation for cycle hypergraph instances.
  Moreover, they describe a separation algorithm in case the instance is a cycle hypergraph.
  We introduce a weaker version, called simple odd $\beta$-cycle inequalities, for which we establish a strongly polynomial-time separation algorithm for arbitrary instances.
  These inequalities still have Chv{\'a}tal rank~2 in general and still suffice to describe the multilinear polytope for cycle hypergraphs.
  Finally, we report about computational results of our prototype implementation.
  The simple odd $\beta$-cycle inequalities sometimes help to close more of the integrality gap in the experiments; however, the preliminary implementation has substantial computational cost, suggesting room for improvement in the separation algorithm.
  %\keywords{Binary polynomial optimization \and Cutting planes \and Separation algorithm.}
\end{abstract}

\DeclareDocumentCommand\U{}{\mathcal{U}}

\section{Introduction}
\label{sec intro}

%\Matthias{TODO: Add/check funding information}
%\Alberto{TODO: Add/check funding information}

In \emph{binary polynomial optimization} our task is to find a binary vector that maximizes a given multivariate polynomial function.
In order to give a mathematical formulation, it is useful to use a hypergraph $G=(V,E)$, where the node set $V$ represents the variables in the polynomial function, and the edge set $E$ represents the monomials with nonzero coefficients.
In a binary polynomial optimization problem, we are then given a hypergraph $G=(V,E)$, a profit vector $p \in \R^{V \cup E}$, and our goal is to solve the optimization problem
\begin{equation}
  \label{prob_bpo}
%\tag{BPO}
  \begin{aligned}
    \text{max } \left\{ \sum_{v \in V} p_v z_v + \sum_{e \in E} p_e \prod_{v \in e} z_v  :  
 z \in \{0,1\}^V \right\}.
  \end{aligned}
\end{equation}
Using Fortet's linearization~\cite{Fortet60,GloverW74}, we introduce binary auxiliary variables $z_e$, for $e \in E$, which are linked to the variables $z_v$, for $v \in V$, via the linear inequalities 
\begin{subequations}
  \label{eq_lin}
  \begin{align}
     z_v - z_e &\geq 0 &\quad& \forall e \in E, \ \forall v \in e \label{eq_lin_1} \\
     (z_e - 1) + \sum_{v \in e} (1-z_v) &\geq 0 &\quad& \forall e \in E \label{eq_lin_2}.
\end{align}
\end{subequations}
It is simple to see that 
\begin{align*}
\left\{ z \in \{0,1\}^{V \cup E} : z_e = \prod\limits_{v \in e} z_v ~\forall e \in E \right\}
=
\left\{ z \in \{0,1\}^{V \cup E}   :   
  \eqref{eq_lin} \right\}.
  \end{align*}
Hence, we can reformulate~\eqref{prob_bpo} as the integer linear optimization problem
\begin{equation}
  \label{prob_bpo_ip}
%\tag{ILO}
  \begin{aligned}
    \text{max } \left\{ \sum_{v \in V} p_v z_v + \sum_{e \in E} p_e z_e 
     : 
      \eqref{eq_lin}, \ 
     z \in \{0,1\}^{V \cup E} \right\}.
  \end{aligned}
\end{equation}
%\Matthias{I propose to show all inequalities of $\Psr(G)$ already above and then write something like ``SR is determined by constraints (2b)--(2e)'' or so.}
We define the \emph{multilinear polytope} $\Pml(G)$ \cite{DelPiaK17}, which is the convex hull of the feasible points of \eqref{prob_bpo_ip}, and its \emph{standard relaxation} $\Psr(G)$:
%It is simple to see that
%\Matthias{If $v$ does not appear in any edge, then $z_v \geq 0$ is missing below, right?}
%\Alberto{Right! We can either }
  \begin{align*}
  \Pml(G) & := \conv \left\{ z \in \{0,1\}^{V \cup E}   :   
  \eqref{eq_lin}
   \right\}, \\
  \Psr(G) & := \left\{ z \in [0,1]^{V \cup E}   :   
  \eqref{eq_lin}
   \right\}.
  \end{align*}
%where $\conv(\cdot)$ denotes the convex hull operator.

Recently, several classes of inequalities valid for $\Pml(G)$ have been introduced, including $2$-link inequalities \cite{CramaR17}, flower inequalities \cite{DelPiaK18}, running intersection inequalities~\cite{DelPiaK21}, and odd $\beta$-cycle inequalities~\cite{DelPiaD21}.
On a theoretical level, these inequalities fully describe the multilinear polytope for several hypergraph instances: 
flower inequalities for $\gamma$-acyclic hypergraphs, running intersection inequalities for kite-free $\beta$-acyclic hypergraphs, and flower inequalities together with odd $\beta$-cycle inequalities for cycle hypergraphs.
Furthermore, these cutting planes greatly reduce the integrality gap of \eqref{prob_bpo_ip}~\cite{DelPiaK21,DelPiaD21} and their addition leads to a significant reduction of the runtime of the state-of-the-art solver BARON \cite{DelPiaKS20}.
%For other recent results on the structure of the multilinear polytope we refer the reader to \cite{DelPiaK17,dPKha18MPA}.
Unfortunately, we are not able to separate efficiently over most of these inequalities. 
In fact, while the simplest $2$-link inequalities can be trivially separated in polynomial time, there is no known polynomial-time algorithm to separate the other classes of cutting planes, and it is known that separating flower inequalities is $\cplxNP$-hard~\cite{DelPiaKS20}.

\paragraph{Contribution.}
In this paper we introduce a novel class of cutting planes called \emph{simple odd $\beta$-cycle inequalities}. 
As the name suggests, these inequalities form a subclass of the odd $\beta$-cycle inequalities introduced in~\cite{DelPiaD21}.
The main result of this paper is that simple odd $\beta$-cycle inequalities can be separated in strongly polynomial time.
While our inequalities form a subclass of the inequalities introduced in~\cite{DelPiaD21}, they still inherit the two most interesting properties of the odd $\beta$-cycle inequalities.
First, simple odd $\beta$-cycle inequalities can have Chv\'atal rank~$2$.
To the best of our knowledge, our algorithm is the first known polynomial-time separation algorithm over an exponential class of inequalities with  Chv\'atal rank~$2$. 
Second, simple odd $\beta$-cycle inequalities, together with standard linearization inequalities and flower inequalities with at most two neighbors, provide a perfect formulation of the multilinear polytope for cycle hypergraphs.
While the above results were first presented in the preliminary IPCO version of this paper \cite{dPWal22IPCO}, in this paper we also discuss redundancy and computations.
In fact, we provide computational evidence showing that our separation algorithm can lead to significant speedups in solving several applications that can be formulated as~\eqref{prob_bpo} with a hypergraph that contains $\beta$-cycles.
In particular, we present experiments on the image restoration problem in computer vision \cite{CramaR17,DelPiaD21}, and the low auto-correlation binary sequence problem in theoretical physics~\cite{Bernasconi87,MertensB98,DelPiaD21,POLIP14,MINLPLib20}.

\paragraph{Outline.}
We first introduce certain simple inequalities in \cref{sec_building_block} that are then combined to form the simple odd $\beta$-cycle inequalities in \cref{sec_inequalities}.
\cref{sec_separation} is dedicated to the polynomial-time separation algorithm.
In \cref{sec_redundancy}, we briefly address the question of redundancy, since our inequalities are formally defined for a more general structure than a $\beta$-cycle.
In \cref{sec_nonsimple}, we relate the simple odd $\beta$-cycle inequalities to the general (non-simple) odd $\beta$-cycle inequalities in~\cite{DelPiaD21}.
In \cref{sec_computations}, we discuss our implementation of the separation algorithm and we present computational results.
Finally, in \cref{sec_conclusions}, we discuss the computational results, and draw our conclusions.

\section{Building block inequalities}
\label{sec_building_block}

We consider certain affine linear functions $s : \R^{V \cup E} \to \R$ defined as follows.
For each $e \in E$ and each $v \in e$ we define
\begin{equation}
  s^{\mathrm{inc}}_{e,v}(z) \coloneqq z_v - z_e \tag{\ensuremath{s^{\mathrm{inc}}_{e,v}}} . \label{eq_block_incident}
\end{equation}
For each $e \in E$ and all $U,W \subseteq e$ with $U,W \neq \varnothing$ and $U \cap W = \varnothing$ we define
\begin{equation}
  s^{\mathrm{odd}}_{e,U,W}(z) \coloneqq 2z_e - 1 + \sum_{u \in U} (1-z_u) + \sum_{w \in W} (1-z_w) + \hspace{-3.5ex} \sum_{v \in e \setminus (U \cup W)} \hspace{-3ex} (2-2z_v) . \tag{\ensuremath{s^{\mathrm{odd}}_{e,U,W}}} \label{eq_block_odd}
\end{equation}
For all $e,f \in E$ with $e \cap f \neq \varnothing$ and all $U \subseteq e$ with $U \neq \varnothing$ and $U \cap f = \varnothing$ we define
\begin{equation}
  s^{\mathrm{one}}_{e,U,f}(z) \coloneqq 2z_e -1 + \sum_{u \in U} (1 - z_u) +(1 - z_f) + \hspace{-1em} \sum_{v \in e \setminus (U \cup f)} \hspace{-1em}  (2-2z_v) . \tag{\ensuremath{s^{\mathrm{one}}_{e,U,f}}} \label{eq_block_one}
\end{equation}
For all $e,f,g \in E$ with $e \cap f \neq \varnothing$, $e \cap g \neq \varnothing$ and $e \cap f \cap g = \varnothing$ we define
\begin{equation}
  s^{\mathrm{two}}_{e,f,g}(z) \coloneqq 2z_e -1 +(1- z_f) +(1- z_g) + \hspace{-1em} \sum_{v \in e \setminus (f \cup g)} \hspace{-1em}  (2-2z_v)  . \tag{\ensuremath{s^{\mathrm{two}}_{e,f,g}}} \label{eq_block_two}
\end{equation}

\DeclareDocumentCommand\sinc{mm}{\ensuremath{\text{\hyperref[eq_block_incident]{$s^{\mathrm{inc}}_{#1,#2}$}}}}
\DeclareDocumentCommand\sodd{mmm}{\ensuremath{\text{\hyperref[eq_block_odd]{$s^{\mathrm{odd}}_{#1,#2,#3}$}}}}
\DeclareDocumentCommand\sone{mmm}{\ensuremath{\text{\hyperref[eq_block_one]{$s^{\mathrm{one}}_{#1,#2,#3}$}}}}
\DeclareDocumentCommand\stwo{mmm}{\ensuremath{\text{\hyperref[eq_block_two]{$s^{\mathrm{two}}_{#1,#2,#3}$}}}}
In this paper we often refer to \ref{eq_block_incident}, \ref{eq_block_odd}, \ref{eq_block_one}, \ref{eq_block_two} as \emph{building blocks}.
Although in these definitions $U$ and $W$ can be arbitrary subsets of an edge $e$, in the following $U$ and $W$ will always correspond to the intersection of $e$ with another edge.
In the next lemma we will show that all building blocks are nonnegative on a relaxation of $\Pml(G)$ obtained by adding some flower inequalities~\cite{DelPiaK18} to $\Psr(G)$, which we will define now.
For ease of notation, in this paper, we denote by $[m]$ the set $\{1,\dots,m\}$, for any nonnegative integer $m$.

Let $f \in E$ and let $e_i$, $i \in [m]$, be a collection of pairwise distinct edges in $E$, adjacent to $f$, such that $f \cap e_i \cap e_j = \emptyset$ for all $i,j \in [m]$ with $i \neq j$.
Then the \emph{flower inequality} \cite{DelPiaK18,DelPiaD21} centered at $f$ with neighbors $e_i$, $i \in [m]$, is defined by
\begin{equation}
  (z_f - 1) + \sum_{i \in [m]} (1-z_{e_i}) + \sum_{v \in f \setminus \cup_{i \in [m]} e_i} (1-z_v) \geq 0. \label{eq_flower}
\end{equation}
%In the following, w
We denote by $\Pfr(G)$ the polytope obtained from $\Psr(G)$ by adding all flower inequalities with at most two neighbors. 
Clearly $\Pfr(G)$ is a relaxation of $\Pml(G)$.
Furthermore, $\Pfr(G)$ is defined by a number of inequalities that is bounded by a polynomial in $|V|$ and $|E|$.

\begin{mylemma}
  \label{thm_slack_functions}
  Let $G = (V,E)$ be a hypergraph and let $s$ be one of~\ref{eq_block_incident}, \ref{eq_block_odd}, \ref{eq_block_one}, \ref{eq_block_two}.
  Then $s(z) \geq 0$ is valid for $\Pfr(G)$.
  Furthermore, if $z \in \Pml(G) \cap \Z^{V \cup E}$ and $s(z) = 0$, then the corresponding implication below holds.
  \begin{enumerate}[label=(\roman*)]
  \item
    \label{enum_inc}
    If $\ref{eq_block_incident}(z) = 0$ then $z_v = z_e$.
  \item
    \label{enum_odd}
    If $\ref{eq_block_odd}(z) = 0$ then $\prod_{u \in U} z_u + \prod_{w \in W} z_w = 1$.
  \item
    \label{enum_one}
    If $\ref{eq_block_one}(z) = 0$ then $z_f + \prod_{u \in U} z_u = 1$.
  \item
    \label{enum_two}
    If $\ref{eq_block_two}(z) = 0$ then $z_f + z_g = 1$.
  \end{enumerate}
\end{mylemma}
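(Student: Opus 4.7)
The plan is to prove each of the four items by exhibiting an explicit decomposition of the building block as a sum of nonnegative quantities, each of which is either a defining inequality of $\Pfr(G)$ or a bound constraint from $[0,1]^{V\cup E}$. Validity then follows immediately, and the equality case is read off by forcing each summand to vanish and combining this with the multilinear identity $z_e = \prod_{v \in e} z_v$ that holds at integer points of $\Pml(G)$.

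Item (i) is immediate, since $\sinc{e}{v}(z) = z_v - z_e$ is literally \eqref{eq_lin_1}, and equality at an integer point just reads $z_v = z_e$. For item (ii), I would verify by direct algebra that
\[
\sodd{e}{U}{W}(z) \;=\; \Bigl[(z_e-1) + \sum_{v \in e}(1-z_v)\Bigr] \;+\; z_e \;+\; \sum_{v \in e \setminus (U \cup W)} (1-z_v),
\]
where the bracketed term is precisely \eqref{eq_lin_2}. For item (iii), letting $F^1_{e,f}(z)$ denote the single-neighbor flower inequality centered at $e$ with neighbor $f$, and using that $e \setminus f = U \sqcup (e \setminus (U \cup f))$ since $U \cap f = \varnothing$, one gets
\[
\sone{e}{U}{f}(z) \;=\; F^1_{e,f}(z) \;+\; z_e \;+\; \sum_{v \in e \setminus (U \cup f)}(1-z_v).
\]
For item (iv), letting $F^2_{e,f,g}(z)$ denote the two-neighbor flower centered at $e$ with neighbors $f,g$ (which is legal since the hypothesis $e \cap f \cap g = \varnothing$ is exactly the condition required on the flower), the analogous identity is
\[
\stwo{e}{f}{g}(z) \;=\; F^2_{e,f,g}(z) \;+\; z_e \;+\; \sum_{v \in e \setminus (f \cup g)}(1-z_v).
\]
In each case, every summand is valid on $\Pfr(G)$, so $s(z) \geq 0$ follows at once.

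For the equality implications, I would argue termwise. In (ii), $\sodd{e}{U}{W}(z) = 0$ forces $z_e = 0$, forces $z_v = 1$ for all $v \in e \setminus (U \cup W)$, and forces \eqref{eq_lin_2} to be tight at $e$; together with $z_e = 0$, the tight \eqref{eq_lin_2} says $\sum_{v \in e}(1 - z_v) = 1$, so exactly one vertex of $e$ has value $0$, and by the second condition this vertex lies in $U \cup W$. Since $U$ and $W$ are disjoint, exactly one of $\prod_{u \in U} z_u$, $\prod_{w \in W} z_w$ equals $0$ and the other equals $1$, giving the claim. In (iii), the same reasoning pins $z_e = 0$ and $z_v = 1$ for $v \in e \setminus (U \cup f)$, and tightness of $F^1_{e,f}$ then reduces to $(1 - z_f) + \sum_{u \in U}(1-z_u) = 1$; by integrality this means either $z_f = 0$ with all $z_u = 1$, or $z_f = 1$ with exactly one $z_u = 0$, and both cases yield $z_f + \prod_{u \in U} z_u = 1$. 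Item (iv) is analogous: after zeroing the easy terms, tightness of $F^2_{e,f,g}$ collapses to $(1 - z_f) + (1 - z_g) = 1$, i.e. $z_f + z_g = 1$.

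There is no real obstacle; the work is purely bookkeeping. The one point that deserves care is verifying, when assembling the decompositions, that the index sets partition $e$ correctly (using $U \cap f = \varnothing$ in (iii) and the pairwise structure of $e, f, g$ in (iv)), and that the flower inequalities invoked in (iii) and (iv) actually have at most two neighbors and satisfy the required disjointness of triple intersections, so that they genuinely belong to $\Pfr(G)$.
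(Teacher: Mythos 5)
Your proof is correct and follows essentially the same route as the paper: each building block is decomposed as a sum of $z_e \geq 0$, bound constraints $1 - z_v \geq 0$, and either the standard linearization inequality \eqref{eq_lin_2} or the relevant one- or two-neighbor flower inequality, and the equality implications are read off from tightness of each summand. The only cosmetic difference is that you make the decompositions and the partition bookkeeping (e.g.\ $e \setminus f = U \sqcup (e \setminus (U \cup f))$) fully explicit, which the paper leaves implicit.
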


\begin{proof}
  First, $\ref{eq_block_incident}(z) \geq 0$ is part of the standard relaxation and implication~\ref{enum_inc} is obvious.

  Second, $\sodd{e}{U}{W}(z) \geq 0$ is the sum of the following inequalities from the standard relaxation:
  $z_e \geq 0$, $1-z_v \geq 0$ for all $v \in e \setminus (U \cup W)$, and $(z_e - 1) + \sum_{v \in e} (1-z_v) \geq 0$.
  If $z \in \Pml(G) \cap \Z^{V \cup E}$ and $\sodd{e}{U}{W}(z) = 0$, then each of these inequalities must be tight, thus $z_e = 0$, $z_v = 1$ for each $v \in e \setminus (U \cup W)$.
  The last (tight) inequality yields $-1 + \sum_{v \in U \cup W} (1-z_v) = 0$, i.e., precisely one variable $z_v$, for $v \in U \cup W$, is 0, while all others are 1,
  %all but one of the involved $z_v$-variables are $1$, 
  which yields implication~\ref{enum_odd}.

  Third, $\sone{e}{U}{f}(z) \geq 0$ is the sum of the following inequalities:
  $z_e \geq 0$, $1-z_v \geq 0$ for all $v \in e \setminus (U \cup f)$ and $(z_e-1) + (1-z_f) + \sum_{v \in e \setminus f} (1-z_v) \geq 0$.
  The latter is the flower inequality centered at $e$ with neighbor $f$.
  If $z \in \Pml(G) \cap \Z^{V \cup E}$ and $\sone{e}{U}{f}(z) = 0$, then each of these inequalities must be tight, thus $z_e = 0$, $z_v = 1$ for each $v \in e \setminus (U \cup f)$.
  The last (tight) inequality yields $-1 + (1-z_f) + \sum_{u \in U} (1-z_u) = 0$, i.e., either $z_f = 1$ and $z_u = 0$ for exactly one $u \in U$, or $z_f = 0$ and $z_u = 1$ holds for all $u \in U$.
  Both cases yield implication~\ref{enum_one}.

  Fourth, we consider $\stwo{e}{f}{g}(z) \geq 0$.
  Note that due to $e \cap f \neq \varnothing$, $e \cap g \neq \varnothing$ and $e \cap f \cap g = \varnothing$, the three edges $e,f,g$ must all be different.
  Thus, $\stwo{e}{f}{g}(z) \geq 0$ is the sum of $z_e \geq 0$, $1-z_v \ge 0$ for all $v \in e \setminus (f \cup g)$ and of $(z_e - 1) + (1 - z_f) + (1-z_g) + \sum_{v \in e \setminus (f \cup g)} (1-z_v) \geq 0$.
  The latter is the flower inequality centered at $e$ with neighbors $f$ and $g$.
  If $z \in \Pml(G) \cap \Z^{V \cup E}$ and $\stwo{e}{f}{g}(z) = 0$ holds, then each of the involved inequalities must be tight, thus $z_e = 0$ and $z_v = 1$ for each $v \in e \setminus (f \cup g)$.
  The last (tight) inequality implies $-1 + (1-z_f) + (1-z_g) = 0$, i.e., $z_f + z_g = 1$.
  Hence, implication~\ref{enum_two} holds.
  \qed
\end{proof}

\section{Simple odd $\beta$-cycle inequalities}
\label{sec_inequalities}

We will consider signed edges by associating either a ``$+$'' or a ``$-$'' with each edge.
%To shorten the notation we will often write ``$+$'' and ``$-$'' instead, as well as $\{\pm\}$ for the set $\{-,+\} = \{-1,+1\}$.
We denote by $\{\pm\}$ the set $\{+,-\}$ and 
%Moreover, we will denote 
by $-p$ a sign change for $p \in \{\pm\}$.
In order to introduce simple odd $\beta$-cycle inequalities, we first present some more definitions.

\begin{mydefinition}
\label{def inequalities}
  A \emph{closed walk} in $G$ of length $k \geq 3$ is a sequence $C = v_1$-$e_1$-$v_2$-$e_2$-$v_3$-$\dotsb$-$v_{k-1}$-$e_{k-1}$-$v_k$-$e_k$-$v_1$, where we have $e_i \in E$ as well as $v_i \in e_{i-1} \cap e_i$ and $e_{i-1} \cap e_i \cap e_{i+1} = \varnothing$ for each $i \in [k]$, where we denote $e_0 \coloneqq e_k$ and $e_{k+1} \coloneqq e_1$ for convenience.
  A \emph{signature} of $C$ is a map $\sigma : [k] \to \{\pm\}$.
  A \emph{signed closed walk in $G$} is a pair $(C,\sigma)$ for a closed walk $C$ and a signature $\sigma$ of $C$.
  Similarly, we denote $v_0 \coloneqq v_k$, $v_{k+1} \coloneqq v_1$, $\sigma(0) \coloneqq \sigma(k)$ and $\sigma(k+1) \coloneqq \sigma(1)$.
  We say that $(C,\sigma)$ is \emph{odd} if there is an odd number of indices $i \in [k]$ with $\sigma(i) = -$; otherwise we say that $(C,\sigma)$ is \emph{even}.
  Finally, for any signed closed walk $(C,\sigma)$ in $G$, its \emph{length function} is the map $\ell_{(C,\sigma)} : \Pfr(G) \to \R$ defined by
  \begin{multline*}
    \ell_{(C,\sigma)}(z) \coloneqq
    \hspace{-2ex} \sum_{i \in I_{(+,+,+)}} \hspace{-2ex} \big(\sinc{e_i}{v_i}(z) + \sinc{e_i}{v_{i+1}}(z) \big)
    + \hspace{-2ex} \sum_{i \in I_{(-,-,-)}} \hspace{-2ex} \sodd{e_i}{e_i \cap e_{i-1}}{e_i \cap e_{i+1}}(z) \\
    + \hspace{-2ex} \sum_{i \in I_{(+,+,-)}} \hspace{-2ex} \sinc{e_i}{v_i}(z)
    + \hspace{-2ex} \sum_{i \in I_{(-,-,+)}} \hspace{-2ex}
    \sone{e_i}{e_i \cap e_{i-1}}{e_{i+1}}(z)
    + \hspace{-2ex} \sum_{i \in I_{(-,+,+)}} \hspace{-2ex} \sinc{e_i}{v_{i+1}}(z) \\
    + \hspace{-2ex} \sum_{i \in I_{(+,-,-)}} \hspace{-2ex} \sone{e_i}{e_i \cap e_{i+1}}{e_{i-1}}(z)
    + \hspace{-2ex} \sum_{i \in I_{(+,-,+)}} \hspace{-2ex} \stwo{e_i}{e_{i-1}}{e_{i+1}}(z),
    %\label{eq_length_function}
  \end{multline*}
  where $I_{(a,b,c)}$ is the set of edge indices $i$ for which $e_{i-1}$, $e_i$ and $e_{i+1}$ have sign pattern $(a,b,c) \in \{\pm\}^3$, i.e., $I_{(a,b,c)} \coloneqq \{ i \in [k] : \sigma(i-1) = a,~ \sigma(i) = b,~ \sigma(i+1) = c \}$.
\end{mydefinition}

We remark that the definition of $\ell_{(C,\sigma)}(z)$ is independent of where the closed walk starts and ends.
Namely, if instead of $C$ we consider 
$C' = v_{i}$-$e_{i}$
%-$v_{i+1}$-$e_{i+1}$
-$\dotsb$-$v_{k}$-$e_{k}$-$v_{1}$-$e_{1}$-$\dotsb$-$v_{i-1}$-$e_{i-1}$-$v_i$, and we define $\sigma'$ accordingly, then we have $\ell_{(C,\sigma)}(z) = \ell_{(C',\sigma')}(z)$.
Moreover, if $\sigma(i-1) = -$ or $\sigma(i) = -$, then $\ell_{(C,\sigma)}(z)$ is independent of the choice of $v_i \in e_{i-1} \cap e_i$.

By \cref{thm_slack_functions}, the length function of a signed closed walk is nonnegative.
We will show that for odd signed closed walks, the length function evaluated in each integer solution is at least $1$.
Hence, we define the \emph{simple odd $\beta$-cycle inequality} corresponding to the odd signed closed walk $(C,\sigma)$ as
\begin{equation}
  \ell_{(C,\sigma)}(z) \geq 1. \label{eq_simple_odd_beta_cycle}
\end{equation}
We first establish that this inequality is indeed valid for $\Pml(G)$.
\begin{theorem}
  \label{thm_simple_odd_beta_valid}
  Simple odd $\beta$-cycle inequalities~\eqref{eq_simple_odd_beta_cycle} are valid for $\Pml(G)$.
\end{theorem}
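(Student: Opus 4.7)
The first inequality $\ell_{(C,\sigma)}(z) \geq 0$ on $\Pfr(G)$ I would read off directly from \cref{thm_slack_functions}: by construction $\ell_{(C,\sigma)}$ is a sum of building blocks, each of which is nonnegative on $\Pfr(G) \supseteq \Pml(G)$. The plan is to promote this to $\ell_{(C,\sigma)}(z) \geq 1$ on integer points of $\Pml(G)$ by a modulo-$2$ counting argument. I assume, for contradiction, that $z \in \Pml(G) \cap \Z^{V \cup E}$ satisfies $\ell_{(C,\sigma)}(z) = 0$; then every building block appearing in $\ell_{(C,\sigma)}$ must vanish, and the implications~\ref{enum_inc}--\ref{enum_two} of \cref{thm_slack_functions} all apply.

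The key observation I want to exploit is that for each index $i$ with $\sigma(i) = -$, the block contributed at $i$ is of type~\ref{eq_block_odd}, \ref{eq_block_one}, or \ref{eq_block_two} (depending on $\sigma(i-1)$ and $\sigma(i+1)$), and hence, on writing $P_{j,j+1} \coloneqq \prod_{v \in e_j \cap e_{j+1}} z_v \in \{0,1\}$, it yields an identity of the shape $A_i + B_i = 1$. Here $A_i$ equals $P_{i-1,i}$ when $\sigma(i-1) = -$ and $z_{e_{i-1}}$ when $\sigma(i-1) = +$, and symmetrically $B_i$ is determined by $\sigma(i+1)$. Summing these $K \coloneqq |\{i \in [k] : \sigma(i) = -\}|$ identities gives a left-hand side equal to $K$, which is odd by assumption.

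I would then reorganise the left-hand side according to the $k$ consecutive-edge interfaces $(e_i, e_{i+1})$. A $(-,-)$-interface contributes $P_{i,i+1}$ twice (once via $B_i$, once via $A_{i+1}$); a $(+,-)$-interface contributes the single term $z_{e_i}$; a $(-,+)$-interface contributes the single term $z_{e_{i+1}}$; and a $(+,+)$-interface contributes nothing. Reducing modulo $2$ kills the $(-,-)$-contributions, leaving $K$ congruent to the sum of the $z_{e_j}$-values carried by the boundary $+$-edges of each maximal run of consecutive $+$-signs in $\sigma$.

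To close, I would invoke implication~\ref{enum_inc} within each such run: for two consecutive indices $j, j+1$ with $\sigma(j) = \sigma(j+1) = +$, both \sinc{e_j}{v_{j+1}} and \sinc{e_{j+1}}{v_{j+1}} appear among the summands of $\ell_{(C,\sigma)}$ and hence vanish, yielding $z_{e_j} = z_{v_{j+1}} = z_{e_{j+1}}$. Chaining through a maximal $+$-run therefore forces its two boundary $+$-edges to share a common $z_e$-value, so their contributions to the above mod-$2$ sum cancel (a singleton run is counted twice, which also cancels). This gives $K \equiv 0 \pmod 2$, contradicting the oddness of $(C,\sigma)$. I expect the main obstacle to be the tedious case analysis required to verify the correct shape of $A_i, B_i$ under each of the four patterns of the form $(?,-,?) \in \{\pm\}^3$, and to confirm that the required incidence blocks are indeed present in $\ell_{(C,\sigma)}$ whenever two consecutive indices carry sign $+$.
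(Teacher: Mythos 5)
Your proposal is correct and follows essentially the same route as the paper: assume an integral point of $\Pml(G)$ with $\ell_{(C,\sigma)}(z)=0$, force every building block to vanish, apply the implications of \cref{thm_slack_functions}, and derive a parity contradiction from the oddness of $\sigma$. The paper phrases the final step as chaining quantities that are pairwise ``equal or complementary'' around the closed walk, whereas you write it as an explicit mod-$2$ sum over consecutive-edge interfaces with cancellation along maximal $+$-runs; this is only a difference in bookkeeping, not in substance.
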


\begin{proof}
  Let $z \in \Pml(G) \cap \{0,1\}^{V \cup E}$ and assume, for the sake of contradiction, that $z$ violates inequality~\eqref{eq_simple_odd_beta_cycle} for some odd signed closed walk $(C,\sigma)$.
  Since the coefficients of $\ell_{(C,\sigma)}$ are integer, we obtain $\ell_{(C,\sigma)} \leq 0$.
  From \cref{thm_slack_functions}, we have that $s(z) = 0$ holds for all involved functions $s(z)$.
  Moreover, edge variables $z_{e_i}$ for all edges $e_i$ with $\sigma(i) = +$, node variables $z_{v_i}$ for all nodes $v_i$ with $\sigma(i-1) = \sigma(i) = +$, and the expressions $\prod_{v \in e_{i-1} \cap e_i} z_v$ for all nodes $i$ with $\sigma(i-1) = \sigma(i) = -$ are either equal or complementary, where the latter happens if and only if the corresponding edge $e_i$ satisfies $\sigma(i) = -1$.
  Since the signed closed walk $C$ is odd, this yields a contradiction $z_e = 1-z_e$ for some edge $e$ of $C$ or $z_v = 1-z_v$ for some node $v$ of $C$ or $\prod_{v \in e \cap f} z_v = 1 - \prod_{v \in e \cap f} z_v$ for a pair $e,f$ of subsequent edges of $C$.
  \qed
\end{proof}

% ==== IN CASE WE SKIP THE EXAMPLE ====
%\todo[inline]{Mention example in the arxiv version.}
%
%\todo{Please check next paragraph.}
%When the function $\ell_{(C,\sigma)}(z)$ is written explicitly, the coefficients exhibit a certain pattern which is formalized in the next lemma.
%The proof of the lemma can be obtained directly from the definition of $\ell_{(C,\sigma)}(z)$ by summing up each variable that appears in more than one building block.
% ====================================

Next, we provide an example of a simple odd $\beta$-cycle inequality.

\begin{figure}[ht]
\centering
\includegraphics[width=.5\textwidth]{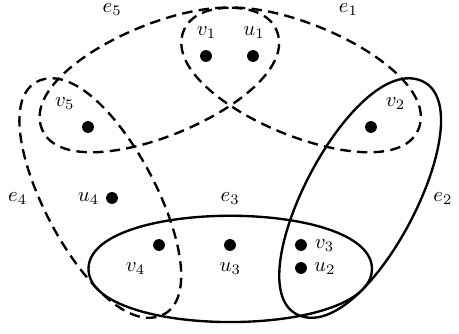}
\caption{Figure of the closed walk considered in Example~\ref{ex}. The solid edges have sign $+$ and the dashed edges have sign $-$.}
\label{fig_example}
\end{figure}

\begin{myexample}
\label{ex}
We consider the closed walk of length $5$ given by the sequence $C = v_1$-$e_1$-$v_2$-$e_2$-$v_3$-$\dotsb$-$v_{5}$-$e_{5}$-$v_1$ with signature 
%$\sigma(1) = -$, $\sigma(2) = +$, $\sigma(3) = +$, $\sigma(4) = -$, $\sigma(5) = -$,  
$(\sigma(1), \sigma(2), \dotsc, \sigma(5)) = (-, +, +, -, -)$
depicted in \cref{fig_example}.
We have $1 \in I_{(-,-,+)}$, $2 \in I_{(-,+,+)}$, $3 \in I_{(+,+,-)}$, $4 \in I_{(+,-,-)}$, $5 \in I_{(-,-,-)}$.
The corresponding simple odd $\beta$-cycle inequality is $\ell_{(C,\sigma)}(z) \ge 1$.
Using Definition~\ref{def inequalities}, we write $\ell_{(C,\sigma)}(z)$ in terms of the building blocks as
  \begin{align*}
    \ell_{(C,\sigma)}(z) = \
    & \sone{e_1}{e_1 \cap e_{5}}{e_{2}}(z)
    + \sinc{e_2}{v_3}(z)
    + \sinc{e_3}{v_3}(z)
    + \sone{e_4}{e_4 \cap e_{5}}{e_{3}}(z)
    + \sodd{e_5}{e_5 \cap e_{4}}{e_5 \cap e_{1}}(z).
  \end{align*}
  Using the definition of the building blocks, we obtain 
    \begin{align*}
    \ell_{(C,\sigma)}(z) = \ 
    & + 2z_{e_1} -1 + \sum_{u \in {e_1 \cap e_{5}}} (1 - z_u) +(1 - z_{e_{2}}) + \hspace{-6mm}\sum_{v \in {e_1} \setminus ({e_1 \cap e_{5}} \cup {e_{2}})}\hspace{-6mm} (2-2z_v) \\
    & + (z_{v_{3}} - z_{e_{2}} ) + (z_{v_{{3}}} - z_{e_3}) \\
    & + 2z_{e_4} -1 + \sum_{u \in {e_4 \cap e_{5}}} (1 - z_u) +(1 - z_{e_{3}}) + \hspace{-6mm}\sum_{v \in {e_4} \setminus ({e_4 \cap e_{5}} \cup {e_{3}})}\hspace{-6mm} (2-2z_v) \\
    & + 2z_{e_5} - 1 + \sum_{u \in {e_5 \cap e_{4}}} (1-z_u) + \sum_{w \in {e_5 \cap e_{1}}} (1-z_w) + \hspace{-8mm}\sum_{v \in {e_5} \setminus ({e_5 \cap e_{4}} \cup ({e_5 \cap e_{1}}))}\hspace{-8mm} (2-2z_v).
  \end{align*}
We write the sums explicitly and obtain
  \begin{align*}
    \ell_{(C,\sigma)}(z) & = 
    + 2z_{e_1} -1 + (1 - z_{v_1}) + (1 - z_{u_1}) +(1 - z_{e_{2}}) \\
    & \hspace{.4cm} + (z_{v_{3}} - z_{e_{2}} ) + (z_{v_{{3}}} - z_{e_3}) \\
    & \hspace{.4cm} + 2z_{e_4} -1 + (1 - z_{v_5}) +(1 - z_{e_{3}}) +  (2-2z_{u_4}) \\
    & \hspace{.4cm} + 2z_{e_5} - 1 + (1-z_{v_5}) + (1-z_{v_1}) + (1-z_{u_1}) \\
    & = 2(z_{e_1}  - z_{e_{2}} - z_{e_3} + z_{e_4} + z_{e_5} 
    - z_{v_1} - z_{u_1} + z_{v_{{3}}} - z_{u_4} - z_{v_5}) + 7. \hspace{5mm} \diamond
  \end{align*}

%   We can then simplify the expression and obtain
%   \begin{align*} 
%   \ell_{(C,\sigma)}(z) = \ &
%   2z_{e_3} + 2z_{e_4} + 2z_{e_5} + 2z_{e_6} + 2z_{e_8} 
%   - 2z_{e_1} - 2z_{e_2} - 2z_{e_7} - 2z_{e_9} - 2z_{e_{10}} 
%   + 2 z_{v_1} + 2 z_{v_{2}} + 2 z_{v_{10}} \\
%   & + \sum_{u \in {e_3 \cap e_{4}}} 2 (1 - z_u)
%   + \sum_{u \in {e_4 \cap e_{5}}} 2 (1 - z_u)
%   + \sum_{u \in {e_5 \cap e_{6}}} 2 (1 - z_u) \\
%   & + \sum_{v \in {e_3} \setminus ({e_3 \cap e_{4}} \cup {e_{2}})} (2-2z_v)
%   + \sum_{v \in {e_4} \setminus ({e_4 \cap e_{3}} \cup {e_4 \cap e_{5}})} (2-2z_v)
%   + \sum_{v \in {e_5} \setminus ({e_5 \cap e_{4}} \cup {e_5 \cap e_{6}})} (2-2z_v) \\
%   & + \sum_{v \in {e_6} \setminus ({e_6 \cap e_{5}} \cup {e_{7}})} (2-2z_v)
%   + \sum_{v \in {e_8} \setminus ({e_{7}} \cup {e_{9}})} (2-2z_v) - 5.
%   \end{align*}
\end{myexample}

Example~\ref{ex} suggests that, when the function is written explicitly, the coefficients in the function $\ell_{(C,\sigma)}(z)$ exhibit a certain pattern.
This different expression of $\ell_{(C,\sigma)}(z)$ is formalized in the next lemma.

\begin{mylemma}
  \label{thm_length_combined}
  Given a signed closed walk $(C,\sigma)$ in $G$ with $k \geq 3$, we have
  \begin{multline}
    \ell_{(C,\sigma)}(z)
    = \sum_{\substack{i \in [k] \\ \sigma(i) = -}} \hspace{-1ex} (2{z}_{e_i} + 1)
    - \hspace{-1ex} \sum_{\substack{i \in [k] \\ \sigma(i) = +}} \hspace{-1ex} 2{z}_{e_i}
    + \hspace{-3ex} \sum_{\substack{i \in [k] \\ \sigma(i-1) = \sigma(i) = +}} \hspace{-3ex} 2{z}_{v_i}
    + \hspace{-3ex} \sum_{\substack{i \in [k] \\ \sigma(i-1) = \sigma(i) = - \\ v \in e_{i-1} \cap e_i}} \hspace{-3ex} 2 (1 - {z}_v) \\
    + \hspace{-2ex} \sum_{\substack{i \in [k] : \sigma(i) = - \\ v \in e_i \setminus  (e_{i-1} \cup e_{i+1}) }} \hspace{-4ex} 2 (1 - {z}_v)
    -2 |\{ i \in [k] : \sigma(i-1) = \sigma(i) = - \}|.
    \label{eq_length_combined}
  \end{multline}
\end{mylemma}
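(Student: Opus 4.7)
The plan is to expand every building block occurring in the definition of $\ell_{(C,\sigma)}(z)$ and then, by bookkeeping, verify that the accumulated coefficient of each variable (and the accumulated constant) equals the corresponding quantity on the right-hand side of~\eqref{eq_length_combined}. I would organize this accounting by the sign pattern $(\sigma(i-1),\sigma(i),\sigma(i+1))\in\{\pm\}^3$ at each index $i$, noting up front that the seven index sets $I_{(a,b,c)}$ of Definition~\ref{def inequalities} cover every pattern except $(-,+,-)$, so an index with that pattern contributes nothing to $\ell_{(C,\sigma)}(z)$. I would also use from the start that the closed-walk condition $e_{i-1}\cap e_i\cap e_{i+1}=\varnothing$ makes $e_i\cap e_{i-1}$, $e_i\cap e_{i+1}$ and $e_i\setminus(e_{i-1}\cup e_{i+1})$ a disjoint partition of $e_i$.

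For the edge variable $z_{e_j}$ I would separate the two signs at $j$. When $\sigma(j)=-$, the variable occurs only in the single block centered at $e_j$ (neighboring blocks at $e_{j\pm 1}$ can reference $e_j$ only through a $(1-z_{e_j})$ term inside a \ref{eq_block_one} or \ref{eq_block_two} expression, and inspection shows this requires $\sigma(j)=+$), and there it picks up coefficient $+2$. When $\sigma(j)=+$, a short case split on $(\sigma(j-1),\sigma(j+1))$ shows that $z_{e_j}$ always accumulates total coefficient $-2$: each incidence block centered at $e_j$ contributes $-1$ to $z_{e_j}$, and each neighboring $-$-signed block on either side contributes an additional $-z_{e_j}$ through its $(1-z_{e_j})$ term.

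For node variables I would split into inner nodes $v\in e_i\setminus(e_{i-1}\cup e_{i+1})$ and connector nodes $v\in e_{i-1}\cap e_i$. Inner nodes occur only in the block centered at $e_i$, and a direct inspection of \ref{eq_block_odd}, \ref{eq_block_one} and \ref{eq_block_two} shows that they contribute exactly $2(1-z_v)$ when $\sigma(i)=-$ and nothing when $\sigma(i)=+$, matching the fifth sum in~\eqref{eq_length_combined}. Connector nodes $v\in e_{i-1}\cap e_i$ may in principle appear in the blocks at either $e_{i-1}$ or $e_i$; a case analysis through the seven block types reveals that this happens only when $\sigma(i-1)=\sigma(i)$. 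In the $(-,-)$ case, each $v\in e_{i-1}\cap e_i$ receives $1-z_v$ from each adjacent block (via the $(1-z_u)$ or $(1-z_w)$ terms of \ref{eq_block_odd} or \ref{eq_block_one}), for a total of $2(1-z_v)$ matching the fourth sum. In the $(+,+)$ case only the specific walk node $v_i$ appears, once in $\sinc{e_{i-1}}{v_i}$ and once in $\sinc{e_i}{v_i}$, contributing $2z_{v_i}$ as in the third sum.

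It remains to collect constant terms. Summing $-1$ over all $i$ with $\sigma(i)=-$ and adding the $+1$ constants coming from the $(1-z_{e_{i\pm 1}})$ terms whenever a neighbor has the opposite sign yields the constants on the right-hand side once one applies the elementary identity $|\{i:\sigma(i)=-\}|=|\{i:\sigma(i-1)=\sigma(i)=-\}|+|\{i:\sigma(i-1)=+,\sigma(i)=-\}|$, together with the observation that along a cyclic sign sequence the number of $+\to-$ transitions equals the number of $-\to+$ transitions. The main obstacle is purely organizational: there are seven block types, eight sign patterns and three classes of variables to track, so the proof is a methodical but lengthy case analysis with no single difficult step.
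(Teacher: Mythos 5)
The paper states this lemma without proof, treating the identity as a routine expansion of the building blocks, and your proposal supplies exactly that verification correctly: the coefficient bookkeeping for edge variables, inner nodes and connector nodes checks out, as does the constant-term computation via the identity that $+\to-$ and $-\to+$ transitions are equinumerous along a cyclic sign sequence. One small presentational caution: since a closed walk may repeat edges and nodes, the accounting should be read per index $i \in [k]$ (matching the block(s) of index $i$ against the right-hand-side terms attributed to indices $i-1$, $i$, $i+1$) rather than per variable, but this does not affect the validity of any step.
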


Using \cref{thm_length_combined}, we obtain the following result.

\begin{myproposition}
  \label{thm_simple_odd_beta_cg}
  Simple odd $\beta$-cycle inequalities are Chv{\'a}tal-Gomory inequalities for $\Pfr(G)$ and can be written in the form
  \begin{align}
  \label{eq_simple_odd_beta_cg}
  \begin{split}
    \sum_{\substack{i \in [k] \\ \sigma(i) = -}} \hspace{-1ex} z_{e_i}
    - \hspace{-1ex} \sum_{\substack{i \in [k] \\ \sigma(i) = +}} \hspace{-1ex} z_{e_i}
    + \hspace{-3ex} \sum_{\substack{i \in [k] \\ \sigma(i-1) = \sigma(i) = +}} \hspace{-3ex} z_{v_i}
    - \hspace{-3ex} \sum_{\substack{i \in [k] \\ \sigma(i-1) = \sigma(i) = - \\ v \in e_{i-1} \cap e_i}} \hspace{-3ex} (z_v - 1)
    - \hspace{-2ex} \sum_{\substack{i \in [k] : \sigma(i) = - \\ v \in e_i \setminus  (e_{i-1} \cup e_{i+1}) }} \hspace{-4ex} (z_v - 1) \\
    \geq  \frac{1 - |\{ i \in [k] : \sigma(i) = - \}|}{2}  - |\{ i \in [k] : \sigma(i-1) = \sigma(i) = - \}|.
    \end{split}
  \end{align}
\end{myproposition}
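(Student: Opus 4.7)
The plan is to obtain~\cref{eq_simple_odd_beta_cg} as a Chv\'atal-Gomory cut of $\Pfr(G)$. To this end, I would first observe that $\ell_{(C,\sigma)}(z) \geq 0$ is valid on $\Pfr(G)$: by the very definition of $\ell_{(C,\sigma)}$, it is a sum of building blocks, each of which is nonnegative on $\Pfr(G)$ by \cref{thm_slack_functions}.

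I would then combine this fact with the closed-form expression of \cref{thm_length_combined}. The crucial observation is that in that expression every variable-bearing summand carries coefficient $\pm 2$, and the only ``odd'' contributions are the constant $+1$ added once per negative edge (totalling $n_- \coloneqq |\{ i : \sigma(i) = - \}|$) together with the final correction $-2 m_{--}$, where $m_{--} \coloneqq |\{ i : \sigma(i-1) = \sigma(i) = - \}|$. Writing $1-z_v$ as $-(z_v-1)$, the identity regroups to $\ell_{(C,\sigma)}(z) = 2\, L(z) + n_- - 2 m_{--}$, with $L(z)$ being exactly the left-hand side of~\cref{eq_simple_odd_beta_cg}. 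Plugging this into $\ell_{(C,\sigma)}(z) \geq 0$ and dividing by $2$ yields the valid (non-integer) inequality $L(z) \geq m_{--} - n_-/2$; since every $z$-coefficient of $L$ equals $\pm 1$ and its constant part is an integer, $L$ takes integer values on $\{0,1\}^{V \cup E}$, so Chv\'atal-Gomory rounding is legal and produces $L(z) \geq \lceil m_{--} - n_-/2 \rceil$. Because $(C,\sigma)$ is odd, $n_-$ is odd and hence $\lceil -n_-/2 \rceil = (1-n_-)/2$, which yields the right-hand side of~\cref{eq_simple_odd_beta_cg} and simultaneously certifies its Chv\'atal-Gomory provenance.

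The main technical burden is the second step: matching \cref{thm_length_combined} to the form $2\, L(z) + n_- - 2 m_{--}$ cleanly, in particular tracking the $(1 - z_v)$ summands through the substitution $1 - z_v = -(z_v-1)$ and confirming that no other odd-parity constants are hidden in the lemma's expression. Once this algebraic regrouping is in hand, the CG rounding step and the parity simplification for odd $n_-$ are one-line routine verifications.
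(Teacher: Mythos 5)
Your proof is correct and follows the paper's argument essentially verbatim: nonnegativity of $\ell_{(C,\sigma)}$ on $\Pfr(G)$ from \cref{thm_slack_functions}, the observation via \cref{thm_length_combined} that all variable coefficients are even while the constant term is odd, division by $2$, and Chv{\'a}tal-Gomory rounding using the oddness of your $n_-$. One remark: your (correct) algebra yields the right-hand side $\tfrac{1-n_-}{2} + m_{--}$, whereas \eqref{eq_simple_odd_beta_cg} as printed has $-m_{--}$; checking against Example~\ref{ex} (where $n_- = 3$, $m_{--} = 2$ and the inequality reduces to ``left-hand side $\geq 1$'', not $\geq -3$) confirms that $+m_{--}$ is the correct sign, so the discrepancy you glossed over when asserting the match is a typo in the displayed statement rather than a flaw in your derivation.
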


\begin{proof}
  Let $(C,\sigma)$ be an odd signed closed walk in a hypergraph $G$.
  From \cref{thm_slack_functions} we obtain that $\ell_{(C,\sigma)}(z) \geq 0$ holds for each $z \in \Pfr(G)$.
  \cref{thm_length_combined} reveals that in the inequality $\ell_{(C,\sigma)}(z) \geq 0$, all variables' coefficients are even integers, while the constant term is an odd integer.
  Hence, the inequality divided by $2$ has integral variable coefficients, and we can obtain the corresponding Chv{\'a}tal-Gomory inequality by rounding the constant term up.
  The resulting inequality is the simple odd $\beta$-cycle inequality~\eqref{eq_simple_odd_beta_cycle} scaled by $1/2$ and has the form~\eqref{eq_simple_odd_beta_cg}.
  This shows that simple odd $\beta$-cycle inequalities are Chv{\'a}tal-Gomory inequalities for $\Pfr(G)$.
  %and constitutes an alternative proof of \cref{thm_simple_odd_beta_valid}.
  \qed
\end{proof}

It follows from \cref{thm_simple_odd_beta_cg} that, under some conditions on $(C,\sigma)$, simple odd $\beta$-cycle inequalities are in fact $\{0,1/2\}$-cuts (see~\cite{CapraraF96}) with respect to $\Pfr(G)$.
Some classes of such cutting planes can be separated in polynomial time, in particular if the involved inequalities only have two odd coefficients.
In such a case, these inequalities are patched together such that odd coefficients cancel out and eventually all coefficients are even.
We want to emphasize that this generic separation approach does not work in our case since our building block inequalities may have more than 2 odd-degree coefficients.
Nevertheless, the separation algorithm presented in the next section is closely related to the idea of cancellation of odd-degree coefficients.

\section{Separation algorithm}
\label{sec_separation}

The main goal of this section is to show that the separation problem over simple odd $\beta$-cycle inequalities can be solved in strongly polynomial time (\cref{thm_simple_odd_beta_separation}).
This will be achieved by means of an auxiliary undirected graph in which several shortest-path computations must be carried out.
The auxiliary graph is inspired by the one for the separation problem of odd-cycle inequalities for the maximum cut problem~\cite{BarahonaM86}.
However, to deal with our different problem and the more general hypergraphs we will extend it significantly.

\DeclareDocumentCommand\triples{}{\ensuremath{\mathcal{T}}}

Let $G = (V,E)$ be a hypergraph and let $\hat{z} \in \Pfr(G)$.
Define $\triples{} \coloneqq \{ (e,f,g) \in E : e \cap f \neq \varnothing,~ f \cap g \neq \varnothing,~ e \cap f \cap g = \varnothing \}$ to be the set of potential subsequent edge triples.
We define the auxiliary graph
\[
  \bar{G} = (\bar{V}, \bar{E}) = (\bar{V}_+ \cup \bar{V}_- \cup \bar{V}_{\mathrm{E}}, \bar{E}^{-,-,-} \cup \bar{E}^{+,-,+} \cup \bar{E}^{+,-,-}\cup \bar{E}^{+,+,\pm})
\]
and length function $\bar{\ell} : \bar{E} \to \R$ as follows.
\begin{align*}
  \bar{V}_{+} \coloneqq
    &\; V \times \{ \pm \} \\
  \bar{V}_{-} \coloneqq
    &\; \{ e \cap f  : e,f \in E,~ e \neq f,~ e \cap f \neq \varnothing \} \times \{\pm\} \\
  \bar{V}_{\mathrm{E}} \coloneqq
    &\; E \times \{\pm\} \\
   \bar{E}^{-,-,-} \coloneqq
     &\;\{ \{(e \cap f,p),(f \cap g,-p)\} : (e,f,g) \in \triples{},~ p \in \{\pm\} \} \\
  \bar{\ell}_{\{(U,p),(W,-p)\}}  \coloneqq
    &\;\min\limits_{e,f,g} \{ \sodd{f}{U}{W}(\hat{z}) : U = e \cap f,~ W = f \cap g \text{ for some } (e,f,g) \in \triples{} \} \\
  \bar{E}^{+,-,+} \coloneqq
    &\;\{ \{(e,p),(g,-p)\} : e,g \in E, e \cap f \neq \varnothing \text{ and } f \cap g \neq \varnothing  \\
    &\; \text{ for some } f \in E \text{ with $e \cap f \cap g = \varnothing$}, \ p \in \{\pm\} \} \\
  \bar{\ell}_{\{(e,p),(g,-p)\}} \coloneqq
    &\;\min\limits_f \{ \stwo{e}{f}{g}(\hat{z}) : f \in E,~ e \cap f \neq \varnothing, ~ f \cap g \neq \varnothing,~ e \cap f \cap g = \varnothing \} \\
  \bar{E}^{+,-,-} \coloneqq
    &\;\{ \{(e,p),(f \cap g,-p)\} : (e,f,g) \in \triples{},~ p \in \{\pm\} \} \\
  \bar{\ell}_{\{(e,p),(U,-p)\}} \coloneqq
    &\;\min\limits_{f,g} \{ \sone{f}{U}{e}(\hat{z}) : (e,f,g) \in \triples{},~ U = f \cap g \} \\
  \bar{E}^{+,+,\pm} \coloneqq
    &\;\{ \{(v,p),(e,p)\} : v \in e \in E, \ p \in \{\pm\} \} \\
  \bar{\ell}_{\{(v,p),(e,p)\}} \coloneqq
    &\;\ref{eq_block_incident}(\hat{z})
\end{align*}

We point out that the graph $\bar G$ can have parallel edges, possibly with different lengths.
We immediately obtain the following corollary from \cref{thm_slack_functions}.

\begin{mycorollary}
The edge lengths $\bar{\ell} : \bar{E} \to \R$ are nonnegative.
\end{mycorollary}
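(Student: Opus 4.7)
The plan is a direct unwinding of definitions combined with \cref{thm_slack_functions}. Each edge $\bar e \in \bar E$ falls into exactly one of the four classes $\bar E^{-,-,-}$, $\bar E^{+,-,+}$, $\bar E^{+,-,-}$, $\bar E^{+,+,\pm}$, and in every case the length $\bar\ell_{\bar e}$ is defined either as $s(\hat z)$ for some building block $s$ of type \ref{eq_block_incident}, \ref{eq_block_odd}, \ref{eq_block_one}, or \ref{eq_block_two}, or as a minimum of such values over a non-empty index set specified by the edge-class definition.

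So the proof would proceed as follows. First I would observe that $\hat z \in \Pfr(G)$ by hypothesis. Next I would invoke \cref{thm_slack_functions}, which states that $s(z) \geq 0$ is valid for $\Pfr(G)$ for each of the four building block families. Consequently $s(\hat z) \geq 0$ holds for every building block $s$ appearing anywhere in the definition of $\bar\ell$. I would then treat the four edge classes one by one: for $\bar E^{+,+,\pm}$ the length is literally $\sinc{e}{v}(\hat z) \geq 0$; for $\bar E^{+,-,+}$, $\bar E^{+,-,-}$, and $\bar E^{-,-,-}$ the length is the minimum of a non-empty collection of values $\stwo{e}{f}{g}(\hat z)$, $\sone{f}{U}{e}(\hat z)$, and $\sodd{f}{U}{W}(\hat z)$, respectively, each of which is nonnegative, so the minimum is nonnegative as well.

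The only thing one has to check, and which is essentially the \emph{obstacle} (such as it is), is that the minimum is taken over a non-empty index set, so that $\bar\ell_{\bar e}$ is well-defined and finite. This is immediate from the definitions of the edge classes: for instance, an edge $\{(U,p),(W,-p)\} \in \bar E^{-,-,-}$ is included only if there exists $(e,f,g) \in \triples$ with $U = e \cap f$ and $W = f \cap g$, so the minimum in the definition of $\bar\ell_{\{(U,p),(W,-p)\}}$ is over a non-empty set; the arguments for the other two minimum-defined classes are analogous. Taken together, these observations give $\bar\ell_{\bar e} \geq 0$ for every $\bar e \in \bar E$, which is the claim.
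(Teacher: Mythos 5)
Your argument is correct and is exactly the one the paper intends: the paper gives no written proof, stating only that the corollary follows immediately from \cref{thm_slack_functions}, and your case analysis over the four edge classes (each length being a building-block value at $\hat z \in \Pfr(G)$ or a minimum of such values over a non-empty set) is precisely that immediate deduction spelled out.
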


We say that two nodes $\bar{u}, \bar{v} \in \bar{V}$ are \emph{twins} if they only differ in the second component, i.e., the sign.
We call a walk $\bar{W}$ in the graph $\bar{G}$ a \emph{twin walk} if its end nodes are twin nodes.
For a walk $\bar{W}$ in $\bar G$, we denote by $\bar{\ell}(\bar{W})$ the total \emph{length}, i.e., the sum of the edge lengths $\bar{\ell}_e$ along the edges $e$ in $\bar{W}$.
In the next two lemmas we study the relationship between odd signed closed walks in $G$ and twin walks in $\bar{G}$.

\begin{mylemma}
  \label{thm_hyperwalk_to_walk}
  For each odd signed closed walk $(C,\sigma)$ in $G$ there exists a twin walk $\bar{W}$ in $\bar{G}$ of length $\bar{\ell}(\bar{W}) \leq 1 + s$, where $s$ is the slack of the simple odd $\beta$-cycle inequality~\eqref{eq_simple_odd_beta_cycle} induced by $(C,\sigma)$ with respect to $\hat{z}$.
  In particular, if the inequality is violated by $\hat{z}$, then we have $\bar{\ell}(\bar{W}) < 1$.
\end{mylemma}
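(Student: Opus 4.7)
The plan is to construct the twin walk $\bar{W}$ explicitly by traversing the closed walk $C$ edge by edge and, at each step, selecting a corresponding edge (or a short sub-walk) in $\bar{G}$ whose length is bounded by the building-block term contributed by that step to $\ell_{(C,\sigma)}(\hat{z})$.

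Concretely, for each $i \in \{0,1,\dots,k\}$ I would define an ``interface node'' $N_i \in \bar{V}$ representing the position in $\bar{G}$ that corresponds to being ``between $e_i$ and $e_{i+1}$'' in $C$. Its first component depends only on the pair $(\sigma(i),\sigma(i+1))$: it is $v_{i+1} \in V$ in the $(+,+)$ case (so $N_i \in \bar{V}_+$), the intersection $e_i \cap e_{i+1}$ in the $(-,-)$ case (so $N_i \in \bar{V}_-$), the edge $e_i$ in the $(+,-)$ case and the edge $e_{i+1}$ in the $(-,+)$ case (so $N_i \in \bar{V}_{\mathrm{E}}$ in both edge cases). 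Its second component is a sign $p_i \in \{\pm\}$ which I pick arbitrarily for $N_0$ and then update by the rule $p_i = p_{i-1}$ if $\sigma(i)=+$ and $p_i = -p_{i-1}$ if $\sigma(i)=-$.

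Next I would show that for every $i \in [k]$ there is a walk segment in $\bar{G}$ from $N_{i-1}$ to $N_i$ whose length is at most the building-block contribution of index $i$ to $\ell_{(C,\sigma)}(\hat{z})$, by a case distinction on the triple $(\sigma(i-1),\sigma(i),\sigma(i+1))$. In each case the appropriate edge or pair of edges is read off from the construction of $\bar{E}$: for $i \in I_{(+,+,+)}$ one uses two $\bar{E}^{+,+,\pm}$ edges through the intermediate node $(e_i,p_{i-1})$; for $i \in I_{(\pm,+,\mp)} \cup I_{(-,+,+)} \cup I_{(+,+,-)}$ a single $\bar{E}^{+,+,\pm}$ edge suffices (with the degenerate case $I_{(-,+,-)}$ giving $N_{i-1}=N_i$ and contributing nothing); for $i \in I_{(-,-,-)}$, $I_{(-,-,+)}\cup I_{(+,-,-)}$, and $I_{(+,-,+)}$ one uses a single edge from $\bar{E}^{-,-,-}$, $\bar{E}^{+,-,-}$, and $\bar{E}^{+,-,+}$ respectively. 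In each case the triple $(e_{i-1},e_i,e_{i+1})$ lies in $\triples{}$ by the definition of a closed walk and witnesses the existence of the chosen edge in $\bar G$; moreover the $\min$ in the length definition is over triples with the matching key, so the chosen edge has length at most the value of the specific building block indexed by $(e_{i-1},e_i,e_{i+1})$.

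Concatenating these segments yields a walk $\bar{W}$ from $N_0$ to $N_k$. By the sign rule, the sign component of $N_k$ differs from that of $N_0$ by $(-1)^{|\{i \in [k]: \sigma(i)=-\}|}$, and since $(C,\sigma)$ is odd this exponent is odd; since moreover the first components of $N_0$ and $N_k$ coincide (the pair $(\sigma(0),\sigma(1))=(\sigma(k),\sigma(1))$ is the same as the pair $(\sigma(k),\sigma(k+1))$ under the cyclic convention), $N_k$ is the twin of $N_0$, so $\bar{W}$ is a twin walk. Finally, summing the length bounds over $i \in [k]$ gives $\bar{\ell}(\bar{W}) \le \ell_{(C,\sigma)}(\hat{z}) = 1+s$, and the ``in particular'' part follows because $s<0$ when the inequality is violated. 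I expect the only real subtlety to be the bookkeeping in the case distinction, especially making sure that in the cases whose building block is an $\sone$ or $\stwo$ term, the orientation of the triple $(e,f,g)$ used in the definition of $\bar{E}^{+,-,-}$ or $\bar{E}^{+,-,+}$ matches what we need so that the endpoints of the selected edge really are $N_{i-1}$ and $N_i$.
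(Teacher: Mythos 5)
Your construction is correct and is essentially the paper's own proof: the paper likewise builds, for each $i\in[k]$, a short segment in $\bar G$ determined by the sign pattern $(\sigma(i-1),\sigma(i),\sigma(i+1))$ with the cumulative sign $p_i=\prod_{j\le i}\sigma(j)$ playing the role of your sign update, concatenates the segments, and bounds the total length by $\ell_{(C,\sigma)}(\hat z)$ using the fact that the minima in the definition of $\bar\ell$ need not be attained by the edges of $C$. Your ``interface nodes'' $N_i$ are exactly the start/end nodes of the paper's segments $\bar W_i$, so no further comparison is needed.
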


\begin{proof}
  Let $(C,\sigma)$ be an odd signed closed walk with $C = v_1$-$e_1$-$v_2$-$e_2$-$v_3$-$\dotsb$-$v_{k-1}$-$v_{k-1}$-$v_k$-$e_k$-$v_1$.
  For $i \in [k]$, let $p_i \coloneqq \prod_{j=1}^i \sigma(j)$ be the product of signs of all edges up to $e_i$.
  Moreover, define $p_0 \coloneqq \sigma(0) = \sigma(k)$.
  For each $i \in [k]$, we determine a walk $\bar{W}_i$ in $\bar G$ of length at most $2$, and construct $\bar{W}$ by going along all these walks in their respective order.
  The walk
  %segment corresponding to 
  $\bar{W}_i$ depends on $\sigma(i-1)$, $\sigma(i)$ and $\sigma(i+1)$:
  \[
    \bar{W}_i \coloneqq \begin{cases}
      (v_i,p_{i-1}) \rightarrow (e_i,p_i) \rightarrow (v_{i+1},p_i) & \text{if } i \in I_{+,+,+} \\
      (v_i,p_{i-1}) \rightarrow (e_i,p_i)  & \text{if } i \in I_{+,+,-} \\
      (e_i,p_i) \rightarrow (v_{i+1},p_i)  & \text{if } i \in I_{-,+,+} \\
      (e_i,p_i) \hfill \text{(length $0$)} & \text{if } i \in I_{-,+,-} \\
      (e_{i-1},p_{i-1}) \rightarrow (e_i \cap e_{i+1},p_i) & \text{if } i \in I_{+,-,-} \\
      (e_{i-1} \cap e_i,p_{i-1}) \rightarrow (e_{i+1},p_i)  & \text{if } i \in I_{-,-,+} \\
      (e_{i-1} \cap e_i,p_{i-1}) \rightarrow (e_i \cap e_{i+1},p_i) & \text{if } i \in I_{-,-,-} \\
      (e_{i-1},p_{i-1}) \rightarrow (e_{i+1},p_i) & \text{if } i \in I_{+,-,+}.
                        \end{cases}
  \]
%\medskip

  The walks $\bar{W}_i$ help to understand the meaning of the different node types: the walk $\bar{W}_i$ starts at a node from $\bar{V}_+$ if $\sigma(i-1) = \sigma(i) = +$, it starts at a node from $\bar{V}_-$ if $\sigma(i-1) = \sigma(i) = -$, and it starts at a node from $\bar{V}_{\mathrm{E}}$ if $\sigma(i-1) \neq \sigma(i)$ holds.
  Similarly, the walk $\bar{W}_i$ ends at a node from $\bar{V}_+$ if $\sigma(i) = \sigma(i+1) = +$, it ends at a node from $\bar{V}_-$ if $\sigma(i) = \sigma(i+1) = -$, and it ends at a node from $\bar{V}_{\mathrm{E}}$ if $\sigma(i) \neq \sigma(i+1)$ holds. 

  Note that all edges traversed by each $\bar W_i$ are indeed in $\bar E$.
  It is easily verified that, for each $i \in [k-1]$, the walk $\bar{W}_i$ ends at the same node at which the walk $\bar{W}_{i+1}$ starts.
  Hence $\bar W$ is indeed a walk in $\bar G$.
  Since $v_{k+1} = v_1$ holds, $C$ is closed
  and $(C,\sigma)$ is odd, it can be checked that $\bar{W}$ is a twin walk.
  Finally, by construction, $\bar{\ell} (\bar{W}) \leq \ell_{(C,\sigma)}(\hat{z})$ holds, where the inequality comes from the fact that the minima in the definition of $\bar{\ell}$ need not be attained by the edges from $C$.
  %This shows the relationship between $\bar{\ell}(\bar{W})$ and the slack of the simple odd $\beta$-cycle inequality~\eqref{eq_simple_odd_beta_cycle}.
  By definition of $s$ we have $\ell_{(C,\sigma)}(\hat{z}) = 1+s$, thus $\bar{\ell} (\bar{W}) \leq 1+s$.
  \qed
\end{proof}

%In the next lemma we describe the aggregated inequalities that are induced by a twin walk in $\bar{G}$.

\begin{mylemma}
  \label{thm_walk_to_hyperwalk}
  For each twin walk $\bar{W}$ in $\bar{G}$ there exists an odd signed closed walk $(C,\sigma)$ in $G$ whose induced simple odd $\beta$-cycle inequality~\eqref{eq_simple_odd_beta_cycle} has slack $\bar{\ell}(\bar{W}) - 1$ with respect to $\hat{z}$.
  In particular, if $\bar{\ell}(\bar{W}) < 1$ holds, then the inequality is violated by $\hat{z}$.
\end{mylemma}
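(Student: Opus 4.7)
The plan is to invert the construction from the proof of Lemma~\ref{thm_hyperwalk_to_walk}. Given a twin walk $\bar{W}$ in $\bar{G}$ from $\bar{u} = (x, p)$ to its twin $(x, -p)$, I would cut $\bar{W}$ at every visit to a node in $\bar{V}_{\mathrm{E}}$ and thereby decompose it into consecutive elementary pieces. By inspecting the four edge classes, each such piece must match one of the $\bar{W}_i$-patterns appearing in the previous proof: either a single sign-flipping edge from $\bar{E}^{-,-,-} \cup \bar{E}^{+,-,+} \cup \bar{E}^{+,-,-}$, or one or two non-flipping edges from $\bar{E}^{+,+,\pm}$ passing through at most one node of $\bar{V}_+$. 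Each piece will encode one edge $e_i$ of the closed walk $C$ together with its sign $\sigma(i)$.

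From each piece I would extract the triple $(e_i, \sigma(i), v_i)$ as follows. The edge $e_i$ is read off from the first component of the bordering $\bar{V}_{\mathrm{E}}$-node when available, and otherwise from the specific minimizer $f$ that realizes the length of that edge of $\bar{W}$ in the definition of $\bar{\ell}$. The sign $\sigma(i)$ is set to $-$ precisely when the piece is sign-flipping. The vertex $v_i$ is taken to be the first component of the $\bar{V}_+$-node traversed by the piece, when such a node exists, and an arbitrary element of $e_{i-1} \cap e_i$ otherwise; the latter choice is legitimate by the remark after Definition~\ref{def inequalities}, since a piece without a $\bar{V}_+$-node necessarily has $\sigma(i-1) = -$ or $\sigma(i) = -$.

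Next, I would verify that the data $(C, \sigma)$ thus produced is an odd signed closed walk. The intersection conditions $e_{i-1} \cap e_i \neq \varnothing$, $e_i \cap e_{i+1} \neq \varnothing$, and $e_{i-1} \cap e_i \cap e_{i+1} = \varnothing$ are enforced via $\triples{}$ in each of the four edge-class definitions of $\bar{G}$, so every consecutive triple in $C$ is admissible. Closedness of $C$ follows because the first and last pieces of $\bar{W}$ share the same underlying first component by the twin condition on the endpoints. Oddness follows because a walk ending at a sign-flipped twin must traverse an odd number of sign-flipping edges, and each such edge contributes exactly one "$-$" to $\sigma$. The identity $\bar{\ell}(\bar{W}) = \ell_{(C,\sigma)}(\hat{z})$ is then immediate edge-by-edge, since on each edge of $\bar{W}$ the length $\bar{\ell}$ equals the value at $\hat{z}$ of precisely the building block appearing in the corresponding summand of $\ell_{(C,\sigma)}$; consequently, the slack of~\eqref{eq_simple_odd_beta_cycle} is $\bar{\ell}(\bar{W}) - 1$, as claimed.

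The main obstacle is the case analysis required to check consistency at every interior $\bar{V}_{\mathrm{E}}$-node and at the cyclic wrap-around from the last piece back to the first. One must confirm that the two incident edges of $\bar{W}$ at an interior $\bar{V}_{\mathrm{E}}$-node $(e, p)$ agree on $e$ and on the sign label $p$, and that the $\bar{V}_+$- and $\bar{V}_-$-nodes separating consecutive pieces fit together as successive intersections $e_{i-1} \cap e_i$ along $C$. A piece-by-piece enumeration of the eight sign patterns $I_{(a,b,c)}$ will show that every admissible combination of adjacent $\bar{G}$-edges corresponds to a unique $\bar{W}_i$-type, yielding a consistent continuation of $C$.
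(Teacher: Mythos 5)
Your proposal is correct and follows essentially the same route as the paper's proof, which processes the edges of $\bar{W}$ in order, assigns $e_i$, $\sigma(i)$ and $v_i$ by a case analysis on the edge type (choosing the minimizing $f$ where $\bar{\ell}$ is defined via a minimum), derives oddness from the twin condition because exactly the sign-flipping edges contribute a ``$-$'', and gets $\bar{\ell}(\bar{W}) = \ell_{(C,\sigma)}(\hat{z})$ by construction. One small slip: cutting at visits to $\bar{V}_{\mathrm{E}}$ does not isolate single indices, since a run of consecutive $\bar{E}^{-,-,-}$ edges never touches $\bar{V}_{\mathrm{E}}$ and would be lumped into one piece containing several hypergraph edges, so the decomposition should really be edge-by-edge as in the paper.
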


\begin{proof}
  Let $\bar{W}$ be a twin walk in $\bar{G}$.
  We first construct the signed closed walk $(C,\sigma)$ by processing the edges of $\bar{W}$ in their order.
  Throughout the construction we maintain the index $i$ of the next edge to be constructed, which initially is $i \coloneqq 1$.
  Since the construction depends on the type of the current edge $\bar{e} = \{\bar{u}, \bar{v}\} \in \bar{W}$ (where $\bar{W}$ visits $\bar{u}$ first), we distinguish the relevant cases:

\medskip
\noindent
  \textbf{Case 1:} $\bar{e} \in \bar{E}^{+,+,\pm}$ and $\bar{u} \in \bar{V}_{\mathrm{E}}$.
  Hence, $\bar{u} = (e,p)$ and $\bar{v} = (v,p)$ for some $v \in e \in E$ and some $p \in \{\pm\}$.
  We define $v_i \coloneqq v$ and continue.

\medskip
\noindent
  \textbf{Case 2:} $\bar{e} \in \bar{E}^{+,+,\pm}$ and $\bar{u} \in \bar{V}_+$.  Hence, $\bar{u} = (v,p)$ and $\bar{v} = (e,p)$ for some $v \in e \in E$ and some $p \in \{\pm\}$ as well as $\ell_{\bar{e}} = \sinc{e}{v}$.
  We define $e_i \coloneqq e$ and $\sigma(i) \coloneqq +$.
  We then increase $i$ by $1$ and continue.

\medskip
\noindent
  \textbf{Case 3:} $\bar{e} \in \bar{E}^{+,-,-}$ and $\bar{u} \in V_{\mathrm{E}}$.
  Hence, $\bar{u} = (e,p)$ and $\bar{v} = (f \cap g,-p)$ for some $(e,f,g) \in \triples$ as well as $\ell_{\bar{e}} = \sone{f}{f \cap g}{e}(\hat{z})$.
  We define $v_i$ (resp.\ $v_{i+1}$) to be any node in $e \cap f$ (resp.\ $f \cap g$), $e_i \coloneqq f$ and $\sigma(i) \coloneqq -$.
  We then increase $i$ by $1$ and continue.

\medskip
\noindent
  \textbf{Case 4:} $\bar{e} \in \bar{E}^{-,-,-}$.
  Hence, $\bar{u} =(e \cap f,p)$ and $\bar{v}= (f \cap g,-p)$ for some $(e,f,g) \in \triples$ as well as $\ell_{\bar{e}} = \sodd{f}{e \cap f}{f \cap g}(\hat{z})$.
  We define $e_i \coloneqq f$, $\sigma(i) \coloneqq -$ and $v_{i+1}$ to be any node in $f \cap g$.
  We then increase $i$ by $1$ and continue.

\medskip
\noindent
  \textbf{Case 5:} $\bar{e} \in \bar{E}^{+,-,-}$ and $\bar{u} \in V_-$.
  Hence, $\bar{u} = (e \cap f,p,-)$ and $\bar{v} = (g,-p)$ for some $(e,f,g) \in \triples$ with $\ell_{\bar{e}} = \sone{f}{e \cap f}{g}(\hat{z})$.
  We define $e_i \coloneqq f$, $\sigma(i) \coloneqq -$ and $v_{i+1}$ to be any node in $f \cap g$.
  We then increase $i$ by $1$ and continue.

\medskip
\noindent
  \textbf{Case 6:} $\bar{e} \in \bar{E}^{+,-,+}$.
  Hence, $\bar{u} = (e,p)$ and $\bar{v} = (g,-p)$ for some $(e,f,g) \in \triples$ as well as $\ell_{\bar{e}} = \stwo{f}{e}{g}(\hat{z})$.
  We define $v_i$ (resp.\ $v_{i+1}$) to be any node in $e \cap f$ (resp.\ $f \cap g$), $e_i \coloneqq f$, $\sigma(i) \coloneqq -$, $e_{i+1} \coloneqq g$ and $\sigma(i+1) \coloneqq +$.
  We then increase $i$ by $2$ and continue.

\medskip
  After processing all edges of $\bar{W}$, the last defined edge is $e_{i-1}$ and thus we define $k \coloneqq i-1$ and $C \coloneqq v_1$-$e_1$-$v_2$-$e_2$-$v_3$-$\dotsb$-$v_{k-1}$-$v_{k-1}$-$v_k$-$e_k$-$v_1$.
  By checking pairs of edges of $\bar{W}$ that arise consecutively, one verifies that for each $i \in [k]$, we also have $v_i \in e_{i-1} \cap e_i$.
  %, where it was chosen arbitrarily unless $\sigma(i-1) = \sigma(i) = +$.

  To see that $(C,\sigma)$ is odd, we use the fact that the endnodes of $\bar{W}$ are twin nodes.
  When traversing an edge $\bar{e}$ from $\bar{u}$ to $\bar{v}$, the second entries of $\bar{u}$ and $\bar{v}$ differ if and only if we set a $\sigma$-entry to $-$.
  Note that in Case~6 we set two such entries, but only one to $-$.
  We conclude that $\sigma(i) = -$ holds for an odd number of indices $i \in [k]$.

  By construction we have $\bar{\ell} (\bar{W}) = \ell_{(C,\sigma)} (\hat{z})$. % which proves the statement about the slack of the inequality.
  The slack of the simple odd $\beta$-cycle inequality induced by $(C,\sigma)$ with respect to $\hat z$ is then $\ell_{(C,\sigma)}(\hat{z}) - 1 = \bar{\ell} (\bar{W}) - 1$.
  \qed
\end{proof}

\begin{theorem}
  \label{thm_simple_odd_beta_separation}
  Let $G = (V,E)$ be a hypergraph and let $\hat{z} \in \Pfr(G)$.
  The separation problem for simple odd $\beta$-cycle inequalities~\eqref{eq_simple_odd_beta_cycle} can be solved in time
  $\orderO( |E|^5 + |V|^2 \cdot |E|)$.
\end{theorem}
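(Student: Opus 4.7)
The plan is to reduce the separation problem to shortest-path computations in the auxiliary graph $\bar{G}$. Combining \cref{thm_hyperwalk_to_walk,thm_walk_to_hyperwalk} gives the key equivalence: some simple odd $\beta$-cycle inequality is violated at $\hat{z}$ if and only if $\bar{G}$ contains a twin walk of length strictly less than $1$. The edge weights of $\bar{G}$ are nonnegative by the preceding corollary, so a shortest twin walk can be found with Dijkstra's algorithm. When its length is $< 1$, the explicit procedure in the proof of \cref{thm_walk_to_hyperwalk} recovers an odd signed closed walk whose associated simple odd $\beta$-cycle inequality is violated by $\hat{z}$; otherwise no such violated inequality exists, so $\hat z$ satisfies the entire class.

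Concretely, the algorithm first builds $\bar{G}$ together with its edge weights. The minima appearing in the definitions of the weights on $\bar{E}^{-,-,-}$, $\bar{E}^{+,-,-}$ and $\bar{E}^{+,-,+}$ are computed by a single pass over the triples in $\triples$: each such triple updates at most a constant number of candidate weights, and this already bounds $|\bar{E}|$. Afterwards, for every ``base element'' $x$ that indexes a twin pair $\{(x,+),(x,-)\}$ in $\bar{V}$, one runs Dijkstra from $(x,+)$ and reads off its distance to $(x,-)$, keeping the minimum. If this minimum is $< 1$, the corresponding twin walk is translated back to an odd signed closed walk $(C,\sigma)$ via the case analysis in the proof of \cref{thm_walk_to_hyperwalk}, producing the desired violated inequality.

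The main obstacle is the complexity accounting. One has $|\bar{V}| = \orderO(|V| + |E|^2)$, and by charging each edge of $\bar{G}$ to a triple in $\triples$ rather than to a pair of intersection sets (which would naively give an $|E|^4$ bound), one obtains $|\bar{E}| = \orderO(|V|\cdot|E| + |E|^3)$. Enumeration of the triples and construction of $\bar{G}$ together with all edge weights then fits into $\orderO(|E|^3 + |V|\cdot|E|)$ time. A single Dijkstra call is polynomial in $|\bar{V}|$ and $|\bar{E}|$; the remaining care is in controlling how many twin pairs are actually needed as sources, after which summing these bounds yields the claimed $\orderO(|E|^5 + |V|^2 \cdot |E|)$ total. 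Strong polynomiality is automatic once this is done, since Dijkstra is strongly polynomial on graphs with nonnegative weights and every building-block value at $\hat{z}$ is computed with a constant number of arithmetic operations on its coordinates, so no dependence on bit-lengths enters.
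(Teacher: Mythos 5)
Your proposal follows the paper's proof essentially verbatim: the same auxiliary graph $\bar{G}$, the same reduction via \cref{thm_hyperwalk_to_walk,thm_walk_to_hyperwalk} to deciding whether a shortest twin walk has length below $1$, and one Dijkstra run per twin pair, of which there are $O(|E|^2+|V|)$. The only piece you leave unfinished is the final arithmetic — which is where the paper needs the Fibonacci-heap bound $O(|\bar{E}|+|\bar{V}|\log|\bar{V}|)$ per Dijkstra run (a plain binary heap would cost an extra logarithmic factor) together with the harmless preprocessing assumption $|E|\geq\log|V|$ and the case split $|E|^2\gtrless|V|$ to land exactly on $O(|E|^5+|V|^2\cdot|E|)$; also note that evaluating the building-block values during construction costs $O(|E|^3\cdot|V|)$ rather than your claimed $O(|E|^3+|V|\cdot|E|)$, though this is still dominated by the final bound.
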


\begin{proof}
  Let $n \coloneqq |V|$ and $m \coloneqq |E|$ and assume $m \geq \log(n)$ since otherwise we can merge nodes that are incident to exactly the same edges.
  First note that, regarding the size of the auxiliary graph $\bar{G}$, we have $|\bar{V}| = \orderO(m^2 + n)$ and $|\bar{E}| = \orderO( mn + m^3 )$.
  For the construction of $\bar{G}$ and the computation of $\bar{\ell}$ we need to inspect all triples $(e,f,g) \in \triples{}$ of edges.
  This can be done in time $\orderO( m^3 n )$ since for each of the $m^3$ edge triples $(e,f,g)$ we have to inspect at most $n$ nodes to check the requirements on the intersections of $e$, $f$ and $g$.

  According to \cref{thm_walk_to_hyperwalk,thm_hyperwalk_to_walk} we only need to check for the existence of a twin walk $\bar{W}$ in $\bar{G}$ with $\ell(\bar{W}) < 1$.
  This can be accomplished with $|\bar{V}|/2 = \orderO( m^2 + n )$ runs of Dijkstra's algorithm~\cite{Dijkstra59} on $\bar{G}$, each of which takes
  \[
    \orderO( |\bar{E}| + |\bar{V}| \cdot \log(|\bar{V}|) ) = \orderO( (m n + m^3) + (m^2 + n) \cdot \log (m^2 + n) ) % = \orderO( m n + m^3 + n \log n )
  \]
  time when implemented with Fibonacci heaps~\cite{FredmanT87}.
  If $m^2 \geq n$, then the total running time simplifies to $\orderO(m^5)$, and otherwise we obtain $ \orderO(n^2 m)$.
  \qed
\end{proof}

The main reason for this large running time bound is the fact that $|\bar{V}_-|$ can be quadratic in $|E|$.

Clearly, our separation algorithm requires that the edge lengths $\bar{\ell}$ of the auxiliary graph $\bar{G}$ are nonnegative.
This in turn requires $\hat{z} \in \Pfr(G)$, i.e., that the flower inequalities with at most two neighbors are satisfied.
As we already mentioned, the number of these flower inequalities is bounded by a polynomial in $|V|$ and $|E|$.
We like to point out that one can combine the separation of these flower inequalities with the construction of $\bar{G}$, i.e., one can determine violated inequalities while constructing the auxiliary graph.

\section{Redundancy}
\label{sec_redundancy}

Denote by $\Pcr(G)$ the set of points in $\Pfr(G)$ that satisfy all simple odd $\beta$-cycle inequalities.
It turns out that many such inequalities are redundant for $\Pcr(G)$.
However, the redundancy proofs do not provide much insight and often require many case distinctions (on the involved sign patterns and the way edges intersect).
Hence, we restrict ourselves to providing an example of such a redundancy result.

Since $\Pml(G)$ is full-dimensional (provided that $G$ has no loops or parallel edges) and contained in $\Pcr(G)$, then also $\Pcr(G)$ is full-dimensional.
As a consequence, an inequality is redundant for $\Pcr(G)$ if and only if it is not facet-defining for $\Pcr(G)$ (see Chapter 8 of Schrijver's book~\cite{Schrijver86}).

\begin{myproposition}
  \label{thm_redundancy_plus_plus}
  Let $(C,\sigma)$ be an odd signed closed walk in $G$ such that there exist pairwise distinct indices $i,j \in [k]$ such that $e_i = e_j$ and $\sigma(i) = \sigma(j) = +$.
  Then the simple odd $\beta$-cycle inequality corresponding to $(C,\sigma)$ is redundant for $\Pcr(G)$.
\end{myproposition}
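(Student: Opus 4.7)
The plan is to decompose the simple odd $\beta$-cycle inequality of $(C,\sigma)$ by ``cutting'' the walk at the two occurrences $i < j$ of the repeated edge $e \coloneqq e_i = e_j$, and to express $\ell_{(C,\sigma)}$ as a sum of length functions of two strictly shorter signed closed walks. Concretely, I would form
\begin{align*}
C_1 &\coloneqq v_1\text{-}e_1\text{-}\dotsb\text{-}v_i\text{-}e_i\text{-}v_{j+1}\text{-}e_{j+1}\text{-}\dotsb\text{-}v_k\text{-}e_k\text{-}v_1, \\
C_2 &\coloneqq v_{i+1}\text{-}e_{i+1}\text{-}\dotsb\text{-}v_j\text{-}e_j\text{-}v_{i+1},
\end{align*}
each equipped with the restriction of $\sigma$ to its indices.

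First I would check that $C_1$ and $C_2$ are legitimate signed closed walks in the sense of \cref{def inequalities}. The condition $e_{\ell-1} \cap e_\ell \cap e_{\ell+1} = \varnothing$ combined with $v_\ell \in e_{\ell-1} \cap e_\ell$ quickly forces $j - i \geq 3$ and $k - (j-i) \geq 3$, giving both sub-walks length at least three. At either junction the middle edge is $e$ with sign $+$, so the needed triple-intersection conditions at those indices follow from conditions on $C$ after using $e_i = e_j$ to rewrite the triples.

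Next I would establish the key identity
\[
  \ell_{(C,\sigma)}(z) = \ell_{(C_1,\sigma_1)}(z) + \ell_{(C_2,\sigma_2)}(z)
\]
by comparing the two sides term by term in the explicit form given by \cref{thm_length_combined}. Every edge and every consecutive edge pair of $C$ appears either in $C_1$ or $C_2$, except that the pairs $(e_i,e_{i+1})$ and $(e_j,e_{j+1})$ of $C$ are replaced in the sub-walks by the ``junction'' pairs $(e_j,e_{i+1})$ in $C_2$ and $(e_i,e_{j+1})$ in $C_1$; since $e_i = e_j$ these are the same pairs with the same sign patterns, and the specific nodes selected in the corresponding $\sinc{}{}$ contributions can be chosen consistently because $e_i \cap e_{i+1} = e_j \cap e_{i+1}$ and $e_i \cap e_{j+1} = e_j \cap e_{j+1}$. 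The isolated-negative-sign contributions and the $(-,-)$-count correction are unaffected because $\sigma(i) = \sigma(j) = +$.

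With the identity in hand I would finish as follows. The total number of negative signs in $\sigma$ is odd and splits additively between $\sigma_1$ and $\sigma_2$, so exactly one of $(C_1,\sigma_1), (C_2,\sigma_2)$ is odd; say $(C_1, \sigma_1)$. Then $\ell_{(C_1,\sigma_1)}(z) \geq 1$ on $\Pcr(G)$ as a simple odd $\beta$-cycle inequality, and $\ell_{(C_2,\sigma_2)}(z) \geq 0$ on $\Pfr(G)$ as a sum of nonnegative building blocks by \cref{thm_slack_functions}. Adding yields $\ell_{(C,\sigma)}(z) \geq 1$, establishing that the inequality is implied by other inequalities valid on $\Pcr(G)$. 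I expect the main obstacle to lie in the term-by-term verification of the identity: sign patterns at both junctions must be tracked and the specific node picked in each $\sinc{e_i}{\cdot}$ summand at a junction must be matched across the three walks. A secondary subtlety is confirming every intersection condition of \cref{def inequalities} for $C_1$ and $C_2$ at their junctions, which relies delicately on the identification $e_i = e_j$.
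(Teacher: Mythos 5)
Your proposal is correct and follows essentially the same route as the paper: split the walk at the two occurrences of the repeated $+$-signed edge into two shorter signed closed walks, verify $\ell_{(C,\sigma)} = \ell_{(C_1,\sigma_1)} + \ell_{(C_2,\sigma_2)}$ by matching contributions at the junctions (the paper does this by a case analysis on the sign patterns around indices $i$ and $j$, which is the same bookkeeping you describe via \cref{thm_length_combined}), and then add the simple odd $\beta$-cycle inequality of the odd piece to the nonnegativity of the even piece's length function. The only cosmetic difference is that the paper normalizes $i=1$ and notes $4 \leq j \leq k-2$, which is exactly your observation that both sub-walks have length at least three.
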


\begin{proof}
%   First, consider the case where $e_i$ and $e_j$ appear consecutively in $C$.
%   Without loss of generality we have $i=1$ and $j = 2$.
%   We define the closed walk in $G$ obtained from $C$ by deleting $v_2$ and $e_2$, i.e., $C' \coloneqq v_1$-$e_1$-$v_3$-$e_3$-$\dotsb$-$v_k$-$e_k$-$v_1$.
%   Since $C$ is odd and $\sigma(1) = \sigma(2) = +$ holds, $C$ consists of at least 3 edges, therefore $C'$ consists of at least 2 edges.
%   Let $\sigma'$ be the signature of $C'$ obtained from $\sigma$.
%   Clearly $(C',\sigma')$ is odd and closed, and thus, from \cref{thm_simple_odd_beta_valid},  $\ell_{(C',\sigma')}(z) \geq 1$ is valid for $\Pcr(G)$.
%   It can be checked that $\ell_{(C,\sigma)}(z) = \ell_{(C',\sigma')}(z) + 2 \sinc{e_2}{v_2}(z)$.
%   By \cref{thm_slack_functions}, the last term is nonnegative for $z \in \Pfr(G)$.
%   Hence, the simple odd $\beta$-cycle inequality corresponding to $(C,\sigma)$ is redundant for $\Pcr(G)$.
%
% \medskip
%
Without loss of generality we can assume $i=1$.
Note that, from the definition of closed walk, we have
  %Next, consider the case where $e_i$ and $e_j$ do not appear consecutively in $C$.
   and $4 \leq j \leq k-2$.
  We define the following two closed walks in $G$:
  $C_1 \coloneqq v_j$-$e_1$-$v_2$-$e_2$-$\dotsb$-$v_{j-1}$-$e_{j-1}$-$v_j$, and $C_2 \coloneqq v_1$-$e_j$-$v_{j+1}$-$e_{j+1}$-$\dots$-$v_k$-$e_k$-$v_1$.
  Since $4 \leq j \leq k-2$, both $C_1$ and $C_2$ consist of at least three edges.
  Let $\sigma_1$ and $\sigma_2$ be the signatures of $C_1$ and $C_2$, respectively, obtained from $\sigma$.
  Since $(C,\sigma)$ is odd, exactly one among $(C_1,\sigma_1)$ and $(C_2,\sigma_2)$ is odd, while the other is even.
  Without loss of generality we assume that $(C_1,\sigma_1)$ is odd and $(C_2,\sigma_2)$ is even.
  To prove that the simple odd $\beta$-cycle inequality $\ell_{(C,\sigma)}(z) \geq 1$ corresponding to $(C,\sigma)$ is redundant for $\Pcr(G)$, it suffices to show that it is the sum of the simple odd $\beta$-cycle inequality $\ell_{(C_1,\sigma_1)}(z) \geq 1$ corresponding to $(C_1,\sigma_1)$ and of the inequality $\ell_{(C_2,\sigma_2)}(z) \geq 0$, which is valid for $\Pfr(G)$ by \cref{thm_slack_functions}.
  To see this, it suffices to 
  %observe the equalities in 
  consider
  the following cases, where each case not explicitly discussed below is symmetric to one of the written ones:
 
\medskip
\noindent
  \textbf{Case 1:} $1 \in I_{+,+,+}$ and $j \in I_{+,+,+}$.
  In this case we have $1 \in I_{+,+,+}$ in $C_1$ and $j \in I_{+,+,+}$ in $C_2$.
  It follows that the contribution
  $
  \big( \sinc{e_1}{v_1}(z) + \sinc{e_1}{v_2}(z) \big)
  +
  \big( \sinc{e_j}{v_j}(z) + \sinc{e_j}{v_{j+1}} \big)
  $
  from $C$ is equal to the sum of the contribution
  $
  \big( \sinc{e_1}{v_j}(z) + \sinc{e_1}{v_2}(z) \big)
  $
  from $C_1$ and the contribution 
  $
  \big( \sinc{e_j}{v_1}(z) + \sinc{e_j}{v_{j+1}}(z) \big)
  $
  from $C_2$.
  
\medskip
\noindent
  \textbf{Case 2:} $1 \in I_{+,+,+}$ and $j \in I_{+,+,-}$.
  In this case we have $1 \in I_{+,+,+}$ in $C_1$ and $j \in I_{+,+,-}$ in $C_2$.
  It follows that the contribution
  $
  \big( \sinc{e_1}{v_1}(z) + \sinc{e_1}{v_2}(z) \big)
  +
  \sinc{e_j}{v_j}(z)
  $
  from $C$ is equal to the sum of the contribution
  $
  \big(\sinc{e_1}{v_j}(z) + \sinc{e_1}{v_2}(z) \big)
  $
  from $C_1$ and the contribution 
  $
  \sinc{e_j}{v_1}(z) 
  $
  from $C_2$.

\medskip
\noindent
  \textbf{Case 3:} $1 \in I_{+,+,+}$ and $j \in I_{-,+,-}$.
  In this case we have $1 \in I_{-,+,+}$ in $C_1$ and $j \in I_{+,+,-}$ in $C_2$.
  It follows that the contribution
  $
  \big( \sinc{e_1}{v_1}(z) + \sinc{e_1}{v_2}(z) \big)
  $
  from $C$ is equal to the sum of the contribution
  $
  \sinc{e_1}{v_2}(z)
  $
  from $C_1$ and the contribution 
  $
  \sinc{e_j}{v_1}(z) 
  $
  from $C_2$.

\medskip
\noindent
  \textbf{Case 4:} $1 \in I_{+,+,-}$ and $j \in I_{+,+,-}$.
  In this case we have $1 \in I_{+,+,-}$ in $C_1$ and $j \in I_{+,+,-}$ in $C_2$.
  It follows that the contribution
  $
  \sinc{e_1}{v_1}(z) + \sinc{e_j}{v_j}(z)
  $
  from $C$ is equal to the sum of the contribution
  $
  \sinc{e_1}{v_j}(z)
  $
  from $C_1$ and the contribution 
  $
  \sinc{e_j}{v_1}(z)
  $
  from $C_2$.

\medskip
\noindent
  \textbf{Case 5:} $1 \in I_{+,+,-}$ and $j \in I_{-,+,+}$.
  In this case we have $1 \in I_{-,+,-}$ in $C_1$ and $j \in I_{+,+,+}$ in $C_2$.
  It follows that the contribution
  $
  \sinc{e_1}{v_1}(z) + \sinc{e_j}{v_{j+1}}(z)
  $
  from $C$ is equal to the sum of the contribution
  $
  0
  $
  from $C_1$ and the contribution 
  $
  \sinc{e_j}{v_1}(z) + \sinc{e_j}{v_{j+1}}(z)
  $
  from $C_2$.
  
\medskip
\noindent
  \textbf{Case 6:} $1 \in I_{+,+,-}$ and $j \in I_{-,+,-}$.
  In this case we have $1 \in I_{-,+,-}$ in $C_1$ and $j \in I_{+,+,-}$ in $C_2$.
  It follows that the contribution
  $
  \sinc{e_1}{v_1}(z)
  $
  from $C$ is equal to the sum of the contribution
  $
  0
  $
  from $C_1$ and the contribution 
  $
  \sinc{e_j}{v_1}(z)
  $
  from $C_2$.
  
\medskip
\noindent
  \textbf{Case 7:} $1 \in I_{-,+,-}$ and $j \in I_{-,+,-}$.
  In this case we have $1 \in I_{-,+,-}$ in $C_1$ and $j \in I_{-,+,-}$ in $C_2$.
  It follows that the contribution
  $
  0
  $
  from $C$ is equal to the sum of the contribution
  $
  0
  $
  from $C_1$ and the contribution 
  $
  0
  $
  from $C_2$.
\end{proof}

Note that \cref{thm_redundancy_plus_plus} is only stated for repetition of edges whose signs are both $+$.
However, we have evidence (based on computations for small instances) that also other types of closed walks yield redundant inequalities.
Examples are those with repetitions of edges of arbitrary sign, those with two subsequent equal nodes, those in which two nodes are repeated and the four involved edges all have the same sign.
The strongest redundancy statement that we can think of is captured in the following conjecture.

\begin{myconjecture}
  \label{conj_redundancy}
  Let $(C,\sigma)$ be an odd signed closed walk in $G$ for which a proper subsequence $C'$ of $C$ is a $\beta$-cycle.
  Then the simple odd $\beta$-cycle inequality corresponding to $(C,\sigma)$ is redundant for $\Pcr(G)$.
\end{myconjecture}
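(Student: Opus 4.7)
The plan is to extend the cut-and-split technique from \cref{thm_redundancy_plus_plus}. Writing $C = v_1\text{-}e_1\text{-}v_2\text{-}\dotsb\text{-}e_k\text{-}v_1$, the embedding of the $\beta$-cycle $C'$ as a subsequence of $C$ identifies a ``closing point''---either a node visited twice by $C$, or an edge traversed twice by $C$---at which $C$ can be split into two closed subwalks: $C'$ itself and a complementary $C''$. Each inherits a signature $\sigma'$, $\sigma''$ from $\sigma$. First I would establish a parity lemma: since $(C,\sigma)$ is odd, exactly one of $(C',\sigma')$ and $(C'',\sigma'')$ is odd and the other is even (care is needed when the cut is performed along a shared edge, because that edge then contributes its sign to both pieces). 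Without loss of generality assume $(C',\sigma')$ is odd; \cref{thm_simple_odd_beta_valid} then makes $\ell_{(C',\sigma')}(z) \geq 1$ valid for $\Pcr(G)$, while \cref{thm_slack_functions} makes $\ell_{(C'',\sigma'')}(z) \geq 0$ valid for $\Pfr(G) \supseteq \Pcr(G)$.

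The heart of the proof is the length-function identity
\begin{equation*}
\ell_{(C,\sigma)}(z) = \ell_{(C',\sigma')}(z) + \ell_{(C'',\sigma'')}(z),
\end{equation*}
which immediately implies that $\ell_{(C,\sigma)}(z) \geq 1$ is the sum of two inequalities valid for $\Pcr(G)$ and is therefore redundant. To prove the identity I would analyse each index locally: at every interior index of $C'$ or $C''$ the sign triple $(\sigma(i-1),\sigma(i),\sigma(i+1))$ seen from $C$ coincides with the triple seen from the relevant subwalk, so the same building block is assigned; at the join indices the local sign triple seen from $C$ differs from those seen from $C'$ and $C''$, and one must check that the blocks assigned by $C'$ and $C''$ at the join sum to the block assigned by $C$. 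This is the same mechanism that drives the seven cases of \cref{thm_redundancy_plus_plus}.

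The main obstacle is exactly this local case analysis at the join. \cref{thm_redundancy_plus_plus} handles only the simplest special case (two joins, both of sign $+$) and still requires seven sub-cases after symmetry. The general conjecture must accommodate arbitrary sign patterns at the joins, and the join itself may be a shared node (splitting an $s^{\mathrm{inc}}$-term) or a shared edge (splitting one of $s^{\mathrm{odd}}$, $s^{\mathrm{one}}$, $s^{\mathrm{two}}$ into two halves on the same edge). To tame the combinatorics I would argue by induction on the length $k$ of $C$: once the identity is established for a minimal detachment of $C'$ from the rest, the inductive hypothesis applied to the two pieces would close the argument. The chief risk is that the pointwise identity may fail in some pathological gluing configuration where, for example, an edge is shared between $C'$ and $C''$ with incompatible sign triples at its endpoints; in that case one would instead need a weaker decomposition $\ell_{(C,\sigma)}(z) = \ell_{(C',\sigma')}(z) + \ell_{(C'',\sigma'')}(z) + r(z)$ with $r(z) \geq 0$ arising from further building blocks or flower inequalities already valid on $\Pfr(G)$.
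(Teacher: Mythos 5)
This statement is \cref{conj_redundancy} of the paper and is explicitly left \emph{open}: the authors prove only the special case of a repeated edge carrying two $+$ signs (\cref{thm_redundancy_plus_plus}) and remark that the general redundancy arguments ``require many case distinctions.'' Your proposal is a reasonable plan along the same lines, but it is a plan rather than a proof, and several of its steps contain genuine gaps rather than deferred bookkeeping. First, the parity lemma fails in exactly the case you flag: if the split is performed along a shared edge of sign $-$, that edge contributes a $-$ to both $\sigma'$ and $\sigma''$, so the total number of $-$ signs over the two pieces exceeds that of $\sigma$ by one, and $(C',\sigma')$ and $(C'',\sigma'')$ have the \emph{same} parity. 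If both are even, neither piece yields an inequality with right-hand side $1$ and the decomposition proves nothing; you offer no way around this. Second, the complementary piece $C''$ need not be a signed closed walk in the sense of \cref{def inequalities}: it may have length $1$ or $2$, in which case $\ell_{(C'',\sigma'')}$ is not defined, and the new adjacency created at the join may violate the requirement $e_{i-1}\cap e_i\cap e_{i+1}=\varnothing$. The paper's own special case quietly relies on the bound $4\le j\le k-2$ to sidestep precisely this.

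Third, a proper subsequence of $C$ that is a $\beta$-cycle need not be contiguous in $C$: the walk may enter and leave $C'$ several times, so there is no single closing point and the complement is a union of several arcs rather than one closed subwalk; your single-cut induction does not cover this configuration. Finally, the central identity $\ell_{(C,\sigma)}(z)=\ell_{(C',\sigma')}(z)+\ell_{(C'',\sigma'')}(z)$ is asserted, not established, and your own fallback (a remainder $r(z)\ge 0$) concedes it may fail without exhibiting $r$ as a nonnegative combination of inequalities valid for $\Pfr(G)$. In short, your sketch reproduces the strategy of \cref{thm_redundancy_plus_plus} at the level of intent, but it does not resolve the obstacles that led the authors to state this as a conjecture, and the statement remains unproven.
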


We recall that a \emph{$\beta$-cycle} of length $k$, for some $k \geq 3$, is a sequence $v_1$-$e_1$-$v_2$-$e_2$-$\dotsb$-$v_{k}$-$e_{k}$-$v_1$ such that $v_1$, $v_2$, $\dots$, $v_k$ are pairwise distinct nodes, $e_1$, $e_2$, $\dots$, $e_k$ are pairwise distinct edges, and $v_i$ belongs to $e_{i-1}, e_i$ and no other $e_j$ for all $i = 1,\dots,k$, where $e_0 := e_k$.
%(used in \cref{conj_redundancy} and \cref{sec_nonsimple}) 
Conjecture~\ref{conj_redundancy} justifies that although inequalities~\eqref{eq_simple_odd_beta_cycle} are defined for closed walks, we call them simple odd $\beta$-\emph{cycle} inequalities.
The main reason for considering closed walks instead of $\beta$-cycles is the separation algorithm described in \cref{sec_separation}.

\section{Relation to non-simple odd $\beta$-cycle inequalities}
\label{sec_nonsimple}

In this section we relate our simple odd $\beta$-cycle inequalities 
%to the inequalities in Corollary~10 in~\cite{DelPiaK17} and 
to the odd $\beta$-cycle inequalities in~\cite{DelPiaD21}.

A \emph{cycle hypergraph} is a hypergraph $G = (V, E)$, with $E = \{e_1, \dots, e_m\}$, where $m \geq 3$, and every edge $e_i$ has nonempty intersection only with $e_{i-1}$ and $e_{i+1}$ for every $i \in \{1,\dots,m\}$, where, for convenience, we define $e_{m+1} := e_1$ and $e_0 := e_m$.
If $m = 3$, it is also required that $e_1 \cap e_2 \cap e_3 = \emptyset$. 
Given a closed walk $C = v_1$-$e_1$-$v_2$-$e_2$-$\dotsb$-$v_k$-$e_k$-$v_1$ in a hypergraph $G = (V, E)$, the \emph{support hypergraph} of $C$ is the hypergraph $G(C) = (V(C),E(C))$, where $E(C) := \{e_1,e_2,\dots,e_k\}$ and $V(C):=e_1 \cup e_2 \cup \cdots \cup e_k$.

\begin{mylemma}
  \label{thm_cycle_hypergraph_nonsimple_simple}
  Let $(C,\sigma)$ be a signed closed walk in a hypergraph $G$
  and assume that the support hypergraph of $C$ is a cycle hypergraph.
  %Assume that every edge $e_i$ has nonempty intersection only with $e_{i-1}$ and $e_{i+1}$ for every $i \in \{1,\dots,k\}$, where, for convenience, we define $e_{k+1} := e_1$ and $e_0 \coloneqq e_k$.
  %If $k = 3$, it is also required that $e_1 \cap e_2 \cap e_3 = \emptyset$. 
  Let $E^- \coloneqq \{e_i : i \in [k], \ \sigma(i) = -\}$, $E^+ \coloneqq \{e_i : i \in [k], \ \sigma(i) = +\}$, $S_1 \coloneqq (\bigcup_{e \in E^-} e) \setminus \bigcup_{e \in E^+} e$, and $S_2 \coloneqq \{v_1,\dots,v_k\} \setminus \bigcup_{e \in E^-} e$.
  Then
  \begin{align*}
    \ell_{(C,\sigma)}(z)
    & = - \sum_{v \in S_1} 2 z_v 
      + \sum_{e \in E^-} 2 z_e 
      + \sum_{v \in S_2} 2 z_v 
      - \sum_{e \in E^+} 2 z_e 
      + 2 |S_1| \\ 
    & \hspace{4mm} - 2 | \{ i \in [k] : e_{i-1}, e_{i} \in E^- \}| 
      + |E^-|.
  \end{align*}
  %If $G$ is a cycle hypergraph, then the simple odd $\beta$-cycle inequalities~\eqref{eq_simple_odd_beta_cycle} have the same form 
  %as the inequalities from Corollary~10 in~\cite{DelPiaK17} and 
  %as the odd $\beta$-cycle inequalities in~\cite{DelPiaD21}.
  In particular, the simple odd $\beta$-cycle inequality corresponding to $(C,\sigma)$ coincides with the odd $\beta$-cycle inequality corresponding to $(C,\sigma)$.
  Furthermore, in a cycle hypergraph, every odd $\beta$-cycle inequality is a simple odd $\beta$-cycle inequality.
\end{mylemma}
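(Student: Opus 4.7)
The plan is to expand $\ell_{(C,\sigma)}(z)$ using the closed-form expression in \cref{thm_length_combined} and then simplify each of the six summands under the structural assumption that the support hypergraph is a cycle hypergraph. First I would observe that the walk cannot backtrack along the same edge because the requirement $e_{i-1}\cap e_i\cap e_{i+1}=\varnothing$ would fail, so in a cycle hypergraph the walk advances deterministically around the cycle and each edge is traversed exactly once. In particular $E^-$ and $E^+$ are disjoint and their union is the entire edge set of $C$. Consequently the terms $\sum_{\sigma(i)=-}(2z_{e_i}+1)$, $-\sum_{\sigma(i)=+}2z_{e_i}$ and $-2\,|\{i:\sigma(i-1)=\sigma(i)=-\}|$ in \cref{eq_length_combined} rewrite immediately as $2\sum_{e\in E^-}z_e+|E^-|$, $-2\sum_{e\in E^+}z_e$ and $-2\,|\{i\in[k]:e_{i-1},e_i\in E^-\}|$, producing four of the seven terms in the target expression.

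Next I would use the defining property of a cycle hypergraph: every node lies in at most two edges, and if in two, then they are consecutive along the cycle. Applied to a walk-node $v_i\in e_{i-1}\cap e_i$, this forces $v_i$ to lie in exactly those two edges, so its membership in $S_1$ or $S_2$ is fully determined by the signs $\sigma(i-1),\sigma(i)$. In particular $v_i\in S_2$ iff $\sigma(i-1)=\sigma(i)=+$, turning the third summand of \cref{eq_length_combined} into $2\sum_{v\in S_2}z_v$. For the remaining fourth and fifth summands I would argue that $S_1$ partitions disjointly as $S_1=A_1\sqcup A_2$, where $A_1$ consists of nodes of multiplicity two (necessarily of the form $v_i\in e_{i-1}\cap e_i$ with $\sigma(i-1)=\sigma(i)=-$, contributed by the fourth summand) and $A_2$ consists of nodes of multiplicity one in some edge $e_i\in E^-$ (necessarily lying in $e_i\setminus(e_{i-1}\cup e_{i+1})$ with $\sigma(i)=-$, contributed by the fifth summand). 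Combining these two summands then gives $2\sum_{v\in S_1}(1-z_v)=2|S_1|-2\sum_{v\in S_1}z_v$, supplying the last three terms of the target formula.

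The remaining two assertions follow by direct comparison. The expression just derived matches, coefficient by coefficient, the explicit form of the odd $\beta$-cycle inequality of~\cite{DelPiaD21} specialised to the cycle hypergraph $G(C)$; since \eqref{eq_simple_odd_beta_cycle} is obtained from the same $\ell_{(C,\sigma)}(z)\ge 1$ by \cref{thm_simple_odd_beta_cg}, the two inequalities coincide. Conversely, every odd $\beta$-cycle inequality on a cycle hypergraph is parametrised by a signed cycle around the unique cycle of $G$, which yields a signed closed walk of precisely the form treated above.

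The main obstacle is the bookkeeping for $S_1$: one must verify, using only the cycle hypergraph property, that every node of $S_1$ is counted exactly once across the fourth and fifth summands of \cref{eq_length_combined} (neither double-counted between an intersection and an interior contribution, nor omitted), because the formula there sums over walk indices rather than directly over nodes. Once this partition argument is in place, everything else is arithmetic rearrangement.
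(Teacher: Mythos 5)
Your proof is correct and takes essentially the same route as the paper's: both expand $\ell_{(C,\sigma)}(z)$ via \cref{thm_length_combined} and then use the cycle-hypergraph structure (every node lies in at most two, necessarily consecutive, edges) to convert the index-indexed sums into the node- and edge-indexed sums of the statement. The paper merely lists the four resulting identities without justification, so your bookkeeping for the partition of $S_1$ and the characterization of $S_2$ only fills in detail along the same path; note that both arguments tacitly assume the walk traverses each edge of the support exactly once (your non-backtracking observation alone does not exclude a walk winding around the cycle several times, but the paper's proof rests on the same implicit assumption).
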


\begin{proof}
  It suffices to observe that 
  \begin{align*}
    & \sum_{\substack{i \in [k] \\ \sigma(i-1) = \sigma(i) = +}} \hspace{-3ex} 2 {z}_{v_i} = \sum_{v \in S_2} 2 z_v, 
    \qquad
    \sum_{\substack{i \in [k] \\ \sigma(i-1) = \sigma(i) = - \\ v \in e_{i-1} \cap e_i}} \hspace{-3ex} 2 {z}_v
    +
    \hspace{-2ex} \sum_{\substack{i \in [k] : \sigma(i) = - \\ v \in e_i \setminus  (e_{i-1} \cup e_{i+1}) }} \hspace{-4ex} 2 {z}_v
    = \sum_{v \in S_1} 2 z_v, \\
    & \sum_{\substack{i \in [k] \\ \sigma(i-1) = \sigma(i) = - \\ v \in e_{i-1} \cap e_i}} \hspace{-1.5ex} 2
    + \hspace{-1.5ex} \sum_{\substack{i \in [k] : \sigma(i) = - \\ v \in e_i \setminus  (e_{i-1} \cup e_{i+1}) }} \hspace{-3ex} 2 
    = 2 |S_1|
    \qquad
    \text{and } 
    \sum_{\substack{i \in [k] \\ \sigma(i) = -}} 1
%+ |\{ i \in [k] : \sigma(i) = - \}|
    = |E^-|.
  \end{align*}
  The statement for cycle hypergraphs $G$ follows by inspecting the definition of the odd $\beta$-cycle inequalities.
  \qed
\end{proof}

As a consequence, we can use the two following known results in order to gain insights about simple odd $\beta$-cycle inequalities.

\begin{myproposition}[Example~2 in~\cite{DelPiaD21}]
\label{prop ex in DD}
There exists a cycle hypergraph for which the Chv{\'a}tal rank of odd $\beta$-cycle inequalities can be equal to $2$.
\end{myproposition}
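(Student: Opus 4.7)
The plan is that the statement is borrowed directly from \cite[Example~2]{DelPiaD21}, so the cleanest route is to invoke that example and translate it through \cref{thm_cycle_hypergraph_nonsimple_simple}. Concretely, \cite{DelPiaD21} exhibits a specific cycle hypergraph $G$, an odd signed closed walk on it, and a fractional point $\hat z \in \Psr(G)$ that both lies in the first Chv\'atal closure of $\Psr(G)$ and violates the resulting odd $\beta$-cycle inequality. The former witnesses Chv\'atal rank $\ge 2$, and the matching upper bound rank $\le 2$ follows from \cref{thm_simple_odd_beta_cg} together with the (known) fact that flower inequalities with at most two neighbors, which define $\Pfr(G)$ on top of $\Psr(G)$, are themselves rank-$1$ Chv\'atal--Gomory cuts of $\Psr(G)$; \cref{thm_cycle_hypergraph_nonsimple_simple} guarantees that the simple and non-simple versions coincide on cycle hypergraphs, so the transfer is immediate.

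If I had to produce the example from scratch, I would start with the smallest cycle hypergraph, namely a triangle on vertices $\{v_1,v_2,v_3\}$ with edges $e_1=\{v_1,v_2\}$, $e_2=\{v_2,v_3\}$, $e_3=\{v_1,v_3\}$, together with the closed walk $C = v_1$-$e_1$-$v_2$-$e_2$-$v_3$-$e_3$-$v_1$ and an odd signature such as $\sigma=(-,+,+)$. \cref{thm_cycle_hypergraph_nonsimple_simple} makes the corresponding odd $\beta$-cycle inequality fully explicit, and a natural candidate violator is the symmetric half-integral point $\hat z_v = 1/2$ for all $v\in V$ with $\hat z_{e_i}$ chosen in $\{0,1/2\}$ compatibly with~\eqref{eq_lin} (for instance $\hat z_{e_1}=0$, $\hat z_{e_2}=\hat z_{e_3}=1/2$). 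Membership in $\Psr(G)$ and violation of the target inequality are then both immediate from a direct calculation.

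The main obstacle is certifying that such a $\hat z$ actually belongs to the first Chv\'atal closure of $\Psr(G)$: rank-$1$ cuts form an infinite family, parameterised by multiplier vectors of integer combinations of the defining inequalities. I would exploit the small size of $G$ and the strong symmetry of $\hat z$ to reduce this to checking multipliers of bounded support -- essentially those whose associated cut is tight at $\hat z$ -- thereby collapsing the problem to a finite enumeration. This last step is precisely the content of the computation carried out in \cite[Example~2]{DelPiaD21}, which one may invoke directly to close the argument.
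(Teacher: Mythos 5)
Your primary route---simply invoking Example~2 of \cite{DelPiaD21}---is exactly what the paper does: \cref{prop ex in DD} is imported as a known result and the paper supplies no proof of its own. One small correction to the framing: the proposition concerns the (non-simple) odd $\beta$-cycle inequalities of \cite{DelPiaD21}, so neither \cref{thm_cycle_hypergraph_nonsimple_simple} nor \cref{thm_simple_odd_beta_cg} is needed here; those enter only in the subsequent theorem, where the rank-$2$ statement is transferred to the \emph{simple} inequalities.

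The ``from scratch'' fallback, however, contains a genuine error: a triangle with $|e|=2$ cannot witness rank equal to $2$. In the quadratic case the multilinear polytope is the Boolean quadric polytope, and the triangle inequality produced by your walk with $\sigma=(-,+,+)$, namely $z_{e_1}+z_{v_3}-z_{e_2}-z_{e_3}\ge 0$, is a classical rank-$1$ $\{0,\tfrac12\}$-cut of $\Psr(G)$: summing $z_{e_1}\ge z_{v_1}+z_{v_2}-1$, $z_{e_1}\ge 0$, $z_{v_3}-z_{e_2}\ge 0$, $z_{v_3}-z_{e_3}\ge 0$, $z_{v_1}-z_{e_3}\ge 0$ and $z_{v_2}-z_{e_2}\ge 0$ gives $2z_{e_1}+2z_{v_3}-2z_{e_2}-2z_{e_3}\ge -1$, which rounds to the triangle inequality after division by $2$. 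Consequently your half-integral point is already cut off by the first Chv\'atal closure, and the finite enumeration you propose for certifying membership in that closure would fail rather than succeed. Exhibiting rank exactly $2$ requires a cycle hypergraph with edges of cardinality at least $3$, where the flower inequalities form a genuinely needed intermediate rank-$1$ layer; this is precisely what Example~2 of \cite{DelPiaD21} provides, and there is no shortcut to reconstructing it via the quadratic case.
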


\begin{myproposition}[Implied by Theorem~1 in~\cite{DelPiaD21}]
  \label{thm_flower_cg}
  Flower inequalities are Chv{\'a}tal-Gomory cuts for $\Psr(G)$.
\end{myproposition}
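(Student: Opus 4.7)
The plan is to derive each flower inequality as a rank-one Chv\'atal--Gomory cut for $\Psr(G)$. Given a flower inequality centered at $f\in E$ with neighbors $e_1,\dots,e_m$, I would exhibit an explicit nonnegative rational combination of the defining inequalities of $\Psr(G)$ whose aggregate left-hand side has integer coefficients on every variable and whose aggregate right-hand side is strictly fractional; a single CG rounding step then recovers the flower inequality.

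Set $F\coloneqq f\setminus\bigcup_{i\in[m]}e_i$, $m_i\coloneqq|f\cap e_i|\ge 1$, and $M\coloneqq\max_i m_i$. I would combine the edge inequality~\eqref{eq_lin_2} for $f$ with weight $1/M$; for each $v\in f\cap e_i$, which is uniquely assigned to $e_i$ by the disjointness $f\cap e_i\cap e_j=\varnothing$ for $i\ne j$, the inclusion inequality~\eqref{eq_lin_1} of the form $z_v-z_{e_i}\ge 0$ with weight $1/m_i$; and the bounds $z_f\ge 0$, $1-z_v\ge 0$ for $v\in F$, and $1-z_v\ge 0$ for $v\in f\cap e_i$ with weights $1-1/M$, $1-1/M$, and $1/m_i-1/M$, respectively. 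All of these weights are nonnegative because $1\le m_i\le M$.

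A direct expansion shows the variable coefficients telescope into integers: $z_f$ receives $1/M+(1-1/M)=1$; each $z_v$ with $v\in F$ receives $-1/M-(1-1/M)=-1$; each $z_v$ with $v\in f\cap e_i$ receives $-1/M+1/m_i-(1/m_i-1/M)=0$; and each $z_{e_i}$ receives $-m_i\cdot(1/m_i)=-1$. Collecting the constant terms and using $|f|=|F|+\sum_i m_i$, the aggregate right-hand side simplifies to $1/M-|F|-m$. Since the left-hand side $z_f-\sum_{v\in F}z_v-\sum_i z_{e_i}$ has integer coefficients, Chv\'atal--Gomory rounding yields $\lceil 1/M-|F|-m\rceil=1-|F|-m$, which rearranges to the desired flower inequality.

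The main subtlety is that the weight $1/M$ on the edge inequality is forced from above by the non-negativity requirement $1/m_i-1/M\ge 0$ for the index $i$ attaining $m_i=M$, and must remain strictly positive so that the Chv\'atal--Gomory rounding actually increments the integer part. With this choice the fractional right-hand side lies in the half-open interval $(-|F|-m,\,1-|F|-m]$ and the rounding step is tight; the rest of the argument is routine algebraic bookkeeping.
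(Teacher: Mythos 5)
Your derivation is correct, and I checked the bookkeeping: with $F=f\setminus\bigcup_i e_i$, $m_i=|f\cap e_i|$ and $M=\max_i m_i$, the weights $1/M$ on~\eqref{eq_lin_2} for $f$, $1/m_i$ on each $z_v-z_{e_i}\ge 0$ with $v\in f\cap e_i$, $1-1/M$ on $z_f\ge 0$ and on $1-z_v\ge 0$ for $v\in F$, and $1/m_i-1/M$ on $1-z_v\ge 0$ for $v\in f\cap e_i$ are all nonnegative (using that the sets $f\cap e_i$ partition $f\setminus F$), the aggregate is $z_f-\sum_{v\in F}z_v-\sum_{i}z_{e_i}\ge 1/M-|F|-m$, and rounding the right-hand side up to $1-|F|-m$ gives exactly the flower inequality~\eqref{eq_flower}. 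Note, however, that the paper does not prove this proposition at all: it is imported verbatim as a consequence of Theorem~1 in~\cite{DelPiaD21}, so there is no in-paper argument to compare against. What your write-up buys is a self-contained, explicit rank-one certificate built only from the inequalities defining $\Psr(G)$, which essentially reconstructs the aggregation underlying the cited result. Two harmless edge cases are worth a sentence: for $m=0$ the flower inequality is just~\eqref{eq_lin_2} itself (and $M$ is undefined), and for $M=1$ the right-hand side $1/M-|F|-m$ is already integral, so no rounding occurs; in both cases the inequality is still trivially a Chv\'atal--Gomory cut (of rank $0$), so the statement is unaffected.
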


\begin{theorem}
Simple odd $\beta$-cycle inequalities can have Chv{\'a}tal rank~2 with respect to $\Psr(G)$.
\end{theorem}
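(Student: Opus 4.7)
The plan is to combine the two cited propositions with Lemma~\ref{thm_cycle_hypergraph_nonsimple_simple}. Specifically, Proposition~\ref{prop ex in DD} supplies a cycle hypergraph $G$ together with an odd $\beta$-cycle inequality whose Chv{\'a}tal rank with respect to $\Psr(G)$ equals $2$. Since $G$ is a cycle hypergraph, Lemma~\ref{thm_cycle_hypergraph_nonsimple_simple} identifies this odd $\beta$-cycle inequality with the simple odd $\beta$-cycle inequality induced by the same signed closed walk. Hence on this particular instance, a simple odd $\beta$-cycle inequality inherits Chv{\'a}tal rank at least $2$ with respect to $\Psr(G)$, which is what the theorem asserts.

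To round out the picture, I would also sketch the matching upper bound, so that rank $2$ is actually achieved (rather than merely bounded from below). For this, observe that by \cref{thm_simple_odd_beta_cg} every simple odd $\beta$-cycle inequality is a Chv{\'a}tal--Gomory cut for $\Pfr(G)$, and that $\Pfr(G)$ is obtained from $\Psr(G)$ by adding flower inequalities with at most two neighbors, which are Chv{\'a}tal--Gomory cuts for $\Psr(G)$ by \cref{thm_flower_cg}. Thus each simple odd $\beta$-cycle inequality has Chv{\'a}tal rank at most $2$ relative to $\Psr(G)$, and combined with the lower bound from the previous paragraph, equality holds for the instance produced by Proposition~\ref{prop ex in DD}.

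There is essentially no obstacle here: the real work is hidden in the two earlier propositions and in \cref{thm_cycle_hypergraph_nonsimple_simple}. The only point that requires mild care is to verify that the lower bound really transfers, i.e.\ that an inequality whose Chv{\'a}tal rank with respect to $\Psr(G)$ equals~$2$ when viewed as an odd $\beta$-cycle inequality retains that rank when viewed as a simple odd $\beta$-cycle inequality. But this is immediate because \cref{thm_cycle_hypergraph_nonsimple_simple} shows the two inequalities are literally the same linear inequality on the same polytope $\Psr(G)$, so the rank is an intrinsic property of the inequality and is unchanged by the relabeling.
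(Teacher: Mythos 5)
Your proposal is correct and follows essentially the same route as the paper: the upper bound comes from combining \cref{thm_flower_cg} with \cref{thm_simple_odd_beta_cg}, and the lower bound comes from \cref{prop ex in DD} together with the identification of odd and simple odd $\beta$-cycle inequalities on cycle hypergraphs via \cref{thm_cycle_hypergraph_nonsimple_simple}. Your added remark that the rank transfers because the two inequalities are literally the same linear inequality is a correct (and implicit in the paper) observation.
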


\begin{proof}
  Combining \cref{thm_flower_cg} with \cref{thm_simple_odd_beta_cg} shows that simple odd $\beta$-cycle inequalities have Chv{\'a}tal rank at most $2$.
  \cref{thm_cycle_hypergraph_nonsimple_simple} and \cref{prop ex in DD} show that the Chv{\'a}tal rank of simple odd $\beta$-cycle inequalities for cycle hypergraphs can be equal to 2.
  \qed
\end{proof}

For the second insight, we consider a strengthened form of Theorem~5 in~\cite{DelPiaD21}.

\begin{myproposition}[Theorem~5 in~\cite{DelPiaD21}, strengthened]
  \label{thm_cycle_hypergraph_nonsimple_perfect}
  Let $G = (V,E)$ be a cycle hypergraph.
  Then $\Pml(G)$ is described by all odd $\beta$-cycle inequalities and all inequalities from $\Pfr(G)$.
\end{myproposition}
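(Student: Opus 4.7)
The plan is to obtain this as an immediate structural strengthening of the original Theorem~5 of~\cite{DelPiaD21}, which asserts that for a cycle hypergraph $G$, the polytope $\Pml(G)$ is already described by the standard relaxation $\Psr(G)$ together with \emph{all} flower inequalities and \emph{all} odd $\beta$-cycle inequalities. The only gap to close is that the present statement uses $\Pfr(G)$ in place of $\Psr(G)$ augmented with all flower inequalities, i.e., it restricts to flower inequalities with at most two neighbors. I therefore need to show that, in a cycle hypergraph, every flower inequality has at most two neighbors, so that nothing is actually dropped.

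The key step is a direct inspection of the definitions. Recall that a flower inequality centered at $f \in E$ has neighbors $e_1, \dots, e_m \in E$, each of which is required to satisfy $f \cap e_i \neq \varnothing$. By the definition of a cycle hypergraph given in the excerpt, for any edge $f = e_j$ the only edges of $E$ that intersect $f$ nontrivially are $e_{j-1}$ and $e_{j+1}$. Consequently, every flower over a cycle hypergraph centered at $e_j$ can have at most the two neighbors $e_{j-1}$ and $e_{j+1}$. It remains to check the extra disjointness condition $f \cap e_i \cap e_{i'} = \varnothing$ for distinct neighbors $e_i, e_{i'}$: for $m \geq 4$ this is automatic from the cycle structure because $e_{j-1} \cap e_{j+1} = \varnothing$, and for $m = 3$ it is precisely the extra condition $e_1 \cap e_2 \cap e_3 = \varnothing$ built into the definition of a cycle hypergraph. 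Hence every flower inequality for $G$ already appears among the defining inequalities of $\Pfr(G)$.

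Combining these two ingredients yields
\[
  \Pml(G) \;=\; \Pfr(G) \,\cap\, \{\,z : z \text{ satisfies every odd $\beta$-cycle inequality}\,\},
\]
which is the claimed strengthening. I do not anticipate a real obstacle: once the original Theorem~5 in~\cite{DelPiaD21} is invoked, the argument is purely bookkeeping about adjacency in cycle hypergraphs, and the only subtle point is the $m=3$ case, which is handled by the extra hypothesis $e_1 \cap e_2 \cap e_3 = \varnothing$ built into the definition.
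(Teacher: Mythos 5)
Your argument is correct under the definitions given in the paper, but it takes a genuinely different route from the paper's. The paper gives no self-contained justification: it simply asserts that the strengthening ``can be seen by inspecting the proof'' of Theorem~5 in~\cite{DelPiaD21}, i.e., by checking that that proof only ever invokes flower inequalities with at most two neighbors. You instead show that the restriction is \emph{vacuous} for cycle hypergraphs: since every edge $e_j$ of a cycle hypergraph has nonempty intersection only with $e_{j-1}$ and $e_{j+1}$, any collection of distinct edges adjacent to $e_j$ has cardinality at most two, so every flower inequality of $G$ is already among the defining inequalities of $\Pfr(G)$ and the ``strengthened'' statement is literally equivalent to the original theorem. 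This is cleaner and more self-contained; your treatment of the disjointness condition (including the $m=3$ case) is right, though strictly speaking it is only needed to confirm which two-neighbor flowers exist, not for the inclusion your argument actually uses. The one caveat to flag is that your reduction silently assumes the flower inequalities appearing in Theorem~5 of~\cite{DelPiaD21} are exactly those of the definition reproduced in this paper; if the cited paper used a broader notion of neighbor, the claimed equivalence of the two inequality families would need rechecking, which is presumably why the authors chose to point at the original proof instead.
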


The strengthening lies in the fact that in the original statement of Theorem~5 in~\cite{DelPiaD21} all flower inequalities are used rather than only those with at most two neighbors.
%by using $\Pfr(G)$ we restrict the number of neighbors of participating flower inequalities to at most $2$.
This strengthening of the original statement can be seen by inspecting its proof in~\cite{DelPiaD21}.
By applying~\cref{thm_cycle_hypergraph_nonsimple_simple} to \cref{thm_cycle_hypergraph_nonsimple_perfect} we immediately obtain the following result.

\begin{theorem}
  Let $G = (V,E)$ be a cycle hypergraph.
  Then 
  \[
    \Pml(G) = \{ x \in \Pfr(G) : x \text{ satisfies all simple odd $\beta$-cycle inequalities} \}.
  \]
\end{theorem}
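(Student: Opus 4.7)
The plan is to combine the two immediately preceding results: the strengthened Proposition~\ref{thm_cycle_hypergraph_nonsimple_perfect} and Lemma~\ref{thm_cycle_hypergraph_nonsimple_simple}. The strengthened proposition describes $\Pml(G)$, for a cycle hypergraph $G$, as the intersection of $\Pfr(G)$ with all odd $\beta$-cycle inequalities in the sense of~\cite{DelPiaD21}. The target characterization differs only in that it uses \emph{simple} odd $\beta$-cycle inequalities. The whole task therefore reduces to identifying these two families of inequalities on a cycle hypergraph.

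For the inclusion $\Pml(G) \subseteq \{ x \in \Pfr(G) : x \text{ satisfies all simple odd $\beta$-cycle inequalities} \}$, I would simply cite Theorem~\ref{thm_simple_odd_beta_valid}, which shows that every simple odd $\beta$-cycle inequality is valid for $\Pml(G)$, together with the fact that $\Pml(G) \subseteq \Pfr(G)$ by definition.

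For the reverse inclusion, I would invoke Lemma~\ref{thm_cycle_hypergraph_nonsimple_simple} in both directions: on a cycle hypergraph, every simple odd $\beta$-cycle inequality coincides with an odd $\beta$-cycle inequality of~\cite{DelPiaD21} (via the closed-form expression of $\ell_{(C,\sigma)}$), and conversely every odd $\beta$-cycle inequality of~\cite{DelPiaD21} arises as a simple odd $\beta$-cycle inequality. This gives a literal identification of the two inequality systems over $\Pfr(G)$. Applying Proposition~\ref{thm_cycle_hypergraph_nonsimple_perfect} then yields that any point in $\Pfr(G)$ which satisfies all simple odd $\beta$-cycle inequalities lies in $\Pml(G)$, completing the equality.

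There is essentially no obstacle, since the technical content has already been packaged into the cited results. The only mild point to check, and which I would mention explicitly, is that Lemma~\ref{thm_cycle_hypergraph_nonsimple_simple} requires the support hypergraph of the walk to be a cycle hypergraph — but in a cycle hypergraph $G$ every signed closed walk that yields an odd $\beta$-cycle inequality of~\cite{DelPiaD21} trivially satisfies this condition, so the identification of the two families is complete.
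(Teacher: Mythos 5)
Your proposal is correct and matches the paper's own argument, which likewise obtains the theorem by applying Lemma~\ref{thm_cycle_hypergraph_nonsimple_simple} (identifying simple and non-simple odd $\beta$-cycle inequalities on a cycle hypergraph) to the strengthened Proposition~\ref{thm_cycle_hypergraph_nonsimple_perfect}. You merely spell out the two inclusions that the paper leaves implicit.
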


\section{Implementation and computational results}
\label{sec_computations}

\DeclareDocumentCommand\ub{m}{\textbf{#1}}

We implemented the algorithm from \cref{thm_simple_odd_beta_separation} and carried out preliminary experiments in order to assess the quality of the simple odd $\beta$-cycle inequalities.
In this section we report about this work and its outcomes.

\paragraph{Implementation.}
We implemented separation algorithms for flower inequalities~\eqref{eq_flower} with at most two neighbors and for the simple odd $\beta$-cycle inequalities~\eqref{eq_simple_odd_beta_cycle} within a \texttt{separator} plugin for SCIP~\cite{SCIP8}, which is one of the state-of-the-art mixed-integer programming solvers.
SCIP automatically linearizes the objective in case it reads a binary polynomial optimization problem.
It creates so-called \texttt{and}-constraints, and hence our plugin only needs to scan all such constraints in order to build the hypergraph $G = (V,E)$.

SCIP does not add all inequalities~\eqref{eq_lin_1} to the initial LP relaxation, but starts with an aggregation of all these corresponding to each \texttt{and}-constraint.
Hence, in order to compute the dual bound $\Psr$ corresponding to $\Psr(G)$, we need to solve the separation problem for~\eqref{eq_lin_1} as well.
Moreover, we disable SCIP's default cutting planes (as well as heuristics in order to fairly compare running times).
Once no such constraint is violated, we have computed the bound from $\Psr(G)$ and start generating flower inequalities~\eqref{eq_flower} with one neighbor.
In order to avoid scanning all pairs of edges $e,f \in E$, we store, for each edge $e \in E$, its nodes $v \in e$ as well as, for each node $v \in V$, its incident edges $e \in E$.
Moreover, we keep track of all sets $e \cap f$ for $e,f \in E$ and use a hash map in order to find such sets quickly.
This is important because, for our test instances, the number of such intersection sets was not much larger than the number of edges.
We denote the obtained dual bound by $\PfrOne$.
Similarly, we obtain the dual bound $\Pfr$ by also generating violated flower inequalities~\eqref{eq_flower} with two neighbors.
We also compute the dual bound $\Pcr$ by generating violated simple odd $\beta$-cycle inequalities~\eqref{eq_simple_odd_beta_cycle}.
For each of these four bounds we continue generating cuts for the previous bounds.
In particular, we do not even solve the separation problem for flower inequalities with two neighbors or for simple odd $\beta$-cycle inequalities if inequalities of simpler types were found.
Finally, we also run SCIP with all default separators enabled and denote the corresponding bound by $\boundSCIP$.

Since for Dijkstra's algorithm, Fibonacci heaps outperform min-heaps only for very large graphs, and since SCIP already contains a min-heap implementation, we sticked to the latter.
Technically, this yields a higher amortized running time of $\orderO((|V| \cdot |E| + |E|^3) \log(|V| + |E|^2))$ for each call of Dijkstra's algorithm, and thus a total running time of $\orderO( |E|^5 \cdot \log |V| + |V|^2 \cdot |E| \cdot \log |V| )$.
While this is a logarithmic factor worse than  the running time from \cref{thm_simple_odd_beta_separation}, we believe that switching to Fibonacci heaps would not improve the practical performance.

\paragraph{Instances.}
The first set of instances of binary polynomial optimization that we considered, is inspired from the \emph{image restoration problem}, which is widely investigated in computer vision.
The goal is to reconstruct an original base image from a blurred image in input.
These instances have been considered also in \cite{Rodriguez-Heck18,ElloumiLL21,DelPiaKS20,DelPiaD21,DelPiaK21} to test binary polynomial optimization algorithms.
A complete description of these instances can be found in \cite{CramaR17}, while here we only provide a brief overview.
An image is a rectangle containing $\ell \times h$ pixels, and each pixel takes the value $0$ or $1$.
Three types of base images are considered: %``top left rectangle'', ``centre rectangle'', and ``cross''.
The first type ``topleft'' consists of a rectangle of $1$s in the top left, the second type ``center'' contains a rectangle of $1$s in the middle, and in the third type ``cross'' the $1$s form a cross in the center.
Three different types of perturbations can be applied to a base image to obtain a blurred image: ``none'', where no perturbation is employed, ``all \SI{5}{\percent}'', where each pixel is flipped independently with probability \SI{5}{\percent}, and ``0s \SI{50}{\percent}'', where each $0$-pixel is flipped independently with probability \SI{50}{\percent} (and the $1$s remain unchanged).
The image restoration instance associated with a blurred image is defined by an objective function $f(x) = L(x) + P(x)$ that must be minimized. 
The variables $x_{ij}$, for all $(i, j) \in [\ell] \times [h]$, represent the value assigned to each pixel in the output image.
$L(x)$ is the linear part which incentivizes similarity between the input blurred image and the output image.
$P(x)$ is the nonlinear part, consisting of a polynomial of degree four, which incentivizes smoothness by penalizing $2 \times 2$ sub-images of the output image, the more they look like a checkerboard.
We considered seven different sizes $(\ell,h) \in \{(10,10), (10,15), (15,15), (15,20), (20,20), (20,25), (25,25) \}$.
For each size, we considered each of the three different base types and each of the three perturbations, where for ``all \SI{5}{\percent}'' and ``0s \SI{50}{\percent}'' we created two random instances, resulting in a total of $3 \cdot 5 = 15$ instances for each size.
This is in line with~\cite{CramaR17} where these instances were considered for the first time.

The second set of instances that we considered, arises from the \emph{low auto-correlation binary sequence problem}, which originates in theoretical physics \cite{LiersMPRS10}.
Both POLIP \cite{POLIP14} and MINLPLib~\cite{MINLPLib20} contain 44 instances of this type, which have been already used in the context of binary polynomial optimization in \cite{Rodriguez-Heck18,ElloumiLL21,DelPiaD21}.
%to test binary polynomial optimization algorithms.
These instances are notoriously very difficult, and exhibit several symmetries, which in particular lead to multiple optimal solutions.
%Due to their structure, these instances admit a lot of symmetries.
%However, we found that most of these instances are either too easy or too hard for our algorithm, thus we generated few more.
%
Determining a ground state in the so-called Bernasconi model amounts to minimizing the degree-four energy function 
\begin{equation}
  \label{eq energy}
%\frac{N}{N-R+1} 
  \sum_{i=1}^{N-R+1} \ 
%\frac{1}{R(R-1)} 
  \sum_{d=1}^{R-1} \ 
  \left(
  \sum_{j=1}^{i+R-1-d} \ 
  \sigma_j \sigma_{j+d} 
  \right) ^2,
\end{equation}
over variables $\sigma_j$, for $j=1,\dots,N$, taking values in $\{+1,-1\}$.
In the formula \eqref{eq energy}, $N$ and $R$ are positive constants, and the formula is not equal to a constant for $R \in \{3,\dots,N\}$.
%The model contains symmetries, leading to multiple optimum solutions.
Note that \eqref{eq energy} coincides with the energy function (2) in \cite{LiersMPRS10}, up to multiplication by a positive constant, and renaming of variables so that the first variable is $\sigma_1$ rather than $\sigma_0$.
To express the energy function \eqref{eq energy} in 0/1 variables, we replace $\sigma_i$ with $2 x_i-1$, for $i = 1,\dots,N$, and obtain the binary polynomial optimization problem
\begin{subequations}
  \label{eq energy bpo}
  \begin{alignat}{7}
    & \text{min } & \sum_{i=1}^{N-R+1} \sum_{d=1}^{R-1} & \left( \sum_{j=1}^{i+R-1-d} (2 x_j-1) (2x_{j+d}-1) \right)^2 \\
    & \text{s.t. } & x \in \{0,1\}^N &.
  \end{alignat}
\end{subequations}
The instances in POLIP and MINLPLib ignore the absolute term in the objective function, which makes it hard to compare bounds to those in the physics literature.
Moreover, the instance for $N = R = 20$ is missing in both benchmark libraries.
Hence, we generated our instances according to \eqref{eq energy bpo} by setting $N$ and $R$ to $N \in \{20,25,30,\dotsc,55\}$ and $R \in \{\lceil \frac{1}{8}N \rceil, \lceil \frac{2}{8}N \rceil, \lceil \frac{3}{8}N \rceil, \dotsc, \lceil \frac{7}{8} N \rceil, N\}$, respectively.

The third set of instances that we considered are generated randomly, as it is commonly done in the literature \cite{BucRin07,CramaR17,DelPiaKS20,dPDiG22SODA,dPDiG23ALG}.
We chose a setting similar to the one used in~\cite{CramaR17,dPDiG22SODA,dPDiG23ALG}, where we fix the number of nodes $|V|$ and of edges $|E|$ of the hypergraph representing the instance, and we also fix the minimum cardinality of an edge of the hypergraph, which we denote by $d$. 
For every edge, the probability that its cardinality is equal to $c$, for $c \in \{d,\dots,|V|\}$, is equal to $2^{d-c-1}/(1-2^{d-|V|-1})$, so it is roughly $\tfrac{1}{2}$ for $c=d$, $\tfrac{1}{4}$ for $c=d+1$, $\tfrac{1}{8}$ for $c=d+2$, etc..
As explained in \cite{CramaR17}, the purpose of this choice is to model the fact that a random hypergraph is expected to have more edges of low cardinality than high cardinality.
Then, once $c$ is fixed, the nodes of the edge are chosen independently uniformly at random in $V$ with no repetitions.
We also make sure that there are no parallel edges in the produced hypergraph.
%This will be useful in the interpretations of the results, as we explain later in the section.
We generated instances with $|V| \in \{ 50, 100, 200 \}$, $|E| \in \{5 \cdot |V|, 10 \cdot |V|, 20 \cdot |V|\}$, and $d \in \{2,3,4\}$.
For every triple $(|V|,|E|,d)$ we generated $5$ instances.
We remark that the normalization $1/(1-2^{d-|V|-1})$ in our probabilities is not present in~\cite{CramaR17,dPDiG22SODA,dPDiG23ALG}, but is added here because the cardinality $c$ of the edge can be at most $|V|$.
Another difference with the setting examined in~\cite{CramaR17,dPDiG22SODA,dPDiG23ALG} is that in these papers only the case $d=2$ is considered.

\DeclareDocumentCommand\timeout{}{$>$\,\SI{1}{\hour}}

\paragraph{Experiments.}
The main goal of our computations is of preliminary nature: we wanted to assess whether the generation of simple odd $\beta$-cycle inequalities has the potential of being useful at all.
Obviously, even a very careful implementation suffers from the large asymptotic running time of our separation algorithm.
However, the actual additional gap that can be closed by adding the inequalities was hard to predict.
In fact, our expectations were quite low because it is well known for stable set and maximum cut problems that one can often add a huge number of odd cycle inequalities with only moderate change of the dual bound~\cite{RebennackOTSRP11}.
Hence, the purpose of our experiments was to investigate the impact of simple odd $\beta$-cycle inequalities on the dual bound.

To this end, we computed the dual bounds $\Psr$, $\PfrOne$, $\Pfr$, $\Pcr$ and $\boundSCIP$.
We evaluated the quality of the bounds and the running times within the root node.
In particular, we turned off other features such as presolving, symmetry detection and heuristics, and only solved the root node.
We allowed as many separation rounds as necessary until no more inequalities could be added.

To obtain a reference objective value we first ran SCIP, augmented with flower inequalities of up to one neighbor, with a time limit of \SI{1}{\hour} and extracted the primal bound $\textup{PRIMAL}$.

We used SCIP~8.0~\cite{SCIP8} and ran our experiments on a Intel Xeon Gold 5217 CPU with \SI{3.00}{\giga\hertz} with \SI{64}{\giga\byte} memory.

\paragraph{Results.}
Next, we report our computational results.
In all the tables that we present, we provide integrality gaps and computation times for pure cutting plane procedures for relaxations $\Psr$, $\PfrOne$, $\Pfr$, $\Pcr$ (all without other cutting planes), as well as SCIP with its default cutting planes.
The gap is defined as $(\textup{db} - \Psr) / (\textup{PRIMAL} - \Psr)$ in percent, where $\textup{db}$ denotes the corresponding dual bound.

We first report about results for the image restoration instances which forms the easier instance set.
In \cref{tab_image_restoration_topleft,tab_image_restoration_center,tab_image_restoration_cross} we depict the results for these instances.
First of all, the results for the three types are all very similar.
While the computation times slowly grow for larger instances, the closed gaps are almost independent of the instance parameters.
The most important observation is that $\PfrOne = \Pfr = \Pcr$ holds throughout.
In other words neither flower inequalities with two neighbors nor simple odd $\beta$-cycle inequalities were generated.
The amount of gap closed by SCIP's default cutting planes is also similar to that of $\PfrOne$, but the separation algorithms require more time.

%\begin{landscape}

\begin{table}[htpb]
  \caption{%
    Results for image restoration problems with base image of type ``topleft''.
  }
  \label{tab_image_restoration_topleft}
  {\footnotesize{%
  \begin{center}
    \setlength{\tabcolsep}{1.5mm}
    \begin{tabular}{rr|rr|r|rr|rr|rrr|rr}
      \textbf{Size} & \textbf{Pert.} & $|V|$ & $|E|$ & $\Psr$ & \multicolumn{2}{c|}{\textbf{$\PfrOne$}} & \multicolumn{2}{c|}{\textbf{$\Pfr$}} & \multicolumn{3}{c|}{\textbf{$\Pcr$}} & \multicolumn{2}{c}{SCIP} \\
      & & & & \textbf{time} & \textbf{gap} & \textbf{time} & \textbf{gap} & \textbf{time} & \textbf{gap} & \textbf{time} & \textbf{cycles} & \textbf{gap} & \textbf{time} \\ \hline
 $10 \times 10$ & none & 100 & 567 & \SI{0}{\second} & \SI{49}{\percent} & \SI{0}{\second} & \SI{49}{\percent} & \SI{0}{\second} & \SI{49}{\percent} & \SI{0}{\second} & \num{0} & \SI{53}{\percent} & \SI{5}{\second}\\
 $10 \times 10$ & all 5\% & 100 & 567 & \SI{0}{\second} & \SI{48}{\percent} & \SI{0}{\second} & \SI{48}{\percent} & \SI{0}{\second} & \SI{48}{\percent} & \SI{0}{\second} & \num{0} & \SI{54}{\percent} & \SI{6}{\second}\\
 $10 \times 10$ & 0s 50\% & 100 & 567 & \SI{0}{\second} & \SI{48}{\percent} & \SI{0}{\second} & \SI{48}{\percent} & \SI{0}{\second} & \SI{48}{\percent} & \SI{0}{\second} & \num{0} & \SI{49}{\percent} & \SI{7}{\second}\\
 $10 \times 15$ & none & 150 & 882 & \SI{0}{\second} & \SI{49}{\percent} & \SI{0}{\second} & \SI{49}{\percent} & \SI{0}{\second} & \SI{49}{\percent} & \SI{0}{\second} & \num{0} & \SI{51}{\percent} & \SI{6}{\second}\\
 $10 \times 15$ & all 5\% & 150 & 882 & \SI{0}{\second} & \SI{48}{\percent} & \SI{0}{\second} & \SI{48}{\percent} & \SI{0}{\second} & \SI{48}{\percent} & \SI{0}{\second} & \num{0} & \SI{48}{\percent} & \SI{4}{\second}\\
 $10 \times 15$ & 0s 50\% & 150 & 882 & \SI{0}{\second} & \SI{46}{\percent} & \SI{0}{\second} & \SI{46}{\percent} & \SI{0}{\second} & \SI{46}{\percent} & \SI{0}{\second} & \num{0} & \SI{45}{\percent} & \SI{9}{\second}\\
 $15 \times 15$ & none & 225 & 1372 & \SI{0}{\second} & \SI{49}{\percent} & \SI{1}{\second} & \SI{49}{\percent} & \SI{0}{\second} & \SI{49}{\percent} & \SI{1}{\second} & \num{0} & \SI{47}{\percent} & \SI{4}{\second}\\
 $15 \times 15$ & all 5\% & 225 & 1372 & \SI{0}{\second} & \SI{47}{\percent} & \SI{0}{\second} & \SI{47}{\percent} & \SI{1}{\second} & \SI{47}{\percent} & \SI{1}{\second} & \num{0} & \SI{44}{\percent} & \SI{4}{\second}\\
 $15 \times 15$ & 0s 50\% & 225 & 1372 & \SI{0}{\second} & \SI{46}{\percent} & \SI{1}{\second} & \SI{46}{\percent} & \SI{1}{\second} & \SI{46}{\percent} & \SI{1}{\second} & \num{0} & \SI{43}{\percent} & \SI{7}{\second}\\
 $15 \times 20$ & none & 300 & 1862 & \SI{0}{\second} & \SI{49}{\percent} & \SI{1}{\second} & \SI{49}{\percent} & \SI{1}{\second} & \SI{49}{\percent} & \SI{1}{\second} & \num{0} & \SI{44}{\percent} & \SI{12}{\second}\\
 $15 \times 20$ & all 5\% & 300 & 1862 & \SI{0}{\second} & \SI{48}{\percent} & \SI{1}{\second} & \SI{48}{\percent} & \SI{1}{\second} & \SI{48}{\percent} & \SI{1}{\second} & \num{0} & \SI{43}{\percent} & \SI{10}{\second}\\
 $15 \times 20$ & 0s 50\% & 300 & 1862 & \SI{0}{\second} & \SI{45}{\percent} & \SI{1}{\second} & \SI{45}{\percent} & \SI{1}{\second} & \SI{45}{\percent} & \SI{1}{\second} & \num{0} & \SI{39}{\percent} & \SI{8}{\second}\\
 $20 \times 20$ & none & 400 & 2527 & \SI{1}{\second} & \SI{49}{\percent} & \SI{2}{\second} & \SI{49}{\percent} & \SI{2}{\second} & \SI{49}{\percent} & \SI{2}{\second} & \num{0} & \SI{41}{\percent} & \SI{10}{\second}\\
 $20 \times 20$ & all 5\% & 400 & 2527 & \SI{1}{\second} & \SI{48}{\percent} & \SI{2}{\second} & \SI{48}{\percent} & \SI{2}{\second} & \SI{48}{\percent} & \SI{2}{\second} & \num{0} & \SI{39}{\percent} & \SI{16}{\second}\\
 $20 \times 20$ & 0s 50\% & 400 & 2527 & \SI{1}{\second} & \SI{45}{\percent} & \SI{2}{\second} & \SI{45}{\percent} & \SI{2}{\second} & \SI{45}{\percent} & \SI{2}{\second} & \num{0} & \SI{37}{\percent} & \SI{15}{\second}\\
 $20 \times 25$ & none & 500 & 3192 & \SI{2}{\second} & \SI{49}{\percent} & \SI{2}{\second} & \SI{49}{\percent} & \SI{3}{\second} & \SI{49}{\percent} & \SI{2}{\second} & \num{0} & \SI{44}{\percent} & \SI{14}{\second}\\
 $20 \times 25$ & all 5\% & 500 & 3192 & \SI{2}{\second} & \SI{47}{\percent} & \SI{2}{\second} & \SI{47}{\percent} & \SI{2}{\second} & \SI{47}{\percent} & \SI{3}{\second} & \num{0} & \SI{38}{\percent} & \SI{14}{\second}\\
 $20 \times 25$ & 0s 50\% & 500 & 3192 & \SI{1}{\second} & \SI{45}{\percent} & \SI{3}{\second} & \SI{45}{\percent} & \SI{3}{\second} & \SI{45}{\percent} & \SI{3}{\second} & \num{0} & \SI{35}{\percent} & \SI{22}{\second}\\
 $25 \times 25$ & none & 625 & 4032 & \SI{2}{\second} & \SI{49}{\percent} & \SI{2}{\second} & \SI{49}{\percent} & \SI{2}{\second} & \SI{49}{\percent} & \SI{3}{\second} & \num{0} & \SI{41}{\percent} & \SI{17}{\second}\\
 $25 \times 25$ & all 5\% & 625 & 4032 & \SI{2}{\second} & \SI{47}{\percent} & \SI{3}{\second} & \SI{47}{\percent} & \SI{3}{\second} & \SI{47}{\percent} & \SI{4}{\second} & \num{0} & \SI{36}{\percent} & \SI{24}{\second}\\
 $25 \times 25$ & 0s 50\% & 625 & 4032 & \SI{2}{\second} & \SI{44}{\percent} & \SI{4}{\second} & \SI{44}{\percent} & \SI{5}{\second} & \SI{44}{\percent} & \SI{5}{\second} & \num{0} & \SI{32}{\percent} & \SI{37}{\second}
    \end{tabular}
  \end{center}
  }}%
\end{table}

%\end{landscape}
%\begin{landscape}

\begin{table}[htpb]
  \caption{%
    Results for image restoration problems with base image of type ``center''.
  }
  \label{tab_image_restoration_center}
  {\footnotesize{%
  \begin{center}
    \setlength{\tabcolsep}{1.5mm}
    \begin{tabular}{rr|rr|r|rr|rr|rrr|rr}
      \textbf{Size} & \textbf{Pert.} & $|V|$ & $|E|$ & $\Psr$ & \multicolumn{2}{c|}{\textbf{$\PfrOne$}} & \multicolumn{2}{c|}{\textbf{$\Pfr$}} & \multicolumn{3}{c|}{\textbf{$\Pcr$}} & \multicolumn{2}{c}{SCIP} \\
      & & & & \textbf{time} & \textbf{gap} & \textbf{time} & \textbf{gap} & \textbf{time} & \textbf{gap} & \textbf{time} & \textbf{cycles} & \textbf{gap} & \textbf{time} \\ \hline
 $10 \times 10$ & none & 100 & 567 & \SI{0}{\second} & \SI{47}{\percent} & \SI{0}{\second} & \SI{47}{\percent} & \SI{0}{\second} & \SI{47}{\percent} & \SI{0}{\second} & \num{0} & \SI{53}{\percent} & \SI{7}{\second}\\
 $10 \times 10$ & all 5\% & 100 & 567 & \SI{0}{\second} & \SI{46}{\percent} & \SI{0}{\second} & \SI{46}{\percent} & \SI{0}{\second} & \SI{46}{\percent} & \SI{0}{\second} & \num{0} & \SI{51}{\percent} & \SI{8}{\second}\\
 $10 \times 10$ & 0s 50\% & 100 & 567 & \SI{0}{\second} & \SI{52}{\percent} & \SI{0}{\second} & \SI{52}{\percent} & \SI{0}{\second} & \SI{52}{\percent} & \SI{0}{\second} & \num{0} & \SI{56}{\percent} & \SI{6}{\second}\\
 $10 \times 15$ & none & 150 & 882 & \SI{0}{\second} & \SI{48}{\percent} & \SI{0}{\second} & \SI{48}{\percent} & \SI{0}{\second} & \SI{48}{\percent} & \SI{0}{\second} & \num{0} & \SI{52}{\percent} & \SI{4}{\second}\\
 $10 \times 15$ & all 5\% & 150 & 882 & \SI{0}{\second} & \SI{47}{\percent} & \SI{0}{\second} & \SI{47}{\percent} & \SI{0}{\second} & \SI{47}{\percent} & \SI{0}{\second} & \num{0} & \SI{51}{\percent} & \SI{4}{\second}\\
 $10 \times 15$ & 0s 50\% & 150 & 882 & \SI{0}{\second} & \SI{47}{\percent} & \SI{0}{\second} & \SI{47}{\percent} & \SI{0}{\second} & \SI{47}{\percent} & \SI{0}{\second} & \num{0} & \SI{50}{\percent} & \SI{4}{\second}\\
 $15 \times 15$ & none & 225 & 1372 & \SI{0}{\second} & \SI{48}{\percent} & \SI{0}{\second} & \SI{48}{\percent} & \SI{0}{\second} & \SI{48}{\percent} & \SI{1}{\second} & \num{0} & \SI{53}{\percent} & \SI{8}{\second}\\
 $15 \times 15$ & all 5\% & 225 & 1372 & \SI{0}{\second} & \SI{47}{\percent} & \SI{1}{\second} & \SI{47}{\percent} & \SI{1}{\second} & \SI{47}{\percent} & \SI{1}{\second} & \num{0} & \SI{47}{\percent} & \SI{14}{\second}\\
 $15 \times 15$ & 0s 50\% & 225 & 1372 & \SI{0}{\second} & \SI{48}{\percent} & \SI{1}{\second} & \SI{48}{\percent} & \SI{1}{\second} & \SI{48}{\percent} & \SI{1}{\second} & \num{0} & \SI{43}{\percent} & \SI{6}{\second}\\
 $15 \times 20$ & none & 300 & 1862 & \SI{0}{\second} & \SI{48}{\percent} & \SI{1}{\second} & \SI{48}{\percent} & \SI{1}{\second} & \SI{48}{\percent} & \SI{1}{\second} & \num{0} & \SI{44}{\percent} & \SI{6}{\second}\\
 $15 \times 20$ & all 5\% & 300 & 1862 & \SI{0}{\second} & \SI{47}{\percent} & \SI{1}{\second} & \SI{47}{\percent} & \SI{1}{\second} & \SI{47}{\percent} & \SI{1}{\second} & \num{0} & \SI{44}{\percent} & \SI{14}{\second}\\
 $15 \times 20$ & 0s 50\% & 300 & 1862 & \SI{1}{\second} & \SI{46}{\percent} & \SI{1}{\second} & \SI{46}{\percent} & \SI{1}{\second} & \SI{46}{\percent} & \SI{1}{\second} & \num{0} & \SI{39}{\percent} & \SI{9}{\second}\\
 $20 \times 20$ & none & 400 & 2527 & \SI{1}{\second} & \SI{48}{\percent} & \SI{2}{\second} & \SI{48}{\percent} & \SI{2}{\second} & \SI{48}{\percent} & \SI{2}{\second} & \num{0} & \SI{41}{\percent} & \SI{11}{\second}\\
 $20 \times 20$ & all 5\% & 400 & 2527 & \SI{1}{\second} & \SI{46}{\percent} & \SI{2}{\second} & \SI{46}{\percent} & \SI{2}{\second} & \SI{46}{\percent} & \SI{2}{\second} & \num{0} & \SI{39}{\percent} & \SI{9}{\second}\\
 $20 \times 20$ & 0s 50\% & 400 & 2527 & \SI{1}{\second} & \SI{45}{\percent} & \SI{2}{\second} & \SI{45}{\percent} & \SI{2}{\second} & \SI{45}{\percent} & \SI{2}{\second} & \num{0} & \SI{34}{\percent} & \SI{16}{\second}\\
 $20 \times 25$ & none & 500 & 3192 & \SI{2}{\second} & \SI{48}{\percent} & \SI{3}{\second} & \SI{48}{\percent} & \SI{3}{\second} & \SI{48}{\percent} & \SI{3}{\second} & \num{0} & \SI{40}{\percent} & \SI{17}{\second}\\
 $20 \times 25$ & all 5\% & 500 & 3192 & \SI{1}{\second} & \SI{47}{\percent} & \SI{3}{\second} & \SI{47}{\percent} & \SI{3}{\second} & \SI{47}{\percent} & \SI{3}{\second} & \num{0} & \SI{39}{\percent} & \SI{13}{\second}\\
 $20 \times 25$ & 0s 50\% & 500 & 3192 & \SI{2}{\second} & \SI{45}{\percent} & \SI{3}{\second} & \SI{45}{\percent} & \SI{3}{\second} & \SI{45}{\percent} & \SI{3}{\second} & \num{0} & \SI{33}{\percent} & \SI{16}{\second}\\
 $25 \times 25$ & none & 625 & 4032 & \SI{2}{\second} & \SI{48}{\percent} & \SI{3}{\second} & \SI{48}{\percent} & \SI{3}{\second} & \SI{48}{\percent} & \SI{5}{\second} & \num{0} & \SI{40}{\percent} & \SI{26}{\second}\\
 $25 \times 25$ & all 5\% & 625 & 4032 & \SI{2}{\second} & \SI{47}{\percent} & \SI{3}{\second} & \SI{47}{\percent} & \SI{3}{\second} & \SI{47}{\percent} & \SI{4}{\second} & \num{0} & \SI{40}{\percent} & \SI{30}{\second}\\
 $25 \times 25$ & 0s 50\% & 625 & 4032 & \SI{3}{\second} & \SI{45}{\percent} & \SI{4}{\second} & \SI{45}{\percent} & \SI{4}{\second} & \SI{45}{\percent} & \SI{5}{\second} & \num{0} & \SI{33}{\percent} & \SI{33}{\second}
    \end{tabular}
  \end{center}
  }}%
\end{table}

%\end{landscape}
%\begin{landscape}

\begin{table}[htpb]
  \caption{%
    Results for image restoration problems with base image of type ``cross''.
  }
  \label{tab_image_restoration_cross}
  {\footnotesize{%
  \begin{center}
    \setlength{\tabcolsep}{1.5mm}
    \begin{tabular}{rr|rr|r|rr|rr|rrr|rr}
      \textbf{Size} & \textbf{Pert.} & $|V|$ & $|E|$ & $\Psr$ & \multicolumn{2}{c|}{\textbf{$\PfrOne$}} & \multicolumn{2}{c|}{\textbf{$\Pfr$}} & \multicolumn{3}{c|}{\textbf{$\Pcr$}} & \multicolumn{2}{c}{SCIP} \\
      & & & & \textbf{time} & \textbf{gap} & \textbf{time} & \textbf{gap} & \textbf{time} & \textbf{gap} & \textbf{time} & \textbf{cycles} & \textbf{gap} & \textbf{time} \\ \hline
 $10 \times 10$ & none & 100 & 567 & \SI{0}{\second} & \SI{45}{\percent} & \SI{0}{\second} & \SI{45}{\percent} & \SI{0}{\second} & \SI{45}{\percent} & \SI{0}{\second} & \num{0} & \SI{52}{\percent} & \SI{5}{\second}\\
 $10 \times 10$ & all 5\% & 100 & 567 & \SI{0}{\second} & \SI{45}{\percent} & \SI{0}{\second} & \SI{45}{\percent} & \SI{0}{\second} & \SI{45}{\percent} & \SI{0}{\second} & \num{0} & \SI{46}{\percent} & \SI{5}{\second}\\
 $10 \times 10$ & 0s 50\% & 100 & 567 & \SI{0}{\second} & \SI{47}{\percent} & \SI{0}{\second} & \SI{47}{\percent} & \SI{0}{\second} & \SI{47}{\percent} & \SI{0}{\second} & \num{0} & \SI{48}{\percent} & \SI{7}{\second}\\
 $10 \times 15$ & none & 150 & 882 & \SI{0}{\second} & \SI{45}{\percent} & \SI{0}{\second} & \SI{45}{\percent} & \SI{0}{\second} & \SI{45}{\percent} & \SI{0}{\second} & \num{0} & \SI{43}{\percent} & \SI{4}{\second}\\
 $10 \times 15$ & all 5\% & 150 & 882 & \SI{0}{\second} & \SI{45}{\percent} & \SI{0}{\second} & \SI{45}{\percent} & \SI{0}{\second} & \SI{45}{\percent} & \SI{0}{\second} & \num{0} & \SI{46}{\percent} & \SI{8}{\second}\\
 $10 \times 15$ & 0s 50\% & 150 & 882 & \SI{0}{\second} & \SI{48}{\percent} & \SI{0}{\second} & \SI{48}{\percent} & \SI{0}{\second} & \SI{48}{\percent} & \SI{0}{\second} & \num{0} & \SI{46}{\percent} & \SI{5}{\second}\\
 $15 \times 15$ & none & 225 & 1372 & \SI{0}{\second} & \SI{46}{\percent} & \SI{1}{\second} & \SI{46}{\percent} & \SI{1}{\second} & \SI{46}{\percent} & \SI{1}{\second} & \num{0} & \SI{46}{\percent} & \SI{6}{\second}\\
 $15 \times 15$ & all 5\% & 225 & 1372 & \SI{0}{\second} & \SI{44}{\percent} & \SI{0}{\second} & \SI{44}{\percent} & \SI{0}{\second} & \SI{44}{\percent} & \SI{1}{\second} & \num{0} & \SI{42}{\percent} & \SI{7}{\second}\\
 $15 \times 15$ & 0s 50\% & 225 & 1372 & \SI{0}{\second} & \SI{45}{\percent} & \SI{1}{\second} & \SI{45}{\percent} & \SI{1}{\second} & \SI{45}{\percent} & \SI{1}{\second} & \num{0} & \SI{40}{\percent} & \SI{8}{\second}\\
 $15 \times 20$ & none & 300 & 1862 & \SI{0}{\second} & \SI{46}{\percent} & \SI{1}{\second} & \SI{46}{\percent} & \SI{1}{\second} & \SI{46}{\percent} & \SI{1}{\second} & \num{0} & \SI{40}{\percent} & \SI{13}{\second}\\
 $15 \times 20$ & all 5\% & 300 & 1862 & \SI{0}{\second} & \SI{45}{\percent} & \SI{1}{\second} & \SI{45}{\percent} & \SI{1}{\second} & \SI{45}{\percent} & \SI{1}{\second} & \num{0} & \SI{44}{\percent} & \SI{9}{\second}\\
 $15 \times 20$ & 0s 50\% & 300 & 1862 & \SI{0}{\second} & \SI{44}{\percent} & \SI{1}{\second} & \SI{44}{\percent} & \SI{1}{\second} & \SI{44}{\percent} & \SI{1}{\second} & \num{0} & \SI{38}{\percent} & \SI{12}{\second}\\
 $20 \times 20$ & none & 400 & 2527 & \SI{1}{\second} & \SI{46}{\percent} & \SI{2}{\second} & \SI{46}{\percent} & \SI{2}{\second} & \SI{46}{\percent} & \SI{2}{\second} & \num{0} & \SI{39}{\percent} & \SI{8}{\second}\\
 $20 \times 20$ & all 5\% & 400 & 2527 & \SI{1}{\second} & \SI{45}{\percent} & \SI{2}{\second} & \SI{45}{\percent} & \SI{2}{\second} & \SI{45}{\percent} & \SI{2}{\second} & \num{0} & \SI{39}{\percent} & \SI{13}{\second}\\
 $20 \times 20$ & 0s 50\% & 400 & 2527 & \SI{1}{\second} & \SI{44}{\percent} & \SI{2}{\second} & \SI{44}{\percent} & \SI{2}{\second} & \SI{44}{\percent} & \SI{2}{\second} & \num{0} & \SI{35}{\percent} & \SI{17}{\second}\\
 $20 \times 25$ & none & 500 & 3192 & \SI{1}{\second} & \SI{46}{\percent} & \SI{2}{\second} & \SI{46}{\percent} & \SI{2}{\second} & \SI{46}{\percent} & \SI{3}{\second} & \num{0} & \SI{37}{\percent} & \SI{19}{\second}\\
 $20 \times 25$ & all 5\% & 500 & 3192 & \SI{1}{\second} & \SI{45}{\percent} & \SI{2}{\second} & \SI{45}{\percent} & \SI{2}{\second} & \SI{45}{\percent} & \SI{2}{\second} & \num{0} & \SI{35}{\percent} & \SI{17}{\second}\\
 $20 \times 25$ & 0s 50\% & 500 & 3192 & \SI{1}{\second} & \SI{43}{\percent} & \SI{2}{\second} & \SI{43}{\percent} & \SI{2}{\second} & \SI{43}{\percent} & \SI{3}{\second} & \num{0} & \SI{34}{\percent} & \SI{23}{\second}\\
 $25 \times 25$ & none & 625 & 4032 & \SI{2}{\second} & \SI{47}{\percent} & \SI{3}{\second} & \SI{47}{\percent} & \SI{3}{\second} & \SI{47}{\percent} & \SI{4}{\second} & \num{0} & \SI{36}{\percent} & \SI{21}{\second}\\
 $25 \times 25$ & all 5\% & 625 & 4032 & \SI{2}{\second} & \SI{45}{\percent} & \SI{4}{\second} & \SI{45}{\percent} & \SI{4}{\second} & \SI{45}{\percent} & \SI{4}{\second} & \num{0} & \SI{36}{\percent} & \SI{25}{\second}\\
 $25 \times 25$ & 0s 50\% & 625 & 4032 & \SI{2}{\second} & \SI{43}{\percent} & \SI{4}{\second} & \SI{43}{\percent} & \SI{4}{\second} & \SI{43}{\percent} & \SI{5}{\second} & \num{0} & \SI{31}{\percent} & \SI{37}{\second}
    \end{tabular}
  \end{center}
  }}%
\end{table}

%\end{landscape}

Let us now turn to the results for low auto-correlation binary sequence instances.
These are much harder to solve, in particular because the hypergraphs are much more dense, which leads to large numbers of linearization variables.
In \cref{tab_auto_correlation1,tab_auto_correlation2,tab_auto_correlation3} we depict the results for these instances.
SCIP's general purpose cutting planes perform much worse than the flower inequalities with one neighbor.
Similar to the image restoration instances, no flower inequalities with two neighbors were generated.
In contrast to the previous instance class, our proposed simple odd $\beta$-cycle inequalities close gap in addition to the flower inequalities.
However, the running times of our separation algorithm are extremely large.
In fact, for a number of instances we cannot even compute the dual bound within one hour.
One reason for this is clearly the density of the hypergraph, which results in the large number of hyperedges.

Results for the random high-degree instances are depicted in \cref{tab_high_degree}.
It turns out that the addition of flower inequalities with two neighbors ($\Pfr$) on top of those with only one neighbor ($\PfrOne$) closes the more gap the larger the minimum degree $d$.
Moreover, in most cases the addition of simple odd $\beta$-cycle inequalities closes even more gap.
This strengthening is quite significant even for large instance sizes.
However, this comes at the cost of generating a large number of such inequalities.
For instance, for the last instance with $(n,m,d) = (200,4000,4)$ our algorithm generated (on average) \num{5466} inequalities~\eqref{eq_lin_1}, \num{7820.6} and \num{4212.6} flower inequalities (with one and two neighbors, respectively) and \num{46140.6} simple odd $\beta$-cycle inequalities, that is, a total of \num{63639.8} inequalities were generated to compute $\Pcr$ (with a gap of \SI{31}{\percent}).
This is in contrast to the computation of $\Pfr$ (with a gap of \SI{27}{\percent}) for which (on average) $\num{3599.4} + \num{7358.4} + \num{2431.8} = \num{13389.6}$ inequalities were required.
We conclude that for this problem class the addition of simple odd $\beta$-cycle inequalities has some potential, but it is unclear how to deal with the huge number of inequalities (besides its computation time).

The running time bound $\orderO(|E|^5 + |V|^2 \cdot |E|)$ in \cref{thm_simple_odd_beta_separation} and that of the implemented algorithm $\orderO(|E|^5 \cdot \log |V| + |V|^2 \cdot |E| \cdot \log |V|)$ is a worst-case bound.
The natural question is of whether this asymptotic behavior is also realized for our instances.
To this end, we considered all minimum-weight odd cycle computations that were carried out for the computation of $\Pcr$ for all random high degree instances.
Let us assume that the running time can be approximated by $C \cdot |E|^\alpha$ for constants $C$ and $\alpha$.
For each such computation we can compare the measured time to the predicted one and obtain the multiplicative error.
The ratio between the largest and the smallest such multiplicative errors is independent of $C$.
\Cref{fig_moc} shows how these errors depend on the estimated exponent $\alpha$.
It suggests that the running time of the minimum weight odd cycle calculation for \cref{thm_simple_odd_beta_separation} is proportional to $|E|^{2.6}$ in practice, which is much smaller than any of the theoretical worst-case guarantees mentioned above.

%\begin{landscape}

\begin{table}[htpb]
  \caption{%
    Results for low auto-correlation binary sequence problems (part~1).
  }
  \label{tab_auto_correlation1}
  {\small{%
    \begin{center}
    \setlength{\tabcolsep}{1mm}
    \begin{tabular}{rr|rr|r|rr|rr|rrr|rr}
      $N$ & $R$ & $|V|$ & $|E|$ & $\Psr$ & \multicolumn{2}{c|}{\textbf{$\PfrOne$}} & \multicolumn{2}{c|}{\textbf{$\Pfr$}} & \multicolumn{3}{c|}{\textbf{$\Pcr$}} & \multicolumn{2}{c}{SCIP} \\
      & & & & \textbf{time} & \textbf{gap} & \textbf{time} & \textbf{gap} & \textbf{time} & \textbf{gap} & \textbf{time} & \textbf{cycles} & \textbf{gap} & \textbf{time} \\ \hline
  15 & 2 & 0 & 0 & \SI{0}{\second} & \SI{100}{\percent} & \SI{0}{\second} & \SI{100}{\percent} & \SI{0}{\second} & \SI{100}{\percent} & \SI{0}{\second} & \num{0} & \SI{100}{\percent} & \SI{0}{\second}\\
  15 & 4 & 15 & 88 & \SI{0}{\second} & \SI{67}{\percent} & \SI{0}{\second} & \SI{67}{\percent} & \SI{0}{\second} & \SI{67}{\percent} & \SI{0}{\second} & \num{0} & \SI{60}{\percent} & \SI{3}{\second}\\
  15 & 6 & 15 & 202 & \SI{0}{\second} & \SI{64}{\percent} & \SI{0}{\second} & \SI{64}{\percent} & \SI{0}{\second} & \SI{75}{\percent} & \SI{0}{\second} & \num{420} & \SI{41}{\percent} & \SI{3}{\second}\\
  15 & 8 & 15 & 356 & \SI{0}{\second} & \SI{64}{\percent} & \SI{0}{\second} & \SI{64}{\percent} & \SI{0}{\second} & \SI{75}{\percent} & \SI{2}{\second} & \num{734} & \SI{26}{\percent} & \SI{1}{\second}\\
  15 & 10 & 15 & 513 & \SI{0}{\second} & \SI{64}{\percent} & \SI{0}{\second} & \SI{64}{\percent} & \SI{0}{\second} & \SI{75}{\percent} & \SI{5}{\second} & \num{1087} & \SI{20}{\percent} & \SI{2}{\second}\\
  15 & 12 & 15 & 642 & \SI{0}{\second} & \SI{64}{\percent} & \SI{0}{\second} & \SI{64}{\percent} & \SI{0}{\second} & \SI{74}{\percent} & \SI{18}{\second} & \num{2172} & \SI{20}{\percent} & \SI{1}{\second}\\
  15 & 14 & 15 & 734 & \SI{0}{\second} & \SI{64}{\percent} & \SI{0}{\second} & \SI{64}{\percent} & \SI{1}{\second} & \SI{74}{\percent} & \SI{15}{\second} & \num{1458} & \SI{16}{\percent} & \SI{0}{\second}\\
  15 & 15 & 15 & 753 & \SI{0}{\second} & \SI{64}{\percent} & \SI{0}{\second} & \SI{64}{\percent} & \SI{1}{\second} & \SI{74}{\percent} & \SI{24}{\second} & \num{1683} & \SI{18}{\percent} & \SI{0}{\second}\\
  20 & 3 & 20 & 18 & \SI{0}{\second} & \SI{100}{\percent} & \SI{0}{\second} & \SI{100}{\percent} & \SI{0}{\second} & \SI{100}{\percent} & \SI{0}{\second} & \num{0} & \SI{100}{\percent} & \SI{0}{\second}\\
  20 & 5 & 20 & 187 & \SI{0}{\second} & \SI{65}{\percent} & \SI{0}{\second} & \SI{65}{\percent} & \SI{0}{\second} & \SI{77}{\percent} & \SI{0}{\second} & \num{549} & \SI{50}{\percent} & \SI{1}{\second}\\
  20 & 8 & 20 & 536 & \SI{0}{\second} & \SI{64}{\percent} & \SI{0}{\second} & \SI{64}{\percent} & \SI{0}{\second} & \SI{75}{\percent} & \SI{4}{\second} & \num{1094} & \SI{22}{\percent} & \SI{1}{\second}\\
  20 & 10 & 20 & 813 & \SI{0}{\second} & \SI{64}{\percent} & \SI{0}{\second} & \SI{64}{\percent} & \SI{1}{\second} & \SI{74}{\percent} & \SI{18}{\second} & \num{1520} & \SI{15}{\percent} & \SI{1}{\second}\\
  20 & 13 & 20 & 1227 & \SI{0}{\second} & \SI{64}{\percent} & \SI{1}{\second} & \SI{64}{\percent} & \SI{2}{\second} & \SI{74}{\percent} & \SI{105}{\second} & \num{4544} & \SI{14}{\percent} & \SI{0}{\second}\\
  20 & 15 & 20 & 1474 & \SI{0}{\second} & \SI{63}{\percent} & \SI{1}{\second} & \SI{63}{\percent} & \SI{4}{\second} & \SI{74}{\percent} & \SI{115}{\second} & \num{2954} & \SI{12}{\percent} & \SI{0}{\second}\\
  20 & 18 & 20 & 1757 & \SI{0}{\second} & \SI{63}{\percent} & \SI{1}{\second} & \SI{63}{\percent} & \SI{8}{\second} & \SI{74}{\percent} & \SI{189}{\second} & \num{3577} & \SI{11}{\percent} & \SI{1}{\second}\\
  20 & 20 & 20 & 1839 & \SI{0}{\second} & \SI{63}{\percent} & \SI{2}{\second} & \SI{63}{\percent} & \SI{6}{\second} & \SI{74}{\percent} & \SI{223}{\second} & \num{3696} & \SI{12}{\percent} & \SI{1}{\second}\\
  25 & 4 & 25 & 158 & \SI{0}{\second} & \SI{67}{\percent} & \SI{0}{\second} & \SI{67}{\percent} & \SI{0}{\second} & \SI{67}{\percent} & \SI{0}{\second} & \num{0} & \SI{56}{\percent} & \SI{3}{\second}\\
  25 & 7 & 25 & 536 & \SI{0}{\second} & \SI{64}{\percent} & \SI{0}{\second} & \SI{64}{\percent} & \SI{0}{\second} & \SI{75}{\percent} & \SI{3}{\second} & \num{1121} & \SI{24}{\percent} & \SI{2}{\second}\\
  25 & 10 & 25 & 1113 & \SI{0}{\second} & \SI{64}{\percent} & \SI{1}{\second} & \SI{64}{\percent} & \SI{1}{\second} & \SI{74}{\percent} & \SI{31}{\second} & \num{2387} & \SI{17}{\percent} & \SI{1}{\second}\\
  25 & 13 & 25 & 1757 & \SI{0}{\second} & \SI{64}{\percent} & \SI{2}{\second} & \SI{64}{\percent} & \SI{5}{\second} & \SI{74}{\percent} & \SI{155}{\second} & \num{3490} & \SI{10}{\percent} & \SI{1}{\second}\\
  25 & 16 & 25 & 2429 & \SI{0}{\second} & \SI{63}{\percent} & \SI{3}{\second} & \SI{63}{\percent} & \SI{10}{\second} & \SI{74}{\percent} & \SI{443}{\second} & \num{6609} & \SI{9}{\percent} & \SI{1}{\second}\\
  25 & 19 & 25 & 3015 & \SI{1}{\second} & \SI{63}{\percent} & \SI{5}{\second} & \SI{63}{\percent} & \SI{18}{\second} & \SI{74}{\percent} & \SI{774}{\second} & \num{6064} & \SI{8}{\percent} & \SI{2}{\second}\\
  25 & 22 & 25 & 3455 & \SI{1}{\second} & \SI{63}{\percent} & \SI{6}{\second} & \SI{63}{\percent} & \SI{15}{\second} & \SI{74}{\percent} & \SI{1129}{\second} & \num{7081} & \SI{8}{\percent} & \SI{2}{\second}\\
  25 & 25 & 25 & 3652 & \SI{1}{\second} & \SI{63}{\percent} & \SI{7}{\second} & \SI{63}{\percent} & \SI{28}{\second} & \SI{74}{\percent} & \SI{1369}{\second} & \num{7469} & \SI{9}{\percent} & \SI{2}{\second}
    \end{tabular}
  \end{center}
  }}%
\end{table}

%\end{landscape}
%\begin{landscape}

\begin{table}[htpb]
  \caption{%
    Results for low auto-correlation binary sequence problems (part~2).
  }
  \label{tab_auto_correlation2}
  {\small{%
    \begin{center}
    \setlength{\tabcolsep}{1mm}
    \begin{tabular}{rr|rr|r|rr|rr|rrr|rr}
      $N$ & $R$ & $|V|$ & $|E|$ & $\Psr$ & \multicolumn{2}{c|}{\textbf{$\PfrOne$}} & \multicolumn{2}{c|}{\textbf{$\Pfr$}} & \multicolumn{3}{c|}{\textbf{$\Pcr$}} & \multicolumn{2}{c}{SCIP} \\
      & & & & \textbf{time} & \textbf{gap} & \textbf{time} & \textbf{gap} & \textbf{time} & \textbf{gap} & \textbf{time} & \textbf{cycles} & \textbf{gap} & \textbf{time} \\ \hline
  30 & 4 & 30 & 193 & \SI{0}{\second} & \SI{67}{\percent} & \SI{0}{\second} & \SI{67}{\percent} & \SI{0}{\second} & \SI{67}{\percent} & \SI{0}{\second} & \num{0} & \SI{59}{\percent} & \SI{5}{\second}\\
  30 & 8 & 30 & 896 & \SI{0}{\second} & \SI{64}{\percent} & \SI{0}{\second} & \SI{64}{\percent} & \SI{1}{\second} & \SI{75}{\percent} & \SI{11}{\second} & \num{1892} & \SI{23}{\percent} & \SI{1}{\second}\\
  30 & 12 & 30 & 1977 & \SI{0}{\second} & \SI{64}{\percent} & \SI{3}{\second} & \SI{64}{\percent} & \SI{5}{\second} & \SI{74}{\percent} & \SI{208}{\second} & \num{3774} & \SI{10}{\percent} & \SI{1}{\second}\\
  30 & 15 & 30 & 2914 & \SI{1}{\second} & \SI{63}{\percent} & \SI{7}{\second} & \SI{63}{\percent} & \SI{18}{\second} & \SI{74}{\percent} & \SI{738}{\second} & \num{8109} & \SI{8}{\percent} & \SI{2}{\second}\\
  30 & 19 & 30 & 4215 & \SI{1}{\second} & \SI{63}{\percent} & \SI{10}{\second} & \SI{63}{\percent} & \SI{23}{\second} & \SI{74}{\percent} & \SI{1980}{\second} & \num{10486} & \SI{6}{\percent} & \SI{3}{\second}\\
  30 & 23 & 30 & 5346 & \SI{2}{\second} & \SI{63}{\percent} & \SI{16}{\second} & \SI{63}{\percent} & \SI{38}{\second} & \SI{72}{\percent} & \timeout & \num{17361} & \SI{5}{\percent} & \SI{3}{\second}\\
  30 & 27 & 30 & 6133 & \SI{3}{\second} & \SI{63}{\percent} & \SI{29}{\second} & \SI{63}{\percent} & \SI{98}{\second} & \SI{72}{\percent} & \timeout & \num{12346} & \SI{5}{\percent} & \SI{4}{\second}\\
  30 & 30 & 30 & 6382 & \SI{3}{\second} & \SI{63}{\percent} & \SI{27}{\second} & \SI{63}{\percent} & \SI{103}{\second} & \SI{74}{\percent} & \timeout & \num{8173} & \SI{6}{\percent} & \SI{4}{\second}\\
  35 & 5 & 35 & 352 & \SI{0}{\second} & \SI{65}{\percent} & \SI{0}{\second} & \SI{65}{\percent} & \SI{0}{\second} & \SI{77}{\percent} & \SI{1}{\second} & \num{709} & \SI{41}{\percent} & \SI{6}{\second}\\
  35 & 9 & 35 & 1346 & \SI{0}{\second} & \SI{64}{\percent} & \SI{1}{\second} & \SI{64}{\percent} & \SI{1}{\second} & \SI{75}{\percent} & \SI{33}{\second} & \num{2938} & \SI{16}{\percent} & \SI{1}{\second}\\
  35 & 14 & 35 & 3235 & \SI{1}{\second} & \SI{63}{\percent} & \SI{7}{\second} & \SI{63}{\percent} & \SI{16}{\second} & \SI{74}{\percent} & \SI{708}{\second} & \num{9130} & \SI{8}{\percent} & \SI{2}{\second}\\
  35 & 18 & 35 & 4967 & \SI{2}{\second} & \SI{63}{\percent} & \SI{18}{\second} & \SI{63}{\percent} & \SI{47}{\second} & \SI{74}{\percent} & \SI{3031}{\second} & \num{15969} & \SI{4}{\percent} & \SI{3}{\second}\\
  35 & 22 & 35 & 6738 & \SI{3}{\second} & \SI{63}{\percent} & \SI{32}{\second} & \SI{63}{\percent} & \SI{96}{\second} & \SI{72}{\percent} & \timeout & \num{13426} & \SI{4}{\percent} & \SI{5}{\second}\\
  35 & 27 & 35 & 8645 & \SI{4}{\second} & \SI{63}{\percent} & \SI{70}{\second} & \SI{63}{\percent} & \SI{132}{\second} & \SI{69}{\percent} & \timeout & \num{13302} & \SI{4}{\percent} & \SI{7}{\second}\\
  35 & 31 & 35 & 9744 & \SI{5}{\second} & \SI{63}{\percent} & \SI{70}{\second} & \SI{63}{\percent} & \SI{221}{\second} & \SI{63}{\percent} & \timeout & \num{6219} & \SI{3}{\percent} & \SI{8}{\second}\\
  35 & 35 & 35 & 10217 & \SI{6}{\second} & \SI{63}{\percent} & \SI{169}{\second} & \SI{63}{\percent} & \SI{250}{\second} & \SI{63}{\percent} & \timeout & \num{6529} & \SI{5}{\percent} & \SI{9}{\second}\\
  40 & 5 & 40 & 407 & \SI{0}{\second} & \SI{65}{\percent} & \SI{0}{\second} & \SI{65}{\percent} & \SI{0}{\second} & \SI{77}{\percent} & \SI{1}{\second} & \num{700} & \SI{43}{\percent} & \SI{4}{\second}\\
  40 & 10 & 40 & 2013 & \SI{0}{\second} & \SI{64}{\percent} & \SI{2}{\second} & \SI{64}{\percent} & \SI{3}{\second} & \SI{74}{\percent} & \SI{124}{\second} & \num{4331} & \SI{13}{\percent} & \SI{1}{\second}\\
  40 & 15 & 40 & 4354 & \SI{1}{\second} & \SI{63}{\percent} & \SI{13}{\second} & \SI{63}{\percent} & \SI{30}{\second} & \SI{74}{\percent} & \SI{1118}{\second} & \num{8503} & \SI{8}{\percent} & \SI{3}{\second}\\
  40 & 20 & 40 & 7203 & \SI{3}{\second} & \SI{63}{\percent} & \SI{42}{\second} & \SI{63}{\percent} & \SI{105}{\second} & \SI{72}{\percent} & \timeout & \num{14235} & \SI{3}{\percent} & \SI{5}{\second}\\
  40 & 25 & 40 & 10082 & \SI{6}{\second} & \SI{63}{\percent} & \SI{108}{\second} & \SI{63}{\percent} & \SI{179}{\second} & \SI{63}{\percent} & \SI{3850}{\second} & \num{6404} & \SI{3}{\percent} & \SI{8}{\second}\\
  40 & 30 & 40 & 12650 & \SI{9}{\second} & \SI{62}{\percent} & \SI{208}{\second} & \SI{62}{\percent} & \SI{455}{\second} & \SI{62}{\percent} & \timeout & \num{0} & \SI{2}{\percent} & \SI{13}{\second}\\
  40 & 35 & 40 & 14533 & \SI{16}{\second} & \SI{62}{\percent} & \SI{366}{\second} & \SI{62}{\percent} & \SI{744}{\second} & \SI{62}{\percent} & \timeout & \num{0} & \SI{2}{\percent} & \SI{18}{\second}\\
  40 & 40 & 40 & 15344 & \SI{16}{\second} & \SI{62}{\percent} & \SI{276}{\second} & \SI{62}{\percent} & \SI{659}{\second} & \SI{62}{\percent} & \timeout & \num{0} & \SI{4}{\percent} & \SI{22}{\second}
    \end{tabular}
  \end{center}
  }}%
\end{table}

%\end{landscape}
%\begin{landscape}

\begin{table}[htpb]
  \caption{%
    Results for low auto-correlation binary sequence problems (part~3).
  }
  \label{tab_auto_correlation3}
  {\small{%
    \begin{center}
    \setlength{\tabcolsep}{1mm}
    \begin{tabular}{rr|rr|r|rr|rr|rrr|rr}
      $N$ & $R$ & $|V|$ & $|E|$ & $\Psr$ & \multicolumn{2}{c|}{\textbf{$\PfrOne$}} & \multicolumn{2}{c|}{\textbf{$\Pfr$}} & \multicolumn{3}{c|}{\textbf{$\Pcr$}} & \multicolumn{2}{c}{SCIP} \\
      & & & & \textbf{time} & \textbf{gap} & \textbf{time} & \textbf{gap} & \textbf{time} & \textbf{gap} & \textbf{time} & \textbf{cycles} & \textbf{gap} & \textbf{time} \\ \hline
  45 & 6 & 45 & 742 & \SI{0}{\second} & \SI{64}{\percent} & \SI{0}{\second} & \SI{64}{\percent} & \SI{0}{\second} & \SI{75}{\percent} & \SI{3}{\second} & \num{1202} & \SI{29}{\percent} & \SI{1}{\second}\\
  45 & 12 & 45 & 3312 & \SI{1}{\second} & \SI{63}{\percent} & \SI{7}{\second} & \SI{63}{\percent} & \SI{10}{\second} & \SI{74}{\percent} & \SI{488}{\second} & \num{6635} & \SI{6}{\percent} & \SI{2}{\second}\\
  45 & 17 & 45 & 6407 & \SI{3}{\second} & \SI{63}{\percent} & \SI{33}{\second} & \SI{63}{\percent} & \SI{70}{\second} & \SI{72}{\percent} & \timeout & \num{18016} & \SI{5}{\percent} & \SI{5}{\second}\\
  45 & 23 & 45 & 10731 & \SI{7}{\second} & \SI{63}{\percent} & \SI{141}{\second} & \SI{63}{\percent} & \SI{284}{\second} & \SI{63}{\percent} & \timeout & \num{6794} & \SI{2}{\percent} & \SI{11}{\second}\\
  45 & 29 & 45 & 15092 & \SI{17}{\second} & \SI{62}{\percent} & \SI{382}{\second} & \SI{62}{\percent} & \SI{742}{\second} & \SI{62}{\percent} & \timeout & \num{0} & \SI{2}{\percent} & \SI{20}{\second}\\
  45 & 34 & 45 & 18303 & \SI{21}{\second} & \SI{63}{\percent} & \SI{559}{\second} & \SI{63}{\percent} & \SI{1088}{\second} & \SI{63}{\percent} & \timeout & \num{0} & \SI{2}{\percent} & \SI{26}{\second}\\
  45 & 40 & 45 & 21033 & \SI{32}{\second} & \SI{61}{\percent} & \SI{617}{\second} & \SI{61}{\percent} & \SI{1335}{\second} & \SI{61}{\percent} & \timeout & \num{0} & \SI{1}{\percent} & \SI{40}{\second}\\
  45 & 45 & 45 & 21948 & \SI{36}{\second} & \SI{63}{\percent} & \SI{675}{\second} & \SI{63}{\percent} & \SI{1066}{\second} & \SI{63}{\percent} & \timeout & \num{0} & \SI{2}{\percent} & \SI{44}{\second}\\
  50 & 7 & 50 & 1186 & \SI{0}{\second} & \SI{64}{\percent} & \SI{1}{\second} & \SI{64}{\percent} & \SI{1}{\second} & \SI{75}{\percent} & \SI{8}{\second} & \num{1524} & \SI{22}{\percent} & \SI{2}{\second}\\
  50 & 13 & 50 & 4407 & \SI{1}{\second} & \SI{63}{\percent} & \SI{12}{\second} & \SI{63}{\percent} & \SI{17}{\second} & \SI{74}{\percent} & \SI{1047}{\second} & \num{8590} & \SI{8}{\percent} & \SI{3}{\second}\\
  50 & 19 & 50 & 9015 & \SI{5}{\second} & \SI{63}{\percent} & \SI{68}{\second} & \SI{63}{\percent} & \SI{145}{\second} & \SI{70}{\percent} & \timeout & \num{14068} & \SI{4}{\percent} & \SI{8}{\second}\\
  50 & 25 & 50 & 14362 & \SI{17}{\second} & \SI{61}{\percent} & \SI{274}{\second} & \SI{61}{\percent} & \SI{542}{\second} & \SI{61}{\percent} & \timeout & \num{0} & \SI{2}{\percent} & \SI{19}{\second}\\
  50 & 32 & 50 & 20674 & \SI{30}{\second} & \SI{61}{\percent} & \SI{806}{\second} & \SI{61}{\percent} & \SI{1495}{\second} & \SI{61}{\percent} & \timeout & \num{0} & \SI{1}{\percent} & \SI{38}{\second}\\
  50 & 38 & 50 & 25396 & \SI{43}{\second} & \SI{61}{\percent} & \SI{1205}{\second} & \SI{61}{\percent} & \SI{2222}{\second} & \SI{61}{\percent} & \timeout & \num{0} & \SI{2}{\percent} & \SI{57}{\second}\\
  50 & 44 & 50 & 28797 & \SI{63}{\second} & \SI{61}{\percent} & \SI{1762}{\second} & \SI{61}{\percent} & \SI{3091}{\second} & \SI{61}{\percent} & \timeout & \num{0} & \SI{1}{\percent} & \SI{77}{\second}\\
  50 & 50 & 50 & 30221 & \SI{83}{\second} & \SI{60}{\percent} & \SI{1828}{\second} & \SI{60}{\percent} & \SI{2584}{\second} & \SI{60}{\percent} & \timeout & \num{0} & \SI{2}{\percent} & \SI{93}{\second}\\
  55 & 7 & 55 & 1316 & \SI{0}{\second} & \SI{64}{\percent} & \SI{1}{\second} & \SI{64}{\percent} & \SI{1}{\second} & \SI{75}{\percent} & \SI{9}{\second} & \num{1646} & \SI{19}{\percent} & \SI{1}{\second}\\
  55 & 14 & 55 & 5735 & \SI{2}{\second} & \SI{63}{\percent} & \SI{20}{\second} & \SI{63}{\percent} & \SI{38}{\second} & \SI{74}{\percent} & \SI{2688}{\second} & \num{16754} & \SI{4}{\percent} & \SI{3}{\second}\\
  55 & 21 & 55 & 12203 & \SI{10}{\second} & \SI{63}{\percent} & \SI{193}{\second} & \SI{63}{\percent} & \SI{344}{\second} & \SI{63}{\percent} & \timeout & \num{7693} & \SI{2}{\percent} & \SI{13}{\second}\\
  55 & 28 & 55 & 19592 & \SI{32}{\second} & \SI{43}{\percent} & \SI{331}{\second} & \SI{43}{\percent} & \SI{332}{\second} & \SI{43}{\percent} & \SI{388}{\second} & \num{0} & \SI{2}{\percent} & \SI{36}{\second}\\
  55 & 35 & 55 & 27202 & \SI{60}{\second} & \SI{46}{\percent} & \SI{876}{\second} & \SI{46}{\percent} & \SI{878}{\second} & \SI{46}{\percent} & \SI{1029}{\second} & \num{0} & \SI{1}{\percent} & \SI{74}{\second}\\
  55 & 42 & 55 & 33867 & \SI{106}{\second} & \SI{38}{\percent} & \timeout & \SI{38}{\percent} & \timeout & \SI{38}{\percent} & \timeout & \num{0} & \SI{1}{\percent} & \SI{119}{\second}\\
  55 & 49 & 55 & 38515 & \SI{183}{\second} & \SI{54}{\percent} & \SI{3390}{\second} & \SI{54}{\percent} & \SI{3347}{\second} & \SI{54}{\percent} & \SI{3544}{\second} & \num{0} & \SI{1}{\percent} & \SI{154}{\second}\\
  55 & 55 & 55 & 40087 & \SI{172}{\second} & \SI{52}{\percent} & \SI{3118}{\second} & \SI{52}{\percent} & \SI{3063}{\second} & \SI{52}{\percent} & \SI{3581}{\second} & \num{0} & \SI{2}{\percent} & \SI{195}{\second}
    \end{tabular}
  \end{center}
  }}%
\end{table}

%\end{landscape}

\begin{landscape}

\begin{table}[htpb]
  \caption{%
    Results for random high-degree instances.
    Integrality gaps and computation times are averaged over 5 random instances.
  }
  \label{tab_high_degree}
  {\footnotesize{%
  \begin{center}
    \setlength{\tabcolsep}{2mm}
    \begin{tabular}{rrr|rr|r|rr|rr|rrr|rr}
      $n$ & $m$ & $d$ & $|V|$ & $|E|$ & $\Psr$ & \multicolumn{2}{c|}{\textbf{$\PfrOne$}} & \multicolumn{2}{c|}{\textbf{$\Pfr$}} & \multicolumn{3}{c|}{\textbf{$\Pcr$}} & \multicolumn{2}{c}{SCIP} \\
      & & & & & \textbf{time} & \textbf{gap} & \textbf{time} & \textbf{gap} & \textbf{time} & \textbf{gap} & \textbf{time} & \textbf{cycles} & \textbf{gap} & \textbf{time} \\ \hline
  50 & 250 & 2 & 50 & 237.2 & \SI{0}{\second} & \SI{44}{\percent} & \SI{0}{\second} & \SI{45}{\percent} & \SI{0}{\second} & \SI{98}{\percent} & \SI{1}{\second} & \num{2748} & \SI{55}{\percent} & \SI{2}{\second}\\
  50 & 250 & 3 & 50 & 240.0 & \SI{0}{\second} & \SI{45}{\percent} & \SI{0}{\second} & \SI{54}{\percent} & \SI{0}{\second} & \SI{85}{\percent} & \SI{3}{\second} & \num{3291} & \SI{20}{\percent} & \SI{3}{\second}\\
  50 & 250 & 4 & 50 & 238.4 & \SI{0}{\second} & \SI{32}{\percent} & \SI{0}{\second} & \SI{57}{\percent} & \SI{0}{\second} & \SI{73}{\percent} & \SI{4}{\second} & \num{2893} & \SI{9}{\percent} & \SI{3}{\second}\\
  50 & 500 & 2 & 50 & 475.2 & \SI{0}{\second} & \SI{41}{\percent} & \SI{0}{\second} & \SI{41}{\percent} & \SI{0}{\second} & \SI{88}{\percent} & \SI{13}{\second} & \num{7287} & \SI{26}{\percent} & \SI{3}{\second}\\
  50 & 500 & 3 & 50 & 474.4 & \SI{0}{\second} & \SI{43}{\percent} & \SI{0}{\second} & \SI{47}{\percent} & \SI{0}{\second} & \SI{81}{\percent} & \SI{23}{\second} & \num{8610} & \SI{9}{\percent} & \SI{3}{\second}\\
  50 & 500 & 4 & 50 & 475.2 & \SI{0}{\second} & \SI{30}{\percent} & \SI{0}{\second} & \SI{62}{\percent} & \SI{1}{\second} & \SI{76}{\percent} & \SI{24}{\second} & \num{6309} & \SI{4}{\percent} & \SI{3}{\second}\\
  50 & 1000 & 2 & 50 & 955.6 & \SI{0}{\second} & \SI{44}{\percent} & \SI{1}{\second} & \SI{44}{\percent} & \SI{1}{\second} & \SI{84}{\percent} & \SI{127}{\second} & \num{17028} & \SI{13}{\percent} & \SI{2}{\second}\\
  50 & 1000 & 3 & 50 & 950.0 & \SI{0}{\second} & \SI{41}{\percent} & \SI{1}{\second} & \SI{43}{\percent} & \SI{1}{\second} & \SI{79}{\percent} & \SI{187}{\second} & \num{16431} & \SI{5}{\percent} & \SI{2}{\second}\\
  50 & 1000 & 4 & 50 & 951.0 & \SI{0}{\second} & \SI{29}{\percent} & \SI{1}{\second} & \SI{55}{\percent} & \SI{2}{\second} & \SI{55}{\percent} & \SI{2}{\second} & \num{0} & \SI{2}{\percent} & \SI{2}{\second}\\
  100 & 500 & 2 & 100 & 476.2 & \SI{0}{\second} & \SI{30}{\percent} & \SI{0}{\second} & \SI{31}{\percent} & \SI{0}{\second} & \SI{86}{\percent} & \SI{6}{\second} & \num{5491} & \SI{43}{\percent} & \SI{2}{\second}\\
  100 & 500 & 3 & 100 & 478.4 & \SI{0}{\second} & \SI{29}{\percent} & \SI{0}{\second} & \SI{31}{\percent} & \SI{0}{\second} & \SI{67}{\percent} & \SI{13}{\second} & \num{6758} & \SI{9}{\percent} & \SI{2}{\second}\\
  100 & 500 & 4 & 100 & 473.2 & \SI{0}{\second} & \SI{25}{\percent} & \SI{0}{\second} & \SI{39}{\percent} & \SI{0}{\second} & \SI{44}{\percent} & \SI{6}{\second} & \num{2008} & \SI{4}{\percent} & \SI{2}{\second}\\
  100 & 1000 & 2 & 100 & 952.0 & \SI{0}{\second} & \SI{25}{\percent} & \SI{1}{\second} & \SI{26}{\percent} & \SI{1}{\second} & \SI{75}{\percent} & \SI{68}{\second} & \num{16234} & \SI{16}{\percent} & \SI{6}{\second}\\
  100 & 1000 & 3 & 100 & 941.8 & \SI{0}{\second} & \SI{28}{\percent} & \SI{1}{\second} & \SI{31}{\percent} & \SI{1}{\second} & \SI{58}{\percent} & \SI{103}{\second} & \num{16259} & \SI{3}{\percent} & \SI{4}{\second}\\
  100 & 1000 & 4 & 100 & 954.0 & \SI{0}{\second} & \SI{19}{\percent} & \SI{1}{\second} & \SI{35}{\percent} & \SI{2}{\second} & \SI{50}{\percent} & \SI{138}{\second} & \num{16464} & \SI{1}{\percent} & \SI{4}{\second}\\
  100 & 2000 & 2 & 100 & 1903.6 & \SI{1}{\second} & \SI{28}{\percent} & \SI{2}{\second} & \SI{28}{\percent} & \SI{2}{\second} & \SI{72}{\percent} & \SI{796}{\second} & \num{41560} & \SI{9}{\percent} & \SI{6}{\second}\\
  100 & 2000 & 3 & 100 & 1905.0 & \SI{1}{\second} & \SI{32}{\percent} & \SI{4}{\second} & \SI{35}{\percent} & \SI{4}{\second} & \SI{60}{\percent} & \SI{1341}{\second} & \num{52987} & \SI{2}{\percent} & \SI{6}{\second}\\
  100 & 2000 & 4 & 100 & 1902.8 & \SI{1}{\second} & \SI{19}{\percent} & \SI{3}{\second} & \SI{45}{\percent} & \SI{19}{\second} & \SI{57}{\percent} & \SI{1275}{\second} & \num{39995} & \SI{1}{\percent} & \SI{6}{\second}\\
  200 & 1000 & 2 & 200 & 949.0 & \SI{1}{\second} & \SI{11}{\percent} & \SI{1}{\second} & \SI{11}{\percent} & \SI{1}{\second} & \SI{73}{\percent} & \SI{35}{\second} & \num{14059} & \SI{27}{\percent} & \SI{2}{\second}\\
  200 & 1000 & 3 & 200 & 954.4 & \SI{1}{\second} & \SI{15}{\percent} & \SI{1}{\second} & \SI{16}{\percent} & \SI{1}{\second} & \SI{46}{\percent} & \SI{55}{\second} & \num{14004} & \SI{4}{\percent} & \SI{4}{\second}\\
  200 & 1000 & 4 & 200 & 950.2 & \SI{1}{\second} & \SI{18}{\percent} & \SI{2}{\second} & \SI{20}{\percent} & \SI{2}{\second} & \SI{37}{\percent} & \SI{86}{\second} & \num{12595} & \SI{1}{\percent} & \SI{3}{\second}\\
  200 & 2000 & 2 & 200 & 1903.8 & \SI{3}{\second} & \SI{12}{\percent} & \SI{3}{\second} & \SI{12}{\percent} & \SI{3}{\second} & \SI{58}{\percent} & \SI{400}{\second} & \num{36450} & \SI{9}{\percent} & \SI{6}{\second}\\
  200 & 2000 & 3 & 200 & 1903.8 & \SI{3}{\second} & \SI{17}{\percent} & \SI{4}{\second} & \SI{18}{\percent} & \SI{4}{\second} & \SI{44}{\percent} & \SI{719}{\second} & \num{39039} & \SI{1}{\percent} & \SI{8}{\second}\\
  200 & 2000 & 4 & 200 & 1911.0 & \SI{2}{\second} & \SI{15}{\percent} & \SI{5}{\second} & \SI{22}{\percent} & \SI{8}{\second} & \SI{23}{\percent} & \SI{48}{\second} & \num{2614} & \SI{0}{\percent} & \SI{8}{\second}\\
  200 & 4000 & 2 & 200 & 3806.0 & \SI{9}{\second} & \SI{15}{\percent} & \SI{12}{\second} & \SI{15}{\percent} & \SI{12}{\second} & \SI{48}{\percent} & \timeout & \num{90481} & \SI{5}{\percent} & \SI{22}{\second}\\
  200 & 4000 & 3 & 200 & 3809.4 & \SI{9}{\second} & \SI{21}{\percent} & \SI{19}{\second} & \SI{23}{\percent} & \SI{21}{\second} & \SI{37}{\percent} & \timeout & \num{64306} & \SI{1}{\percent} & \SI{19}{\second}\\
  200 & 4000 & 4 & 200 & 3812.0 & \SI{8}{\second} & \SI{14}{\percent} & \SI{18}{\second} & \SI{27}{\percent} & \SI{51}{\second} & \SI{31}{\percent} & \SI{2714}{\second} & \num{46141} & \SI{0}{\percent} & \SI{13}{\second}
    \end{tabular}
  \end{center}
  }}%
\end{table}

\end{landscape}

\begin{figure}
  \begin{center}
    \begin{tikzpicture}
      \begin{semilogyaxis}[
          grid,
          xlabel={Exponent $\alpha$ of estimation $t_i \sim |E_i|^\alpha$},
          ylabel={Ratio $\max \left( t_i \cdot |E_i|^{-\alpha} \right) / \min \left( t_i \cdot |E_i|^{-\alpha} \right)$},
          ymin=9,
          width=0.9\textwidth,
          height=0.7\textwidth,
        ]
        
        \addplot[thin] coordinates {
          (1.0,246.115) (1.01,241.000) (1.02,235.992) (1.03,231.088) (1.04,226.286) (1.05,221.584) (1.06,216.979) (1.07,212.470) (1.08,208.055) (1.09,203.732)
          (1.1,199.498) (1.11,195.353) (1.12,191.293) (1.13,187.318) (1.14,183.425) (1.15,179.614) (1.16,175.881) (1.17,172.226) (1.18,168.647) (1.19,165.143)
          (1.2,161.711) (1.21,158.351) (1.22,155.060) (1.23,151.838) (1.24,148.683) (1.25,145.593) (1.26,142.568) (1.27,139.605) (1.28,136.704) (1.29,133.863)
          (1.3,131.081) (1.31,128.358) (1.32,125.690) (1.33,123.078) (1.34,120.521) (1.35,118.016) (1.36,115.564) (1.37,113.162) (1.38,110.811) (1.39,108.508)
          (1.4,106.253) (1.41,104.045) (1.42,101.883) (1.43,99.766) (1.44,97.693) (1.45,95.663) (1.46,93.675) (1.47,91.728) (1.48,89.822) (1.49,87.956)
          (1.5,86.128) (1.51,84.338) (1.52,82.585) (1.53,80.869) (1.54,79.189) (1.55,77.543) (1.56,75.932) (1.57,74.354) (1.58,72.809) (1.59,71.296)
          (1.6,69.814) (1.61,68.364) (1.62,66.943) (1.63,65.552) (1.64,64.239) (1.65,63.341) (1.66,62.456) (1.67,61.584) (1.68,60.724) (1.69,59.875)
          (1.7,59.039) (1.71,58.214) (1.72,57.401) (1.73,56.599) (1.74,55.808) (1.75,55.029) (1.76,54.260) (1.77,53.502) (1.78,52.755) (1.79,52.018)
          (1.8,51.291) (1.81,50.575) (1.82,49.868) (1.83,49.172) (1.84,48.485) (1.85,47.807) (1.86,47.140) (1.87,46.481) (1.88,45.832) (1.89,45.191)
          (1.9,44.560) (1.91,43.938) (1.92,43.324) (1.93,42.719) (1.94,42.122) (1.95,41.534) (1.96,40.953) (1.97,40.381) (1.98,39.817) (1.99,39.261)
          (2.0,38.712) (2.01,38.172) (2.02,37.638) (2.03,37.113) (2.04,36.594) (2.05,36.083) (2.06,35.579) (2.07,35.082) (2.08,34.592) (2.09,34.109)
          (2.1,33.632) (2.11,33.162) (2.12,32.699) (2.13,32.260) (2.14,32.034) (2.15,31.810) (2.16,31.588) (2.17,31.368) (2.18,31.148) (2.19,30.931)
          (2.2,30.715) (2.21,30.500) (2.22,30.287) (2.23,30.075) (2.24,29.865) (2.25,29.657) (2.26,29.450) (2.27,29.244) (2.28,29.039) (2.29,28.837)
          (2.3,28.635) (2.31,28.435) (2.32,28.236) (2.33,28.039) (2.34,27.843) (2.35,27.649) (2.36,27.456) (2.37,27.264) (2.38,27.073) (2.39,26.884)
          (2.4,26.696) (2.41,26.510) (2.42,26.325) (2.43,26.141) (2.44,25.958) (2.45,25.777) (2.46,25.597) (2.47,25.418) (2.48,25.240) (2.49,25.064)
          (2.5,24.889) (2.51,24.715) (2.52,24.554) (2.53,24.552) (2.54,24.550) (2.55,24.547) (2.56,24.545) (2.57,24.543) (2.58,24.541) (2.59,24.539)
          (2.6,24.537) (2.61,24.535) (2.62,24.533) (2.63,24.531) (2.64,24.529) (2.65,24.527) (2.66,24.525) (2.67,24.635) (2.68,24.803) (2.69,25.053)
          (2.7,25.397) (2.71,25.747) (2.72,26.102) (2.73,26.461) (2.74,26.825) (2.75,27.195) (2.76,27.569) (2.77,27.948) (2.78,28.333) (2.79,28.723)
          (2.8,29.119) (2.81,29.520) (2.82,29.926) (2.83,30.338) (2.84,30.756) (2.85,31.179) (2.86,31.609) (2.87,32.044) (2.88,32.485) (2.89,32.932)
          (2.9,33.385) (2.91,33.845) (2.92,34.311) (2.93,34.783) (2.94,35.262) (2.95,35.748) (2.96,36.240) (2.97,36.739) (2.98,37.245) (2.99,37.757)
          (3.0,38.277) (3.01,38.804) (3.02,39.339) (3.03,39.880) (3.04,40.429) (3.05,40.986) (3.06,41.550) (3.07,42.122) (3.08,42.702) (3.09,43.290)
          (3.1,43.886) (3.11,44.490) (3.12,45.103) (3.13,45.724) (3.14,46.353) (3.15,46.991) (3.16,47.638) (3.17,48.294) (3.18,49.122) (3.19,50.147)
          (3.2,51.194) (3.21,52.263) (3.22,53.354) (3.23,54.468) (3.24,55.605) (3.25,56.766) (3.26,57.952) (3.27,59.162) (3.28,60.397) (3.29,61.658)
          (3.3,62.945) (3.31,64.259) (3.32,65.601) (3.33,66.970) (3.34,68.369) (3.35,69.796) (3.36,71.253) (3.37,72.741) (3.38,74.260) (3.39,75.810)
          (3.4,77.393) (3.41,79.009) (3.42,80.658) (3.43,82.342) (3.44,84.062) (3.45,85.817) (3.46,87.608) (3.47,89.437) (3.48,91.305) (3.49,93.211)
          (3.5,95.157) (3.51,97.144) (3.52,99.172) (3.53,101.243) (3.54,103.356) (3.55,105.514) (3.56,107.717) (3.57,109.966) (3.58,112.262) (3.59,114.606)
          (3.6,117.487) (3.61,120.784) (3.62,124.174) (3.63,127.658) (3.64,131.241) (3.65,134.924) (3.66,138.710) (3.67,142.603) (3.68,146.604) (3.69,150.719)
          (3.7,154.948) (3.71,159.296) (3.72,163.767) (3.73,168.363) (3.74,173.087) (3.75,177.945) (3.76,182.938) (3.77,188.072) (3.78,193.350) (3.79,198.776)
          (3.8,204.354) (3.81,210.089) (3.82,215.984) (3.83,222.046) (3.84,228.277) (3.85,234.683) (3.86,241.272) (3.87,248.085) (3.88,255.090) (3.89,262.292)
          (3.9,269.698) (3.91,277.313) (3.92,285.143) (3.93,293.195) (3.94,301.473) (3.95,309.985) (3.96,318.738) (3.97,327.738) (3.98,336.992) (3.99,346.507)
          (4.0,356.291) (4.01,366.351) (4.02,376.695) (4.03,387.331) (4.04,398.268) (4.05,409.513) (4.06,421.076) (4.07,432.965) (4.08,445.190) (4.09,457.761)
          (4.1,470.686) (4.11,483.976) (4.12,497.641) (4.13,511.692) (4.14,526.140) (4.15,540.996) (4.16,556.272) (4.17,571.978) (4.18,588.128) (4.19,604.735)
          (4.2,621.810) (4.21,639.367) (4.22,657.420) (4.23,675.983) (4.24,695.069) (4.25,714.695) (4.26,734.875) (4.27,755.625) (4.28,776.960) (4.29,798.898)
          (4.3,821.455) (4.31,844.650) (4.32,868.499) (4.33,893.022) (4.34,918.237) (4.35,944.164) (4.36,970.823) (4.37,998.234) (4.38,1026.420) (4.39,1055.402)
          (4.4,1085.202) (4.41,1115.843) (4.42,1147.350) (4.43,1179.746) (4.44,1213.057) (4.45,1247.308) (4.46,1282.527) (4.47,1318.740) (4.48,1355.975) (4.49,1394.262)
          (4.5,1433.630) (4.51,1474.109) (4.52,1515.732) (4.53,1558.529) (4.54,1602.535) (4.55,1647.784) (4.56,1694.310) (4.57,1742.150) (4.58,1791.341) (4.59,1841.921)
          (4.6,1893.928) (4.61,1947.405) (4.62,2002.391) (4.63,2058.930) (4.64,2117.065) (4.65,2176.841) (4.66,2238.306) (4.67,2301.506) (4.68,2366.490) (4.69,2433.310)
          (4.7,2502.016) (4.71,2572.662) (4.72,2645.302) (4.73,2719.994) (4.74,2796.795) (4.75,2875.764) (4.76,2956.963) (4.77,3040.455) (4.78,3126.304) (4.79,3214.577)
          (4.8,3305.343) (4.81,3398.672) (4.82,3494.635) (4.83,3593.308) (4.84,3694.768) (4.85,3799.092) (4.86,3906.362) (4.87,4016.660) (4.88,4130.073) (4.89,4246.688)
          (4.9,4366.596) (4.91,4489.890) (4.92,4616.665) (4.93,4747.019) (4.94,4881.054) (4.95,5018.874) (4.96,5160.585) (4.97,5306.298) (4.98,5456.124) (4.99,5610.181)
          (5.0,5768.588)
          };

      \end{semilogyaxis}
    \end{tikzpicture}
  \end{center}
  \caption{%
    Asymptotic behavior of running times of minimum weight odd cycle computations for all random high degree instances.
    Depicted are the relative errors in an estimation of running times $t_i$ as $C \cdot |E_i|^\alpha$ for a suitable constant $C \in \R$ for various exponents $\alpha \in [1,5]$, where $t_i$ is the measured running time in computation $i$, and $|E_i|$ is the corresponding number of hyperedges.
  }
  \label{fig_moc}
\end{figure}
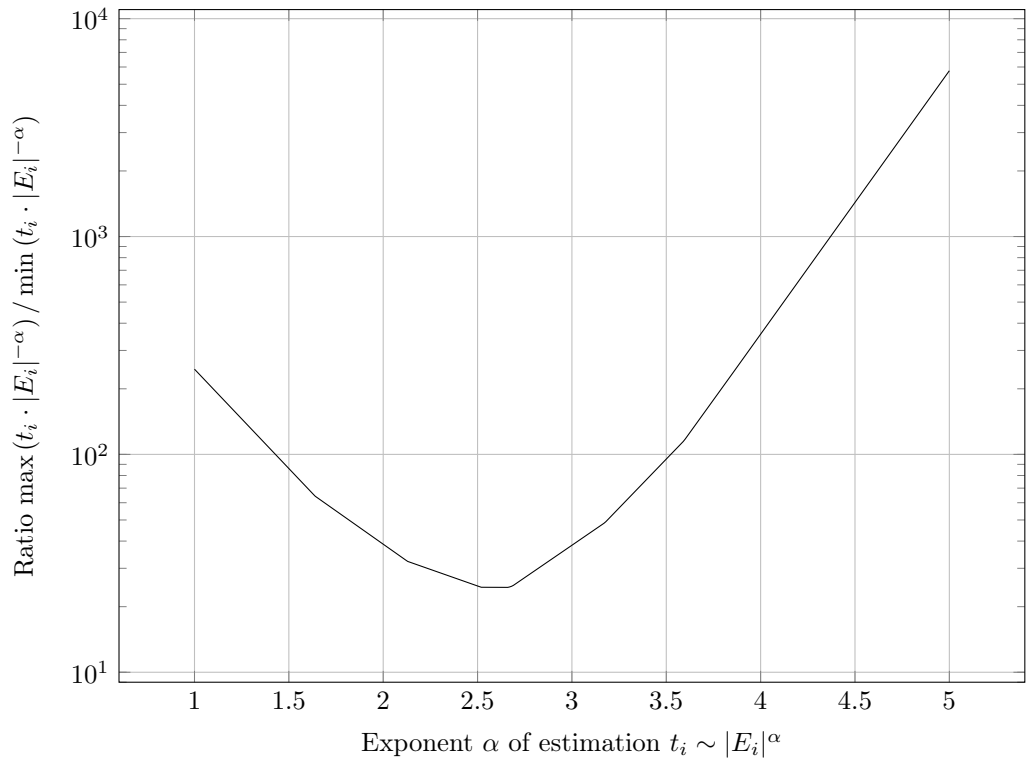

\section{Discussion and conclusions}
\label{sec_conclusions}

We would like to conclude our paper with a reflection of our computational results as well as open questions that could be investigated.

Our computations show that simple odd $\beta$-cycles sometimes close a significant amount of integrality gap, but also sometimes do not strengthen the flower relaxation at all.
In any case, the separation cost is currently prohibitively high.
We believe that this cost can be reduced significantly by pursuing further research.
First, several tricks to speed up the generation of odd cycle inequalities (for maximum cut and stable set) are known.
Examples are the computation of a minimum spanning tree in the auxiliary graph and considering only the fundamental cycles with respect to this tree~\cite{BarahonaGJR88}, the consideration of only a subset of the nodes, or the restriction to primal separation~\cite{LetchfordL03}.
It is worth noting that in the quadratic case, i.e., $|e| = 2$ holds for all $e \in E$, our auxiliary graph has $\orderTheta(|V|+|E|)$ nodes while the auxiliary graph used for maximum cut requires only $\orderTheta(|V|)$ nodes~\cite{BarahonaM86}.
Hence, we believe that the overall algorithm can be improved significantly.

Besides speeding up the computational cost for generating simple odd $\beta$-cycle inequalities, one can also try to strengthen them.
For instance, many of our building block inequalities are the sum of other valid inequalities, e.g., we had to add $x_v \geq 0$ for some nodes $v$ in order to make the coefficient even.
If a node appears in more than one building block (due to overlapping edges), then one can subtract $2x_v \geq 0$ and still obtain a valid (but stronger) cut.
Moreover, one can try to generate strongest-possible simple odd $\beta$-cycle inequalities by replacing those that are already redundant for the multilinear polytope for the support hypergraph.
This was done for the maximum cut problem in~\cite{JuengerM19} by checking cycles for being chordless.
Generally, cycles of short combinatorial length are advantageous for this sake, but also because the resulting inequalities tend to be sparser.
Finally, there is a difference between simple and (non-simple) odd $\beta$-cycle inequalities, and we believe that this might lead to stronger cuts with only little computational cost on top of the separation.

The next research direction has a more theoretical flavor.
The LP relaxations defined by odd-cycle inequalities~\cite{BarahonaM86} for the cut polytope and the affinely isomorphic correlation polytope (see~\cite{DezaL09}) have the following property:
when maximizing a specific objective vector, then one can remove a subset of the odd-cycle inequalities upfront without changing the optimum.
More precisely, the removal is based only on the sign pattern of the objective vector (see Theorem~2 in~\cite{Michini18}).
Since the simple odd $\beta$-cycle inequalities can be seen as an extension of the odd cycle inequalities for the cut polytope, the research question is whether a similar property can be proven for simple odd $\beta$-cycle inequalities.

A final research problem is that of finding alternative linearizations.
Given our insight about the impact of a large number of edges, we propose to investigate means of linearizing a given polynomial with fewer auxiliary variables.
A number of natural questions arise, e.g., what is the smallest number of additional binary variables that one needs to add such that a given polynomial is linear in the original and the auxiliary variables.

%%%\begin{acknowledgements}
\paragraph{Acknowledgements.}
  The authors are grateful to the anonymous reviewers whose comments led to improvements of this manuscript, in particular for the suggestion of analyzing the running times of odd cycle search.
  A.~Del Pia is partially funded by AFOSR grant FA9550-23-1-0433. Any opinions, findings, and conclusions or recommendations expressed in this material are those of the authors and do not necessarily reflect the views of the Air Force Office of Scientific Research.
  M.~Walter acknowledges funding support from the Dutch Research Council (NWO) on grant number OCENW.M20.151.
%%%\end{acknowledgements}

\bibliographystyle{plainurl}
\bibliography{odd-beta-cycle-separation}

\end{document}